\newcommand{\dd}{\mathrm{d}}
\newcommand{\ee}{\mathrm{e}}
\newcommand{\ii}{\mathrm{i}}
\newcommand{\tr}{\mathrm{tr}}
\newcommand{\re}{\mathrm{Re}}
\newcommand{\mD}{\mathcal{D}}
\newcommand{\mL}{\mathcal{L}}
\newcommand{\mO}{O}
\newcommand{\mLeff}{\mathcal{L}_\mathrm{eff}}
\newcommand{\mLeffadj}{\mathcal{L}_\mathrm{eff}^*}
\newcommand{\mV}{\mathcal{V}}
\newcommand{\mG}{\mathcal{G}}
\newcommand{\hsb}{H_{SB}}
\newcommand{\mH}{\mathcal{H}}
\newcommand{\Hfrak}{\mathfrak{H}}
\newcommand{\mSz}{\mathcal{S}_z}
\newcommand{\Hsys}{H^S}
\newcommand{\lamb}{\Lambda^\mathrm{LS}}
\newcommand{\mm}{K}
\newcommand{\eigenLeff}{\eta}
\newcommand{\Heff}{H_\mathrm{eff}}
\newcommand{\heff}{h_\mathrm{eff}}
\newcommand{\Eeff}{E^\mathrm{eff}}
\newcommand{\rhofg}{\rho_\mathrm{FG}}
\newcommand{\rhoness}{\rho_\mathrm{ness}}
\newcommand{\mA}{\mathsf{A}}
\newcommand{\mU}{\mathcal{U}}
\newcommand{\tW}{\widetilde{W}}
\newcommand{\hc}{\mathrm{H.c.}}
\newcommand{\aniso}{g}
\newcommand{\Nz}{N_z}
\newcommand{\Nxy}{N_{xy}}
\newtheorem{thm}{Theorem}
\newtheorem{lem}{Lemma}
\newenvironment{proof}[1][Proof]{\noindent\textbf{#1.} }{\ \rule{0.5em}{0.5em}}
\newcommand{\trb}{\text{Tr}_B}
\newcommand{\hl}[1]{\textcolor{black}{#1}}
\newcommand{\hlr}[1]{\textcolor{black}{#1}}
\newcommand{\tni}[1]{\textcolor{black}{#1}}
\newcommand{\tnii}[1]{\textcolor{black}{#1}}
\begin{document}

\begin{center}{\Large \textbf{
Nonequilibrium steady states in the Floquet-Lindblad systems:
van Vleck's high-frequency expansion approach
}}\end{center}

\begin{center}
Tatsuhiko N. Ikeda\textsuperscript{1*},
Koki Chinzei\textsuperscript{1},
Masahiro Sato\textsuperscript{2}
\end{center}

\begin{center}
{\bf 1} Institute for Solid State Physics, University of Tokyo, Kashiwa, Chiba 277-8581, Japan
\\
{\bf 2} Department of Physics, Ibaraki University, Mito, Ibaraki 310-8512, Japan
\\
* tikeda@issp.u-tokyo.ac.jp
\end{center}

\begin{center}
\today
\end{center}

\section*{Abstract}
{\bf
Nonequilibrium steady states (NESSs) in periodically driven dissipative quantum systems are vital in Floquet engineering. 
\tnii{We develop a general theory for high-frequency drives with Lindblad-type dissipation to characterize and analyze NESSs based on the high-frequency (HF) expansion with linear algebraic numerics and} without numerically solving the time evolution.
This theory shows that NESSs can deviate from the Floquet-Gibbs state depending on the dissipation type.
We show the validity and usefulness of the HF-expansion approach in concrete models for a diamond nitrogen-vacancy (NV) center, a kicked open XY spin chain with topological phase transition under boundary dissipation, and the Heisenberg spin chain in a circularly-polarized magnetic field under bulk dissipation.
In particular, for the isotropic Heisenberg chain, we propose the dissipation-assisted terahertz (THz) inverse Faraday effect in quantum magnets.
Our theoretical framework applies to various time-periodic Lindblad equations that are currently under active research.
}

\vspace{10pt}
\noindent\rule{\textwidth}{1pt}
\tableofcontents\thispagestyle{fancy}
\noindent\rule{\textwidth}{1pt}
\vspace{10pt}

\section{Introduction}\label{sec:intro}

Periodically driven quantum systems have seen a resurgence of interest motivated by laser technology advancement and theoretical developments~\cite{Holthaus2015,Bukov2015,Eckardt2017,Oka2019,Halder2021}.
Such systems are theoretically described by the Floquet theorem~\cite{Shirley1965,Sambe1973}, which enables us to study nonequilibrium states of matter systematically ~\cite{Eckardt2015, Mikami2016}.
An important application is the Floquet engineering, i.e., artificially creating useful functionalities of physical systems by designing an appropriate driving protocol.
Floquet engineerings for various systems have been proposed theoretically and realized experimentally: the dynamical localization~\cite{Dunlap1986,Klappauf1998,Kayanuma2008,Tsuji2011,Ishikawa2014}, Floquet topological states~\cite{Oka2009, Kitagawa2010,Wang2013,Jotzu2014,McIver2020}, Floquet time crystals~\cite{Else2016,Choi2017,Zhang2017}, the inverse Faraday effect~\cite{Pitaevskii1961,Pershan1966,Kirilyuk2010}, etc.

Despite those extensive studies,
most of their theoretical analyses neglect the dissipation effect
and focus mainly on well-controlled artificial systems and ultraclean materials.
Considering dissipation is important in two ways.
First, most physical systems such as generic materials contact their environment, and dissipation is not negligible.
Second, without dissipation, the Floquet-engineered states are eventually broken by injected heat, and the system becomes the featureless infinite-temperature state~\cite{Lazarides2014,DAlessio2014,Kim2014}.
Thus, in isolated systems, the Floquet engineering is usually considered in the Floquet prethermalization regime of finite time window~\cite{Abanin2015,Mori2016,Kuwahara2016,Abanin2017,Peng2021}.
In dissipative systems, nonequilibrium steady states (NESSs), where the energy injection by the periodic driving is balanced with the energy dissipation, are established and give us the opportunity to the long-lived Floquet engineering.

There have been several approaches to analyze NESSs in periodically driven dissipative systems.
The nonequilibrium Green function method is a successful approach and particularly useful in electron systems coupled to several leads~\cite{Kohler2004,Stefanucci2008,Tsuji2009,Murakami2017a}.
Several systems have been analyzed with this approach, such as the Floquet topological insulators~\cite{Dehghani2014}, strongly-correlated electrons~\cite{Aoki2014}, and so on.
Quantum master equation~\cite{BreuerBook,Hone1997} is another useful formulation to analyze various periodically-driven dissipative systems and has been applied to periodic thermodynamics~\cite{Kohn2001,Brandner2016,Schmidt2019}, Bose-Einstein condensation~\cite{Vorberg2013}, Floquet-band occupation~\cite{Iadecola2015}, and so on.
Fermi's golden rule is a similar technique to define the NESS~\cite{Seetharam2015}.

Among quantum master equations, the Floquet-Lindblad equation (FLE) is a powerful and flexible description for periodically-driven systems subject to Markovian dissipation~\cite{Kohler1997, Breuer2000,Dai2016,Hartmann2017}.
The FLE is an extension of the Lindblad (or Gorini–Kossakowski–Sudarshan–Lindblad~\cite{Lindblad1976,Gorini1976}) for time-periodic systems and has nice properties such as the complete positivity and trance-preserving condition.
However, it is generally hard to solve each FLE and find the NESS except for some special cases~\cite{Slichter1996,Grifoni1998,Scopa2019}.
Recently, based on the high-frequency expansion method, a part of the present authors~\cite{Ikeda2020} have analytically solved NESS in a phenomenological FLE.
However, there are two major remaining issues to be solved.
First, that work assumed that the dissipation is time-independent and satisfies the detailed-balance condition, but dissipation without these properties has also attracted attention in the literature.
Second, the NESS was solved at the leading-order approximation for $1/\omega$ ($\omega$ is the driving frequency), and the systematic extension to higher orders has not been studied.

In this paper, we extensively study the NESS in the FLE for high-frequency drives.
For this purpose, we formulate how to characterize and obtain the NESS using the high-frequency (HF) expansion for the Liouvillian (Lindbladian) from the van Vleck point of view.
Compared to the Floquet--Magnus viewpoint~\cite{Dai2016}, the van Vleck approach involves fewer terms and enables us to study NESSs, including the micromotion.
Following the general formulation, we apply this method to example models: an effective three-level model for a nitrogen-vacancy (NV) center in diamond, a periodically driven topological XY spin chain, and the Heisenberg spin chain under a circularly polarized magnetic field.
The HF-expansion approach enables us to analyze the NESS systematically \tnii{by numerically obtaining an eigenstate of the effective Liouvillian} without resorting to the direct numerical integration of the FLE. 
\tnii{From our formalism, we also reveal the condition that the NESS approaches or deviates from the Floquet-Gibbs state.}

The rest of this paper is organized as follows.
In Sec.~\ref{sec:FLE}, we introduce the FLE and formulate our problem of finding the NESSs that we address in this paper.
In Sec.~\ref{sec:Lexpansion}, we develop the van Vleck HF expansion for Liouvillians. Importantly, we generally prove that a zero-mode exists for the effective Liouvillian at each order of the HF expansion.
The zero mode is shown to correspond to the NESS in Sec.~\ref{sec:NESS}, where general aspects of the NESS are discussed within the FLE.
In Sec.~\ref{sec:tindep}, we apply the HF expansion methods to time-independent dissipators, including two models.
We first discuss an open XY spin chain under a periodic drive and boundary dissipation and analyze its topological phase diagram together with its stability against dissipation.
We second analyze an effective model for the NV center in diamonds, demonstrating how the HF-expansion method gives the NESS accurately.
In Sec.~\ref{sec:tdep1}, we apply the HF expansion to a class of time-dependent dissipators derived from the system-bath coupling and the rotating-wave approximation (RWA).
We derive such FLEs with a slight generalization of previous studies in that one allows the Floquet quasienergies to be degenerate and discuss the conditions for the NESS being approximated by the Floquet-Gibbs state.
We then apply these general discussions to two concrete example models: the effective model for the NV center and the Heisenberg spin chain under a circularly polarized magnetic field.
For the NV center model, unlike with the time-independent dissipator, we show that the relationship between the driving frequency and the bath spectral cutoff brings about a nontrivial effect on the NESS.
Namely, when the photon energy $\hbar\omega$ is below the cutoff, some photon-exchange processes are active, and the NESS cannot be described by the Floquet-Gibbs state.
Even in this case, the HF expansion is valid and enables us to obtain the NESS with the systematic improvement of accuracy.
For the Heisenberg chain, we first provide a microscopic theory for the dissipation-assisted inverse Faraday effect in magnetic insulators, i.e., the emergence of time-averaged magnetization due to the circularly polarized ac magnetic field.
Unlike related studies, the magnetization is activated not by magnetic anisotropy of the Hamiltonian (the Heisenberg spin chain is isotropic) but by dissipation, i.e., the system-bath coupling.
Finally, in Sec.~\ref{sec:conclusion}, we summarize our work and present some outlooks.

\section{Floquet--Lindblad Equation (FLE)}\label{sec:FLE}
In this section, we introduce the most general form of the Floquet--Lindblad equation (FLE)
that we study in this work and formulate our problem.

Let us consider the quantum master equation
\begin{equation}\label{eq:mastereq}
	\frac{\dd}{\dd t}\rho(t) = \mL_t[\rho(t)].
\end{equation}
Here, $\rho(t)$ is the density operator describing the quantum state of the system of interest,
and $\mL_t$ is the Liouvillian superoperator generating dynamics.
We use the term ``Liouvillian'' for the right-hand side of Eq.~\eqref{eq:mastereq} 
by analogy with the Liouville equation in classical mechanics.
Throughout this paper, we use calligraphic symbols for superoperators.
We suppose that the Liouvillian is periodic in time: $\mL_{t+T}=\mL_t$,
where $T$ is the period.
In the following, we shall use the corresponding angular frequency $\omega\equiv2\pi/T$.

We suppose that the Liouvillian is of Lindblad (or GKSL~\cite{Lindblad1976,Gorini1976}) form.
Namely,
the Liouvillian consists of the Hamiltonian and dissipator parts:
\begin{align}
	\mL_t(\rho) = \mH_t(\rho) +\mD_t(\rho),
\end{align}
with
\begin{align}
	\mH_t (\rho) \equiv -\ii[H(t),\rho].
\end{align}
Here,
$H(t)=H(t+T)$ is the time-dependent Hamiltonian
(involving the Lamb shift contribution~\cite{BreuerBook}) in general).
The dissipator is described by the jump operators $L_\alpha(t)$ as
\begin{align}
	\mD_t(\rho) = \sum_{\alpha} \left[ L_\alpha(t)\rho L_\alpha^\dag(t)-\frac{1}{2}\left\{ L_\alpha^\dag(t)L_\alpha(t),\rho \right\}  \right].
\end{align}
We assume that $\mD_t$ is also periodic: $\mD_{t+T}=\mD_t$.
Under this assumption, we further assume that each $L_k(t)$ is also periodic\footnote{This assumption can be slightly relaxed as follows: For each $\alpha$, there exists $\theta_\alpha(t)\in\mathbb{R}$ such that $L_\alpha(t)=e^{i\theta_\alpha(t)}L^P_\alpha(t)$ with periodic part $L^P_\alpha(t+T)=L^P_\alpha(t)$.
In such cases, we can replace $L_\alpha(t)$ by $L^P_\alpha(t)$ without changing $\mD_t$ since the nonperiodic phase factors $e^{i\theta_\alpha(t)}$ cancel between $L_\alpha(t)$ and $L_\alpha^\dag(t)$.
Thus, under this assumption, we can assume $L_\alpha(t)$ are periodic without loss of generality.}.
In the following, we call $\mL_{t}$ as the Lindbladian
when we emphasize that it is a Liouvillian of Lindblad form.
We call the quantum master equation~\eqref{eq:mastereq} generated by
a time-periodic Lindbladian the FLE.

The periodicity in time of the FLE
enables systematic analysis by Fourier expansions.
We Fourier-expand
the Hamiltonian and dissipator superoperators as
\begin{align}
\mH_t = \sum_m \mH_m \ee^{-\ii m\omega t};
\quad \mD_t = \sum_m \mD_m \ee^{-\ii m\omega t}.
\end{align}
Here, the Fourier components $\mH_k$ are simply given by
\begin{align}\label{eq:defHm}
	\mH_m(\rho) = -\ii[H_m,\rho],
\end{align}
where $H_m$ are defined in $H(t)=\sum_m H_m \ee^{-\ii m\omega t}$.
In contrast, $\mD_m$ are a little more complicated
because it is not linear in $L_\alpha(t)$ and $L^\dag_\alpha(t)$.
Fourier expanding them as
\begin{align}
	L_\alpha(t) = \sum_m L_{\alpha,m}\ee^{-\ii m\omega t};
	\quad
	L_\alpha^\dag(t) = \sum_m L^\dag_{\alpha,m}\ee^{+\ii m\omega t},
\end{align}
we have
\begin{align}\label{eq:defDm}
	\mD_m(\rho) =
		\sum_{\alpha,n} \left[ L_{\alpha,m-n} \rho L_{\alpha,n}^\dag -\frac{1}{2}\left\{ L^\dag_{\alpha,n}L_{\alpha,m-n},\rho \right\} \right].
\end{align}
We note that $L_\alpha(t)$ is not necessarily Hermitian,
and $L_{\alpha,-m}\neq L_{\alpha,m}^\dag$ in general.

The formal solution of the FLE~\eqref{eq:mastereq} is represented by the propagator $\mV(t,t')$ as
\begin{align}\label{eq:propagation}
	\rho(t) = \mV(t,t')\rho(t')
\end{align}
with
\begin{align}
	\mV(t,t') = \exp_+\left( \int_{t'}^t \mL_s \dd s \right),
\end{align}
where $\exp_+$ denotes the time-ordered exponential.
One merit of assuming that $\mL_t$ is a Lindbladian at each time $t$ is that the propagator $\mV(t,t')$ is guaranteed to be a completely positive and trace preserving (CPTP) map~\cite{Lindblad1976,Gorini1976}.
Thus, $\rho(t)$ is qualified as a density operator at any $t$ during time evolution,
which is not guaranteed in other master equations such as the Redfield equation~\cite{Redfield1957,Redfield1965}.

Finally, we discuss how the NESS is characterized in the FLE,
following Ref.~\cite{Chinzei2020b}.
To this end, we take an initial time $t=0$ and consider a long-time evolution to $t$.
To utilize the periodicity $\mL_{t+T}=\mL_t$ and hence $\mV(t+T,t'+T)=\mV(t,t')$, it is useful to denote $t=t_0+\ell T$, where $0\le t_0 <T$ and $\ell\in\mathbb{Z}$.
Together with the property $\mV(t,t')=\mV(t,t'')\mV(t'',t')$ for any $t''$,
we have
\begin{align}
\rho(t)=\mV(t_0,0)\mV_F^\ell \rho(0),
\end{align}
where $\mV_F\equiv\mV(T,0)$ is the one-cycle propagator.
Being a CPTP map, $\mV_F$ is known to have an eigenvalue 1,
and we let the corresponding eigenstate be $\eta$.
Supposing, for simplicity, that all the other eigenvalues have absolute values less than 1 (see Ref.~\cite{Menczel2019} for a sufficient condition for this),
we obtain the long-time behavior $\rho(t)\to \mV(t_0,0)\eta$ as $t\to\infty$.
This is the NESS solution periodic in time,
and $\mV(t_0,0)$ is called the micromotion within a period.
Although the NESS is thus obtained from the one-cycle and micromotion parts of the propagator, it is difficult to obtain them analytically and a hard task to do them numerically.

\section{van Vleck high-frequency expansion of Liouvillian}\label{sec:Lexpansion}
The high-frequency (HF) expansions offer systematic ways for looking
into the propagator analytically when $\omega$ is large enough.
As developed for isolated systems,
there are, at least, three versions:
the Floquet--Magnus~\cite{Mananga2011}, van Vleck~\cite{Eckardt2015}, and Brillouin--Wigner~\cite{Mikami2016}.
These three look different but are related to each other by appropriate transformations~\cite{Eckardt2015,Mikami2016}.
The Floquet--Magnus approach has been generalized to the FLE in the literature~\cite{Haddadfarshi2015,Dai2016}.
In this section, we generalize the van Vleck approach to the FLE,
which will be useful for analyzing the NESS in the following.

Before discussing the FLE,
we briefly review the van Vleck HF expansion in isolated systems.
In isolated systems, one tries to solve the Schr\"{o}dinger equation
$\frac{\dd}{\dd t}\ket{\psi(t)} = -\ii H(t)\ket{\psi(t)}$
for a periodic Hamiltonian $H(t)$.
The formal solution is given by $\ket{\psi(t)}=U(t,t')\ket{\psi(t')}$
with $U(t,t')$ being the unitary propagator from $t'$ to $t$.
The HF expansion is derived from the following decomposed form of $U(t,t')$~\cite{Eckardt2015,Mikami2016}.
\begin{align}\label{eq:ansatziso}
	U(t,t') = \ee^{-\ii\mm(t)}\ee^{-\ii\Heff(t-t')}\ee^{\ii\mm(t')},
\end{align}
where $\Heff$ is a time-independent effective Hamiltonian,
$\mm(t)=\mm(t+T)$ is a periodic Hermitian operator,
and $\ee^{\ii\mm(t)}$ is called the micromotion operator.
With the series expansion,
$\mm(t)=\sum_{k=1}^{\infty} \mm^{(k)}(t)$
and $\Heff=\sum_{k=0}^{\infty} \Heff^{(k)}$,
we obtain each of $\mm^{(k)}$ and $\Heff^{(k)}$ order by order,
where $\mm^{(k)}$ and $\Heff^{(k)}$ are $O(\omega^{-k})$.
First few terms are given in Appendix~\ref{app:vVisolated}.

Now we generalize the van Vleck HF expansion to the FLE in open systems.
To this end, we invoke the formal analogy between the FLE~\eqref{eq:mastereq} and the Schr\"{o}dinger equation.
Although there is a difference between the operator $-\ii H(t)$
and the superoperator $\mL_t$, these are both linear operators.
Thus, the derivation of the HF expansion goes in parallel
by the formal substitution $H(t) \to \ii\mL_t$.
More concretely,
we put the following ansatz:
\begin{equation}\label{eq:ansatz}
	\mV(t,t') = \ee^{\mG_t}\ee^{\mLeff (t-t')}\ee^{-\mG_{t'}},
\end{equation}
where $\mLeff$ and $\mG_t$ $(=\mG_{t+T})$ are superoperators corresponding to
the effective Liouvillian and the micromotion.
Introducing the series expansion
\begin{align}\label{eq:expandLeff}
	\mLeff = \sum_{k=0}^\infty \mLeff^{(k)};
	\qquad \mG_t = \sum_{k=1}^\infty \mG_t^{(k)},
\end{align}
with $\mLeff^{(k)}=O(\omega^{-k})$ and $\mG_t^{(k)}=O(\omega^{-k})$,
we obtain
\begin{align}
\ii\mLeff^{(0)}&=\ii\mL_{0} \label{eq:Leff0}\\
\ii\mLeff^{(1)}&=\sum_{m \neq 0} \frac{\left[\ii\mL_{-m}, \ii\mL_{m}\right]}{2 m \omega}\label{eq:Leff1} \\
\ii\mLeff^{(2)}&=\sum_{m \neq 0} \frac{\left[\left[\ii\mL_{-m}, \ii\mL_{0}\right], \ii\mL_{m}\right]}{2 m^{2} \omega^{2}}+\sum_{m \neq 0} \sum_{n \neq 0, m} \frac{\left[\left[\ii\mL_{-m}, \ii\mL_{m-n}\right], \ii\mL_{n}\right]}{3 m n \omega^{2}}\label{eq:Leff2}
\end{align}
and
\begin{align}
-\mG^{(1)}_t&=-\sum_{m \neq 0} \frac{\ii\mL_{m}}{m \omega} e^{-i m \omega t} \label{eq:mG1} \\
-\mG^{(2)}_t&=\sum_{m \neq 0} \sum_{n \neq 0, m} \frac{\left[\ii\mL_{n}, \ii\mL_{m-n}\right]}{2 m n \omega^{2}} e^{-i m \omega t}+\sum_{m \neq 0} \frac{\left[\ii\mL_{m}, \ii\mL_{0}\right]}{m^{2} \omega^{2}} e^{-i m \omega t},\label{eq:mG2}
\end{align}
up to the second order of $1/\omega$.
Here,
$\mL_m$ denotes the Fourier components of the Liouvillian:
\begin{align}
	\mL_t = \sum_m \mL_m \ee^{-\ii m\omega t},
\end{align}
and the results have been obtained
by the replacements
\begin{align}
	H_m &\to \ii \mL_m,\\
	\Heff &\to \ii \mLeff,\\
	\mm(t) &\to \ii\mG_t.
\end{align}
in the HF expansion for isolated systems (see Appendix~\ref{app:vVisolated}).
One can obtain higher-order terms by continuing the procedure systematically.

We remark the relation to the Floquet--Magnus (FM) approach in the literature~\cite{Haddadfarshi2015,Dai2016}.
In this approach, we take a reference time $t=t_0$ and define the one-cycle evolution superoperator $\mV_F(t_0) \equiv \mV(t_0+T,t_0)$.
The Floquet Liouvillian $\mL_F(t_0)$ is defined, in the FM approach, by
\begin{align}\label{eq:onecycF}
	\mV_F(t_0) = \ee^{T\mL_F(t_0)}.
\end{align}
On the other hand, in the van Vleck HF expansion~[Eq.~\eqref{eq:ansatz}], this one-cycle propagator is represented as $\mV_F(t_0) = \ee^{\mG_{t_0}} \ee^{T\mLeff}\ee^{-\mG_{t_0}}$.
By equating these, we have
\begin{align}\label{eq:FMandvV}
	\ee^{T\mL_F(t_0)} = \ee^{\mG_{t_0}} \ee^{T\mLeff}\ee^{-\mG_{t_0}},
\end{align}
where we have used the periodicity $\mG_{t_0+T}=\mG_{t_0}$.
Thus, the effective Liouvillians $\mL_F(t_0)$ and $\mLeff$ in the different approaches are related to each other by the similarity transformation $\ee^{\mG_{t_0}}$.
The FM expansion is the series expansion for $\mL_F(t_0)$ with $1/\omega$, and we have $\mL_F(t_0)=\mLeff=\mL_0$ at the zeroth order of $1/\omega$.
At higher orders, the $\mL_F(t_0)$ involves more terms than $\mLeff$ and the dependence on a reference time $t_0$ similarly to the case of isolated systems~\cite{Mikami2016}.
We note that $\mL_F(t_0)$ contains the micromotion information.

Despite the formal analogy in the derivations of the van Vleck HF expansion
in open and isolated systems,
we need to be careful about the properties of $\mG_t$ and $\mLeff$.
First, the micromotion superoperator $\ee^{\mG_t}$ is not unitary in general unlike $\ee^{\ii\mm(t)}$.
Second, $\mLeff$ may not be of Lindblad form.
As shown in Refs.~\cite{Wolf2008,Schnell2020},
even if the one-cycle evolution $\ee^{T\mL_F(t_0)}$ is a CPTP map,
its logarithm $T\mL_F(t_0)$ may not be a Lindbladian.
One representative situation 
is when the one-cycle evolution $\ee^{T\mL_F(t_0)}$ has a negative real eigenvalue~\cite{Wolf2008,Schnell2020}.
Noting that Eq.~\eqref{eq:FMandvV} implies that $\ee^{T\mL_F(t_0)}$ and $\ee^{T\mLeff}$
have the same eigenvalues\footnote{Similarity transformations do not change eigenvalues~\cite{HornBook}. Equation~\eqref{eq:FMandvV} means that $\ee^{T\mL_F(t_0)}$ and $\ee^{T\mLeff}$ are connected by a similarity transformation $\ee^{\mG_{t_0}}$.}, it also happens that $\mLeff$ is not a Lindbladian.
In addition, Mizuta et al. have recently shown that the FM expansion for $\mL_F(t_0)$ may not be a Lindbladian~\cite{Mizuta2021}.
In the following, we will show that, in the van Vleck approach, $\mLeff$ is of Lindblad form at the first order of $1/\omega$ for physically relevant models.
\tnii{Another recent work~\cite{Schnell2021} has also pointed out that $\mLeff$ can be of Lindblad form while $\mL_F(t_0)$ is not.}

Nevertheless, $\mLeff$ has a good property for analyzing the dynamics and NESSs: $\mLeff$ has at least one eigenvalue equal to zero at every order of the HF expansion.
To show this, we first prove the following lemma.
\begin{lem}
	At each order of the HF expansion, we have
\begin{align}\label{eq:traceless}
	\tr[\mLeff(A)]=0 \quad \text{for any operator}\ A.
\end{align}
\end{lem}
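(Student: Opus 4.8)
The plan is to reduce the claim to one elementary fact about the Fourier components $\mL_m$ of the Lindbladian and then propagate it through the purely algebraic structure of the van Vleck expansion. Since $\mL_t$ is of Lindblad form at each time $t$, it is trace preserving: $\tr[\mL_t(A)]=0$ for every operator $A$ and every $t$. Multiplying by $\ee^{\ii m\omega t}$ and averaging over one period gives $\tr[\mL_m(A)]=0$ for all $m\in\mathbb{Z}$ (including $m=0$) and all $A$; equivalently, writing $\braa{I}$ for the left vector implementing the trace, $\braa{I}A\rangle\!\rangle=\tr[A]$, one has $\braa{I}\mL_m=0$ for every $m$.

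Next I would observe that every term of $\mLeff$ is a polynomial in the $\mL_m$. From Eqs.~\eqref{eq:Leff0}--\eqref{eq:Leff2} and the recursion generating the higher orders, $\mLeff^{(k)}$ is a finite linear combination of products $(\ii\mL_{m_1})(\ii\mL_{m_2})\cdots(\ii\mL_{m_n})$ of Fourier components, the only other operations being addition and multiplication by rationals times powers of $\omega^{-1}$; formally this is an induction on the recursion, since each new $\mLeff^{(k)}$ and $\mG_t^{(k)}$ is assembled from nested commutators of the $\ii\mL_m$ with objects already shown to be such polynomials. For any product $P=\mathcal{X}_1\mathcal{X}_2\cdots\mathcal{X}_n$ with $\mathcal{X}_j\in\{\ii\mL_m\}$ we then get $\tr[P(A)]=\tr[\mathcal{X}_1(\mathcal{X}_2\cdots\mathcal{X}_n(A))]=0$, because the leftmost factor alone annihilates the trace while the remaining factors are irrelevant. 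Summing over the terms of $\mLeff^{(k)}$ gives $\tr[\mLeff^{(k)}(A)]=0$ for each $k$, i.e.\ Eq.~\eqref{eq:traceless} holds order by order.

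The step that is least automatic, and the one I would spell out with care, is the second: one must check that nothing outside the algebra generated by the $\mL_m$ — in particular no genuinely non-trace-preserving piece — can enter $\mLeff^{(k)}$, which the induction on the van Vleck recursion settles. As a recursion-free alternative, one can instead invoke the similarity relation~\eqref{eq:FMandvV}: $\mG_{t_0}$ is itself a polynomial in the $\mL_m$, so $\braa{I}\mG_{t_0}=0$ and hence $\braa{I}\ee^{\mG_{t_0}}=\braa{I}$; combined with $\braa{I}\ee^{T\mL_F(t_0)}=\braa{I}$ (trace preservation of the CPTP one-cycle map $\mV_F(t_0)$) this gives $\braa{I}\ee^{T\mLeff}=\braa{I}$, and expanding that identity order by order in $1/\omega$ forces $\braa{I}\mLeff^{(k)}=0$ for all $k$.
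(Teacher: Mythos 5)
Your proposal is correct and follows essentially the same route as the paper: reduce the claim to the elementary fact $\tr[\mL_m(A)]=0$ for every Fourier component, then note that each $\mLeff^{(k)}$ unravels into a linear combination of products $\mL_{m_1}\cdots\mL_{m_k}$, whose leftmost factor annihilates the trace. The only cosmetic difference is that you obtain the elementary fact by Fourier-averaging the trace preservation of $\mL_t$, whereas the paper verifies it directly on $\mH_m$ and $\mD_m$ via the cyclic property of the trace; your additional similarity-transformation argument is a valid bonus but not needed.
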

\begin{proof}
\tnii{
We begin by proving
\begin{align}
	\tr[\mL_m(A)]=\tr[\mH_m(A)+\mD_m(A)]=0 \qquad \forall A\ \text{and}\ \forall m.\label{eq:lem1eq}
\end{align}
To show Eq.~\eqref{eq:lem1eq}, we first notice Eq.~\eqref{eq:defHm} implies $\tr[\mH_m(A)]=-\ii\,\tr([H_m,A])=0$, where we have used the cyclic property of the trace $\tr(H_m A)=\tr(AH_m)$.
We second notice Eq.~\eqref{eq:defDm} leads to $\tr[\mD(A)]=\tr[ L_{\alpha,m-n} A L_{\alpha,n}^\dag -\frac{1}{2}\{ L^\dag_{\alpha,n}L_{\alpha,m-n},A \} ]=\tr[ L_{\alpha,n}^\dag L_{\alpha,m-n} A  - L^\dag_{\alpha,n}L_{\alpha,m-n}A  ]=0$, where we have again used the cyclic property.
These two steps prove Eq.~\eqref{eq:lem1eq}.
}

\tnii{
Now we prove Eq.~\eqref{eq:traceless} using Eq.~\eqref{eq:lem1eq} at each order $N$ of the HF expansion.
As $\mLeff=\sum_{k=0}^N \mLeff^{(k)}$ and hence $\tr[\mLeff(A)]=\sum_{k=0}^N \tr[\mLeff^{(k)}(A)]$ at an $N$-th order, it is sufficient to prove
\begin{align}
	\tr[\mLeff^{(k)}(A)] = 0 \qquad \forall A\ \text{and}\ \forall k.\label{eq:lem1eq2}
\end{align}
}

\tnii{
For $k=0$, Eq.~\eqref{eq:lem1eq2} is obtained from $\tr[\mLeff^{(0)}(A)]=\tr[\mL_0(A)]=0$, where the last equality follows from Eq.~\eqref{eq:lem1eq} for $m=0$.
Let us hence prove Eq.~\eqref{eq:lem1eq2} for $k\ge1$.
As represented in Eqs.~\eqref{eq:Leff1} and \eqref{eq:Leff2}, $\mLeff^{(m)}$ consists of nested commutators between $\mL_m$'s for different $m$'s.
Unraveling all the commutators, we have the following formal expression $\mLeff^{(k)}=\sum_{m_1,m_2,\dots,m_k}C_{m_1,m_2,\dots,m_k}\mL_{m_1}\mL_{m_2}\dots\mL_{m_k}$, where $C_{m_1,m_2,\dots,m_k}$ are complex numbers.
This expression gives
\begin{align}
\tr[\mLeff^{(k)}(A)]
	&=\sum_{m_1,m_2,\dots,m_k}C_{m_1,m_2,\dots,m_k}\tr[(\mL_{m_1}\mL_{m_2}\dots\mL_{m_k})(A)]\\
	&=\sum_{m_1,m_2,\dots,m_k}C_{m_1,m_2,\dots,m_k}\tr[\mL_{m_1}(A_{m_2,m_3,\dots,m_k})],
\end{align}
where we have defined operators $A_{m_2,m_3,\dots,m_k}\equiv(\mL_{m_2}\dots\mL_{m_k})(A)$.
Here, substituting $m=m_1$ and $A=A_{m_2,m_3,\dots,m_k}$ into Eq.~\eqref{eq:lem1eq}, we have $\tr[\mL_{m_1}(A_{m_2,m_3,\dots,m_k})]=0$ holds for every set $(m_1,m_2,\dots,m_k)$, meaning that $\tr[\mLeff^{(k)}(A)]=0$.
Thus, we have proved Eq.~\eqref{eq:lem1eq2} and hence Eq.~\eqref{eq:traceless}. 
}
\end{proof}

From this lemma, the following theorem holds true.
\begin{thm}
	$\mLeff$ has at least one eigenvalue equal to zero at every order of the HF expansion.
\end{thm}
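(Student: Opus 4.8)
The plan is to deduce the existence of a zero eigenvalue of $\mLeff$ directly from the Lemma by a duality (transpose) argument in the operator (Liouville) space. First I would fix a finite-dimensional Hilbert space of dimension $d$ so that operators $A$ form a $d^2$-dimensional vector space, and represent the superoperator $\mLeff$ (truncated at any given order $N$ of the HF expansion) as a $d^2\times d^2$ matrix $M$ acting on vectorized operators $\kett{A}$. The key observation is that the Lemma states $\tr[\mLeff(A)]=0$ for every $A$, which in vectorized language reads $\braa{I}M\kett{A}=0$ for all $\kett{A}$, where $\kett{I}$ is the vectorized identity operator. Hence $\braa{I}M=0$, i.e. $\kett{I}$ is a left eigenvector of $M$ with eigenvalue zero, equivalently $M^{\mathsf T}\kett{I^*}=0$ so $0$ is an eigenvalue of $M^{\mathsf T}$.

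The second step is to invoke the elementary linear-algebra fact that a square matrix and its transpose have the same spectrum (indeed the same characteristic polynomial), so $0$ being an eigenvalue of $M^{\mathsf T}$ forces $0$ to be an eigenvalue of $M$ itself. Therefore there exists a nonzero operator $\eta$ with $\mLeff(\eta)=0$, which is the claimed zero mode. I would note that this argument applies verbatim at every order $N$, since at each order $\mLeff=\sum_{k=0}^N\mLeff^{(k)}$ is again a finite linear superoperator to which the Lemma applies; there is no need to redo anything order by order beyond citing the Lemma. I would also remark that $\eta$ can be taken Hermitian (by splitting into Hermitian and anti-Hermitian parts, each of which is separately annihilated because $\mLeff$ maps Hermitian operators to Hermitian operators at the orders where it is of Lindblad form — though strictly this refinement is not needed for the bare statement and can be deferred).

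I do not anticipate a genuine obstacle: the only thing to be slightly careful about is phrasing the left-eigenvector/transpose step cleanly without dragging in the inner-product structure of operator space, and making explicit that finite dimensionality (so that the spectrum is nonempty and left/right eigenvalues coincide) is being used. If one wanted to avoid matrices entirely, an equivalent route is: the adjoint superoperator $\mLeff^*$ (with respect to the Hilbert--Schmidt inner product $\langle A,B\rangle=\tr(A^\dagger B)$) satisfies $\langle I,\mLeff(A)\rangle = \tr[\mLeff(A)]=0$ for all $A$, hence $\mLeff^*(I)=0$, so $I$ is a zero mode of $\mLeff^*$; since $\mLeff$ and $\mLeff^*$ have complex-conjugate spectra, $\mLeff$ has $0$ in its spectrum too. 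Either formulation is a two-line argument once the Lemma is in hand, so the bulk of the work is genuinely the Lemma that precedes it.
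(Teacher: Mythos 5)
Your proposal is correct and follows essentially the same route as the paper: the Lemma says the identity is a left zero eigenvector (equivalently $\mLeffadj(I)=0$ with respect to the Hilbert--Schmidt inner product), and since a superoperator and its transpose/adjoint share the spectrum (up to conjugation, which is immaterial for the eigenvalue $0$), $\mLeff$ itself has a zero eigenvalue at every order. Your vectorization phrasing and the adjoint-superoperator phrasing you give as an ``equivalent route'' are both just reformulations of the paper's argument, so there is nothing further to add.
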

\begin{proof}
Recall that $\langle X,Y\rangle\equiv\tr(X^\dag Y)$ serves as an inner product for two operators $X$ and $Y$.
We translate Eq.~\eqref{eq:traceless} as $\langle I, \mLeff(A)\rangle=0=\langle \mLeffadj(I),A\rangle$, where $I$ is the identity operator and $\mLeff^*$ is the adjoint superoperator\footnote{For clarity, we use $A^\dag$ for the adjoint operator for an operator $A$ and $\mL^*$ for that for a superoperator $\mL$.} for $\mLeff$.
Remembering that $A$ can be any, we have $\mLeffadj(I)=0$,
which means that $\mLeffadj$ has a zero eigenvalue (and $I$ is the left eigenvector of $\mLeff$).
Therefore, $\mLeff$ also has a zero eigenvalue
\end{proof}

As we will see below, the zero eigenvalue corresponds to the NESS.

In this work, we assume that $\mLeff$'s eigenvalues all have nonpositive real parts.
Since we have shown the existence of the zero eigenvalue, it means that the maximum of the eigenvalue real parts is zero.
This property is important to obtain sensible time evolution since if $\mLeff$ had an eigenvalue with a positive real part (and hence $\mV_F$ had that with absolute value greater than 1), the density operator would blow up in many cycles of evolution.
In physically relevant setups discussed in Secs.~\ref{sec:tindep} and \ref{sec:tdep1}, we will see that $\mLeff$ is of Lindblad form at $O(\omega^{-1})$ and indeed has the good property.
Thus, if the zero eigenvalue is not degenerate (\tni{as is the case in all example models in this paper}), the higher-order corrections do not break the property for large enough $\omega$.
If it is degenerate, there may appear small positive real parts at higher orders, and thus one must be careful about this possibility in general.
We leave the general proof of the nonpositivity as an open problem and assume this throughout this work.

\section{General aspects of NESS solution}\label{sec:NESS}
With the van Vleck HF expansion discussed in the previous section, we can obtain the NESS solution, including the micromotion, without explicitly calculating time evolution.
We discuss how it generally works in this section and will analyze physically relevant examples in the following sections.

Let us suppose that we have an arbitrary initial state $\rho_0$ at time $t=0$ and consider its long-time evolution.
Equations~\eqref{eq:propagation} and \eqref{eq:ansatz} give us the quantum state at time $t$ as
\begin{align}
	\rho(t) = \ee^{\mG_t}\ee^{t\mLeff}\ee^{-\mG_0}\rho_0.
\end{align}
In analyzing the long-time behavior, it is convenient to absorb the micromotion by the similarity transformation:
\begin{align}
\rho'(t)\equiv\ee^{-\mG_t}\rho(t),
\end{align}
which satisfies
\begin{align}
	\rho'(t) = \ee^{t\mLeff}\rho'(0).
\end{align}
In other words, $\rho'(t)$ obeys the time-independent master equation $\frac{\dd \rho'(t)}{\dd t} = \mLeff \rho'(t)$ with the initial condition $\rho'(t=0)=\ee^{-\mG_0}\rho_0$.

To investigate the long-time behavior of $\rho'(t)$, it is useful to introduce the eigenstates of $\mLeff$.
For simplicity, we assume that $\mLeff$ is diagonalizable and introduce the eigenstates of $\mLeff$ as follows:
\begin{align}\label{eq:eigenLeff}
	\mLeff(\eigenLeff_{\lambda,a}) = \lambda \eigenLeff_{\lambda,a}.
\end{align}
Here, $\eigenLeff_{\lambda,a}$ is the eigenstate of $\mLeff$ ($\eigenLeff_{\lambda,a}$ is a complex matrix acting on the Hilbert space) \tni{belonging to the complex eigenvalue $\lambda$,
where $a$ ($=1,2,\dots,N_\lambda$) labels the degenerate eigenstates with $N_\lambda$ being the degree of degeneracy.}
As discussed in the previous section, there exists $\lambda=0$ eigenvalue, and we assume that $\re\lambda\le0$ for all $\lambda$.

We note that the trace preserving nature of the effective evolution $\ee^{t\mLeff}$ imposes the following condition,
\begin{align}\label{eq:tr0}
\tr(\eigenLeff_{\lambda,a}) = 0 \quad (\text{for}\ \lambda\neq0).
\end{align}
Here the trace preserving nature means $\tr\left(\ee^{t\mLeff}A\right)=\tr(A)$ for any matrix $A$, which follows from $\tr[\mLeff(A)]=0$ at each order of the HF expansion.
To prove Eq.~\eqref{eq:tr0}, one considers $\rho'(t)$ starting from an arbitrary initial states $\rho'(0)$, which can be represented as a linear combination of $\eta_{\lambda,a}$'s (see Eqs.~\eqref{eq:initexp} and \eqref{eq:evolexp} below).
Since $\tr[\rho'(t)]$ is time-independent, one obtains that $\tr(\eigenLeff_{\lambda,a})$ must vanish except for the zero modes $\lambda=0$.
For the zero modes, $\tr(\eigenLeff_{0,a})$ can be nonvanishing, and, if so, we impose the following normalization
\begin{align}
	\tr(\eigenLeff_{0,a}) = 1.
\end{align}

By using the eigenstates $\eigenLeff_{\lambda,a}$, we solve the asymptotic behavior of $\rho'(t)$.
To this end, we express the initial state by these eigenstates as
\begin{align}\label{eq:initexp}
	\rho'(t=0)=\ee^{-\mG_0}\rho_0 = \sum_{\lambda,a} c_{\lambda,a} \eigenLeff_{\lambda,a},
\end{align}
which leads with Eq.~\eqref{eq:eigenLeff} to
\begin{align}\label{eq:evolexp}
	\rho'(t) =  \sum_{\lambda,a} c_{\lambda,a}\ee^{\lambda t} \eigenLeff_{\lambda,a}.
\end{align}
Since $\re\lambda<0$ holds true for all the $\lambda$'s except $\lambda=0$,
these eigenmodes all vanish in the long-time limit $t\to\infty$, and we have
\begin{align}\label{eq:rhoinfprime}
	\rho'(t) \to \rho_\infty' = \sum_a c_{0,a}\eigenLeff_{0,a}.
\end{align}
Thus, $\rho'(t)$ approaches a time-independent asymptotic state,
which can be calculated with linear algebras for $\mLeff$ instead of explicit time-evolution simulations.

Once we have $\rho'_\infty$, we immediately obtain the NESS solution for $\rho(t)$,
\begin{align}\label{eq:ness}
	\rho(t) \to \rhoness(t) = \ee^{\mG_t}\rho_\infty'=\sum_a c_{0,a}\ee^{\mG_t}\eigenLeff_{0,a}.
\end{align}
The periodicity of $\mG_t=\mG_{t+T}$ ensures that the $\rhoness$ is also periodic
\begin{align}
	\rhoness(t+T) = \rhoness(t).
\end{align}
Depending on whether $N_{\lambda=0}=1$ or $N_{\lambda=0}\ge2$, Eq.~\eqref{eq:ness} gives different physical consequences.
When $N_{\lambda=0}=1$,
the sum over $a$ is absent and we have $c_{0,1}=1$ owing to the trace condition $\tr(\rho'_\infty)=1$.
Thus, Eq.~\eqref{eq:ness} is further simplified as
\begin{align}\label{eq:rhonessUnique}
	\rhoness(t) =\ee^{\mG_t}\eigenLeff_{0,1} \qquad (\text{when}\ N_{\lambda=0}=1).
\end{align}
We emphasize that there is no dependence on the initial state $\rho_0$ in Eq.~\eqref{eq:rhonessUnique}.
When $N_{\lambda=0}=1$, there is the unique steady state for $\mLeff$,
and $\rho'(t)$ converges to this special state no matter what initial state we take.
On the contrary, when $N_{\lambda=0}\ge2$, there remain some initial state dependence in $c_{0,a}$'s.
To obtain the NESS, we need to solve the linear equations [Eq.~\eqref{eq:initexp}]
for the unknown coefficients $c_{\lambda,a}$, plugging $c_{\lambda=0,a}$ into Eq.~\eqref{eq:ness}.
The multiple zero modes happen in physical models typically when
the model has symmetries (see, e.g., Ref.~\cite{Chinzei2020b}).

\tnii{
For completeness, we discuss how to calculate expectation values of observables in the NESS assuming Eq.~\eqref{eq:rhonessUnique}.
Using Eq.~\eqref{eq:expandLeff} together with Eqs.~\eqref{eq:mG1} and \eqref{eq:mG2}, we have $\rhoness(t)=(1+\mG_t^{(1)}+\frac{[\mG_t^{(1)}]^2}{2}+\mG^{(2)}\dots)\eta_{0,1}=\sum_m \rho_m e^{-\ii m\omega t}$, where $\rho_m$ is time-independent and obtained up to a desired order of $\omega^{-1}$.
For an observable $A$, we have its expectation value as $A(t)=\tr[\rhoness(t)A]=\sum_m \tr(\rho_m A)e^{-\ii m\omega t}$.
Therefore, once we have $\rho_m$'s from the HF expansion, we can immediately obtain expectation values at arbitrary times $t$ without time integration.
}

The above argument in deriving the NESS solution is exact once $\mLeff$ and $\mG_t$ are given at an arbitrary order of the HF expansion.
In this approach, the analytical expression for the approximate $\mLeff$ and $\mG_t$ gives us physical intuitions for the effects of the drive and dissipation.
Also, the HF expansion avoids dynamics simulations and enables efficient NESS calculations and possibly analytical calculations in some problems.
\tnii{For instance, one can use the Lanczos algorithm to numerically obtain $\eta_{0,1}$ more efficiently than the direct dynamics simulations.}
In the following sections, we classify the problems and develop and demonstrate the NESS calculations based on the HF expansion.

\section{\tnii{Phenomenological} time-independent dissipators}\label{sec:tindep}
In this section, we focus on the case in which the dissipator is time-independent $\mD_t=\mD$.
This class of problems is widely studied in, e.g., quantum-optic~\cite{gardiner2004quantum}, Rydberg atoms~\cite{Jin2021}, cavity-QED~\cite{Walther2006}, electronic~\cite{Sato2019a,Yamamoto2020}, and spin~\cite{Prosen2008,Ikeda2019,MSato2020,Kanega2021} systems. \tnii{Bloch equations used in magnets~\cite{Slichter1996} and semiconductors~\cite{Hang2007,Hang2009} can also be viewed as equations of motion with time-independent dissipators.}
First, we reduce the HF expansion for the Liouvillian in Sec.~\ref{sec:Lexpansion} with the Hamiltonian and dissipator. 
Then, we apply the reduced formulas to the example model of a dissipative spin chain~\cite{Prosen2008a,Thakurathi2013,Molignini2017} and discuss what can be known by the HF expansion approach.
We note that the van Vleck HF expansion for a time-independent dissipator was studied in Ref.~\cite{Ikeda2020} at the leading order $O(\omega^{-1})$.
Here we study general dissipators at higher orders.

\subsection{High-frequency expansion}\label{sec:HFtindep}
When $\mD_t$ is time-independent as $\mD_t=\mD$,
we have $\mD_m = \delta_{m0}\mD$.
Therefore, the Liouvillian Fourier components are simplified as
\begin{align}
	\mL_m = \mH_m + \delta_{m0}\mD.
\end{align}
We substitute this special form of $\mL_m$ into the Liouvillian HF expansion
derived in Sec.~\ref{sec:Lexpansion},
obtaining more concrete formulas.

Since the derivation is rather straightforward,
we write down the results as follows:
\begin{align}
	\mLeff^{(0)}(\rho)&= \mH_0(\rho) + \mD(\rho),\label{eq:Leff0_tindep}\\
	\ii\mLeff^{(1)}(\rho)&= [\Heff^{(1)},\rho],\label{eq:Leff1_tindep}\\
	\ii\mLeff^{(2)}(\rho) &= [\Heff^{(2)},\rho]
	+ \sum_{m\neq0} \frac{\left[\left[\ii\mH_{-m}, \ii\mD\right], \ii\mH_{m}\right](\rho)}{2 m^{2} \omega^{2}},\label{eq:Leff2_tindep}
\end{align}
and
\begin{align}
	-\mG_t^{(1)}(\rho) &= [\ii\mm^{(1)}(t),\rho],\label{eq:G0_tindep}\\
	-\mG_t^{(2)}(\rho) &= [\ii\mm^{(2)}(t),\rho]+
	\sum_{m \neq 0} \frac{\left[\ii\mH_{m}, \ii\mD\right](\rho)}{m^{2} \omega^{2}} e^{-i m \omega t}.\label{eq:G1_tindep}
\end{align}
Here, $\Heff^{(k)}$ and $\mm^{(k)}(t)$ are the effective Hamiltonian
and micromotion defined in the HF expansion for isolated systems
(see Appendix~\ref{app:vVisolated} for their explicit forms).
In the derivation, we have used a useful formula
for nested commutators presented in Appendix~\ref{app:commutator}.
Higher-order results are similarly obtained by straightforward calculations.

We remark that, up to the first order, the effective Liouvillian is in Lindblad form:
\begin{align}
	\mLeff(\rho) = -\ii [\Heff,\rho] +\mD(\rho) + O(\omega^{-2}),
\end{align}
which follows from Eqs.~\eqref{eq:Leff0_tindep} and \eqref{eq:Leff1_tindep}.
This is a good property particular to the van Vleck approach and does not hold for $\mL_F(t_0)$ [Eq.~\eqref{eq:onecycF}] in the FM representation~\cite{Mizuta2021}.
However, at higher orders, $\mLeff$ is not necessarily of Lindblad form
due to the second term of the right-hand side in Eq.~\eqref{eq:Leff2_tindep}.
These kinds of terms derive from the interplay between the external drive $H_m$
and dissipation $\mD$.
As for $\mG_t$, Eq.~\eqref{eq:G0_tindep} dictates that the micromotion
is essentially the same as in isolated systems \hlr{up to the first order},
whereas the interplay sets in at the second order, as one can see in Eq.~\eqref{eq:G1_tindep}.

\subsection{Example 1: Open XY Chain with Boundary Dissipation}\label{sec:openXY}
Let us apply the HF expansion to an open XY spin chain subject to dissipation acting on the two edges of the chain.
This class of models is mapped to quadratic Majorana fermions~\cite{Prosen2008a} and studied extensively.
Here we discuss such a chain under periodic drive $f(t)=f(t+T)$~\cite{Prosen2011}:
\begin{align}
H(t)=\sum_{j=1}^{N-1}\left(\frac{1+\aniso}{2} \sigma_{j}^{x} \sigma_{j+1}^{x}+\frac{1-\aniso}{2} \sigma_{j}^{y} \sigma_{j+1}^{y}\right)+h f(t) \sum_{j=1}^{n} \sigma_{j}^{z}\label{eq:drivenXY}
\end{align}
and the dissipator
\begin{align}
	\mD(\rho) = \sum_{\alpha=1}^4 \left( L_\alpha \rho L_\alpha^\dag -\frac{1}{2}\left\{ L_\alpha^\dag L_\alpha,\rho\right\} \right)
\end{align}
with
\begin{align}\label{eq:edgedissipator}
L_{1,2}=\sqrt{\Gamma_{1,2}^{\mathrm{L}}} \sigma_{1}^{\pm},\qquad
 L_{3,4}=\sqrt{\Gamma_{1,2}^{\mathrm{R}}} \sigma_{N}^{\pm}.
\end{align}
In the literature,
this model was first studied analytically for the periodic kicks
$f(t)=\delta_T(t) \equiv T\sum_{m\in\mathbb{Z}}\delta(t-mT)$~\cite{Prosen2011}.
The authors showed the following two properties:
\begin{enumerate}
	\item[(i)] This model exhibits a rich phase diagram in the $(T,h)$-plane,
	\item[(ii)] the boundary dissipations $\{L_\alpha\}_{\alpha=1}^4$ do not qualitatively change
the phase diagram.
\end{enumerate}
Here, the rich phase diagram was characterized by the spin-spin correlations~\cite{Prosen2011},
and, its correspondence to the topological edge mode was elucidated afterwords~\cite{Thakurathi2013,Molignini2017}.
Our aim here is to reexamine this model by applying our HF expansion method and to provide further insights. Note that the HF expansion approach can be used for various driving protocols $f(t)$ other than $\delta_T(t)$ in exchange for restricting ourselves to the HF regime.

We begin by considering the property (i) described above by the HF expansion.
As noted above, this property is related to the bulk Hamiltonian
(the bulk-boundary correspondence was confirmed for the topological edge mode~\cite{Thakurathi2013}).
Thus, we neglect the dissipator and consider the infinite chain in discussing the property (i).

\tnii{
The spin Hamiltonian~\eqref{eq:drivenXY} is many-body and hard to solve analytically, and it is convenient to map this to Majorana fermions, or the two Hermitian components of the complex Jordan-Wigner fermions.
We introduce $2N$ Majorana fermions $w_{j}$ $(j=1,2,\dots,2N)$
for an $N$-site chain as
\begin{align}
 w_{2 n-1} =\left(\prod_{j=1}^{n-1} \sigma_{j}^{z}\right) \sigma_{n}^{x}, \qquad
 w_{2 n} =\left(\prod_{j=1}^{n-1} \sigma_{j}^{z}\right) \sigma_{n}^{y}, \qquad
 w_{2 n-1} w_{2 n} = i\sigma_{n}^z, \label{eq:JW}
\end{align}
which satisfy the Majorana commutation relations $\{w_{i},w_j\}=2\delta_{ij}$ and translate Eq.~\eqref{eq:drivenXY} into the following quadratic Hamiltonian
\begin{align}
	H(t) = -i \sum_{j=1}^{N-1}\left( \frac{1+\gamma}{2}w_{2j}w_{2j+1}-\frac{1-\gamma}{2}w_{2j-1}w_{2(j+1)}\right)-ihf(t)\sum_{j=1}^N w_{2j-1}w_{2j}.
\end{align}
The Majorana fermions are interpreted as spinon excitations behaving as magnetic domain walls.
From the Majorana fermions to the spins, we can use the inverse transformation of Eq.~\eqref{eq:JW} given by
\begin{align}
 \sigma_n^x =(-i)^{n-1}\left(\prod_{j=1}^{2(n-1)} w_{j}\right) w_{2n-1}, \quad
 \sigma_{n}^y =(-i)^{n-1}\left(\prod_{j=1}^{2(n-1)} w_{j}\right) w_{2n}, \quad
 \sigma_{n}^z = -iw_{2 n-1} w_{2 n} . \label{eq:JWinv}
\end{align}
It is convenient to introduce the two-component fermions $W_j ={}^t(w_{2j-1},w_{2j})$ and make the Fourier transform
$W_j \to \tW_k\propto \sum_j e^{-ikj}W_j$ in the limit of $N\to\infty$.
Then we have the Heisenberg equation for $\tW_k$ (see Appendix~\ref{app:majorana} for derivation):
\begin{align}
	\frac{d \tW_k(t)}{dt} = -ih(k,t) \tW_k(t),
\end{align}
where
\begin{align}
	h(k,t) = -2 \{ (\aniso \sin k)  \tau^x + [\cos k-hf(t)] \tau^y \},\label{eq:hamXY}
\end{align}
\tnii{where $\tau^\alpha$ ($\alpha=x,y,$ and $z$) denote the Pauli matrices.}
Thus, the original spin Hamiltonian has been mapped to noninteracting two-component Majorana fermions.
Although the mapping~\eqref{eq:JW} is nonlinear and does not simply tell us every spin observable, the two eigenvalues of Eq.~\eqref{eq:hamXY} describe the instantaneous energy dispersion relations for the elementary spinon (Majorana fermion) excitations.
}

As shown in Refs.~\cite{Prosen2008,Prosen2011},
the nontrivial phase is related to the quasienergy bands of the elementary Majorana fermions.
Since $h(k,t)$ is Hermitian and traceless,
the two eigenvalues of the one-cycle unitary $V(k)=\exp_+[-i\int_0^T H(k,t)dt]$
are given as $\exp[\pm i\epsilon(k)T]$, \tnii{and $\pm\epsilon(k)$ determine the two quasienergy dispersion relation for the Floquet modes in $k\in (-\pi,\pi]$.}
The appearance of the nontrivial phase
is signaled by the appearance of the nontrivial solutions
for $d\epsilon(k)/dk=0$~\cite{Prosen2008,Prosen2011}.
Here, the nontrivial solutions mean $k\neq0$ or $\pi$,
since, without the periodic drive, or $f(t)$=0,
the two quasienergy bands are given by $\pm\epsilon=\pm 2\sqrt{\cos^2k +\aniso^2 \sin^2k}$
and $d\epsilon(k)/dk =0$ at $k=0$ and $\pi$.

Let us now derive the phase diagram
by applying the HF (or small $T$) expansion to Eq.~\eqref{eq:hamXY}
for the periodic kicks $f(t)=\delta_T(t)$.
In this example, the Fourier components of the Hamiltonian $h_m(k)=\int_0^T \dd t\, h(k,t) e^{im\omega t}/T$ are simply given by $h_{m=0}=\overline{h(k,t)}=-2\{ \aniso \sin k \tau^x + (\cos k-h) \tau^y \}$
and $h_{m\neq0}=2h\tau^y$.
At the zeroth order of the HF expansion, we have $h_\mathrm{eff}^{(0)}(k)=h_{m=0}$,
which gives the quasienergy bands $\pm\epsilon^{(0)}(k)=\pm 2\sqrt{(\cos k -h)^2+\aniso^2\sin^2k}$.
In this order, $\epsilon^{(0)}(k)$ does not depend on $T$,
and the number of solutions in $d\epsilon^{(0)}(k)/dk=0$ changes
whether $|h|\ge1$ or $|h|<1$.
Whereas, for $|h|\ge1$, the only solutions in $k\in [0,\pi]$ are $k=0$ and $\pi$,
there is an extra solution $k=k_*\in(0,\pi)$ for $|h|<1$.
Correspondingly, the spin-spin correlations~\cite{Prosen2011}
and the number of topological edge modes become different in these parameter regions
as shown in Fig.~\ref{fig:PD_XY}(a).
At the first order of the HF expansion, $\heff$ does not acquire a correction: $\heff^{(1)}=0$.
This follows from the time-reversal symmetry of the driving protocol, $f(-t)=f(t)$,
and $\heff^{(2n+1)}=0$ $(n\in\mathbb{N})$ more generally.
At the second order, we obtain a nonvanishing correction from $\heff^{(2)}$,
and the effective Hamiltonian up to this order is given by $\heff = -2 \{ \aniso[1-2(hT)^2/3] \sin k \tau^x + (\cos k-h) \tau^y \}$.
We note that the periodic kick in the Zeeman coupling has reduced the anisotropy $|\aniso|$.
In this order, $\heff$ depends on $T$ and gives more complex phase diagrams
as shown in Figs.~\ref{fig:PD_XY}(b) and (c).
We remark that the phase diagrams are consistent with
the exact solution~\cite{Prosen2011} at the small-$T$ (i.e., high-frequency) region.
\tnii{We leave it an open question to examine how the phase diagram changes for higher-order calculations, which require considerable effort.}

\begin{figure*}
	\includegraphics[width=\columnwidth]{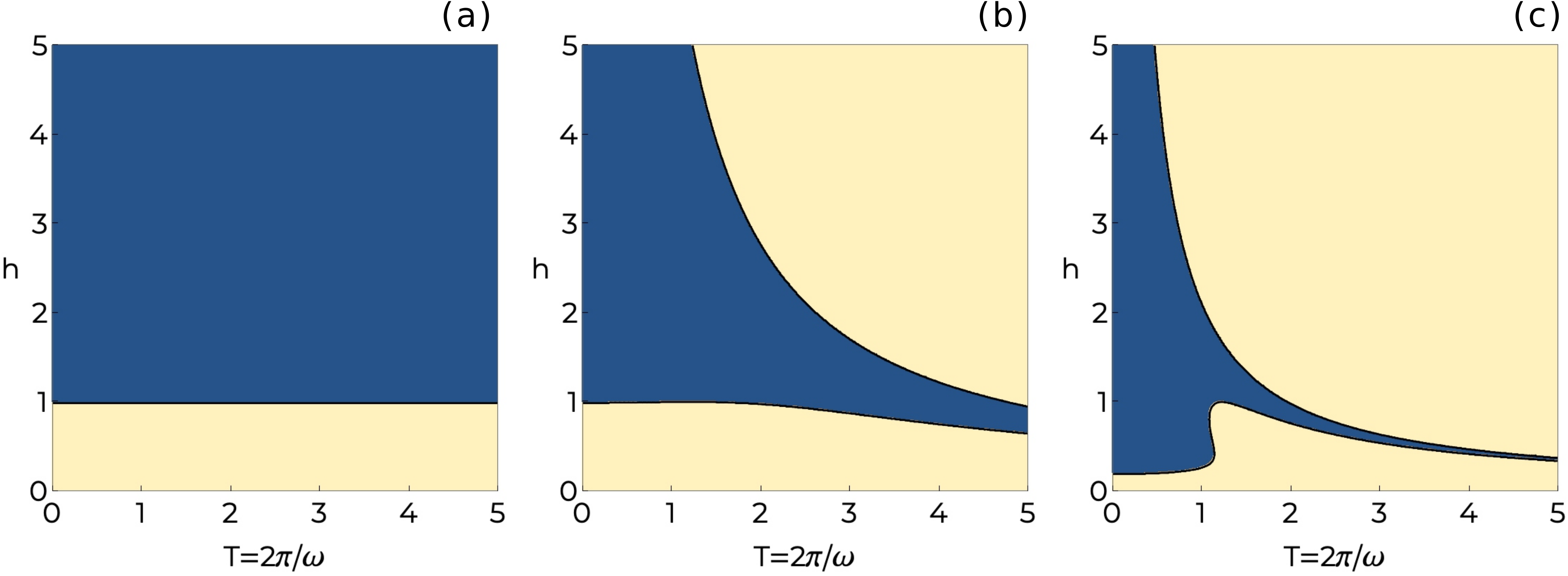}
	\caption{Number of solutions for $d\epsilon(k)/dk =0$ in $k\in(0,\pi)$ obtained by the HF expansion.
	The blue (yellow) region represents the number 0 (1).
	The zeroth-order (a) and second-order (b) calculation for $\aniso=0.1$.
	(c) The second-order calculation for $\aniso=0.9$.}
	\label{fig:PD_XY}
\end{figure*}

One advantage of the HF expansion approach
is that we can analyze other driving protocols rather than periodic kicks.
While the problem is exactly solvable for the kicks,
it is not for, e.g., the harmonic drive $hf(t)=b+a\cos\omega t$.
To obtain the phase diagram for this drive,
we decompose $h(k,t)$ into its Fourier components: $h_{m=0}$ is the same as before,
$h_{m=\pm1}=h\sigma_y$, and $h_{m}=0$ otherwise.
Thus, at the zeroth order, we obtain the same phase diagram as before (see Fig.~\ref{fig:PD_XY}(a)).
Again, all the odd-order corrections vanish in the harmonic drive as well,
and the first nontrivial correction comes at the second order.
The effective Hamiltonian up to the second order is given by $\heff = -2 \{ \aniso[1-(bT)^2/\pi^2] \sin k \tau^x + (\cos k-a) \tau^y \}$,
in which we find a quantitative difference in the change of anisotropy $\aniso$.

Now, we discuss the other property (ii).
As we have seen in Sec.~\ref{sec:HFtindep}, the effective Liouvillian involves the contributions from the interplay between the drive and dissipation (see the second terms on the RHS of Eqs.~\eqref{eq:Leff2_tindep} and \eqref{eq:G1_tindep}).
These terms, in general, act on sites other than the two edges even if the bare dissipator $\mD$ acts only on the edges.
As we go higher orders, the dissipators propagate into the bulk and can cause a nonnegligible effect on the bulk property in general.

Nevertheless, we can show that the present model is so special that all drive-dissipation-interplay terms vanish and
\begin{align}\label{eq:LeffXY}
	\mLeff\rho = -\ii[\Heff,\rho]+\mD(\rho)
\end{align}
holds at each order of the HF expansion.
Namely, the drive does not let the dissipator propagate into the bulk, and the effective dynamics is described by the effective Hamiltonian plus the bare boundary dissipation.

To prove Eq.~\eqref{eq:LeffXY}, it is enough to show $[\mH_m,\mD]\propto [\mSz,\mD]=0$, where $\mSz$ denotes the superoperator $\mSz\rho \equiv [\sum_j \sigma_j^z,\rho]$.
\tnii{Note that $\mH_m\propto \mSz$ for $m\neq0$ because $H_m\propto \sum_j \sigma_j^z$ (see Eq.~\eqref{eq:drivenXY}).}
One can easily show this by straightforward calculations for the superoperator commutators, but this approach cannot gain physical insights.
Here we provide another proof based on the vectorization of the operator space~\cite{HornBook}.
In this technique, the density matrix $\rho$ acting on the Hilbert space $\Hfrak$ is regarded as a vector $\vec{\rho}$ in $\Hfrak\otimes\Hfrak$.
Correspondingly, a superoperator $\rho\to L\rho R$ with $L$ and $R$ being matrices acting on $\Hfrak$ is regarded as a matrix $L\otimes R^T$ ($T$ denotes the transpose) acting on $\Hfrak\otimes\Hfrak$.
Graphically, each superoperator becomes an operator acting on two copies of spin chains, as illustrated in Fig.~\ref{fig:twochains}.
In our model, the dissipator is represented as
\begin{align}
	\mD &= \mD_1 + \mD_2\\
	\mD_1 &\equiv \sum_{\alpha=1}^4  L_\alpha \otimes (L_\alpha^\dag)^T =
	\Gamma_{1}^{\mathrm{L}} \sigma_{1}^{+}\otimes \sigma_{1}^{+}+\Gamma_{2}^{\mathrm{L}} \sigma_{1}^{-}\otimes \sigma_{1}^{-}
	+	\Gamma_{1}^{\mathrm{R}} \sigma_{N}^{+}\otimes \sigma_{N}^{+}+\Gamma_{2}^{\mathrm{R}} \sigma_{N}^{-}\otimes \sigma_{N}^{-} ,\label{eq:D1mat}\\
	\mD_2 &\equiv -\frac{1}{2}\sum_{\alpha=1}^4  \left[L_\alpha^\dag L_\alpha \otimes 1 +  1\otimes (L_\alpha^\dag L_\alpha)^T \right]=-\frac{1}{2}\left( K\otimes 1 +1\otimes K \right),\\
	K & \equiv \Gamma_{1}^{\mathrm{L}} \sigma_{1}^{-}\sigma_{1}^{+}+\Gamma_{2}^{\mathrm{L}} \sigma_{1}^{+} \sigma_{1}^{-}
	+	\Gamma_{1}^{\mathrm{R}} \sigma_{N}^{-}\sigma_{N}^{+}+\Gamma_{2}^{\mathrm{R}} \sigma_{N}^{+}\sigma_{N}^{-} \notag\\
	&=\frac{\Gamma_{2}^{\mathrm{L}}-\Gamma_{1}^{\mathrm{L}}}{2} \sigma_{1}^{z} +
	\frac{\Gamma_{2}^{\mathrm{R}}-\Gamma_{1}^{\mathrm{R}}}{2} \sigma_{N}^{z}+
	\frac{\Gamma_1^{\mathrm{L}}+\Gamma_2^{\mathrm{L}}+\Gamma_1^{\mathrm{R}}+\Gamma_2^{\mathrm{R}}}{2}\label{eq:Kmat}
\end{align}
where we have used $\sigma_j^+\sigma_j^-=(1+\sigma_j^z)/2$.
Although the explicit form involves many terms, its physical meaning is obvious if we think of $\mD$ as a virtual Hamiltonian acting on the two spin chains.
First, $\mD_1$ consists of pairwise spin raising or lowering on the left or right edges, as shown in Fig.~\ref{fig:twochains}(a).
Second, $\mD_2$ gives local Zeeman energy on each corner of the two chains, where $K\otimes1$ ($1\otimes K$) acts on the upper (lower) chain as illustrated in Fig.~\ref{fig:twochains}(b).
Thus, $\mD$ does not conserve the total spin along the $z$ direction but conserves the difference between the in-chain total spins along $z$.
Now, we recall that $\mSz$ is given in the matrix representation as
\begin{align}
	\mSz = \sum_j\sigma_j^z \otimes 1 - 1\otimes\sum_j\sigma_j^z,
\end{align}
which is the in-chain spin difference.
Therefore, $\mSz$ is a conserved quantity under the ``Hamiltonian'' $\mD$, and we obtain $[\mSz,\mD]=0$ that means $[\mH_m,\mD]=0$ $(\forall m)$ leading to Eq.~\eqref{eq:LeffXY}.

\begin{figure*}
\begin{center}
	\includegraphics[width=10cm]{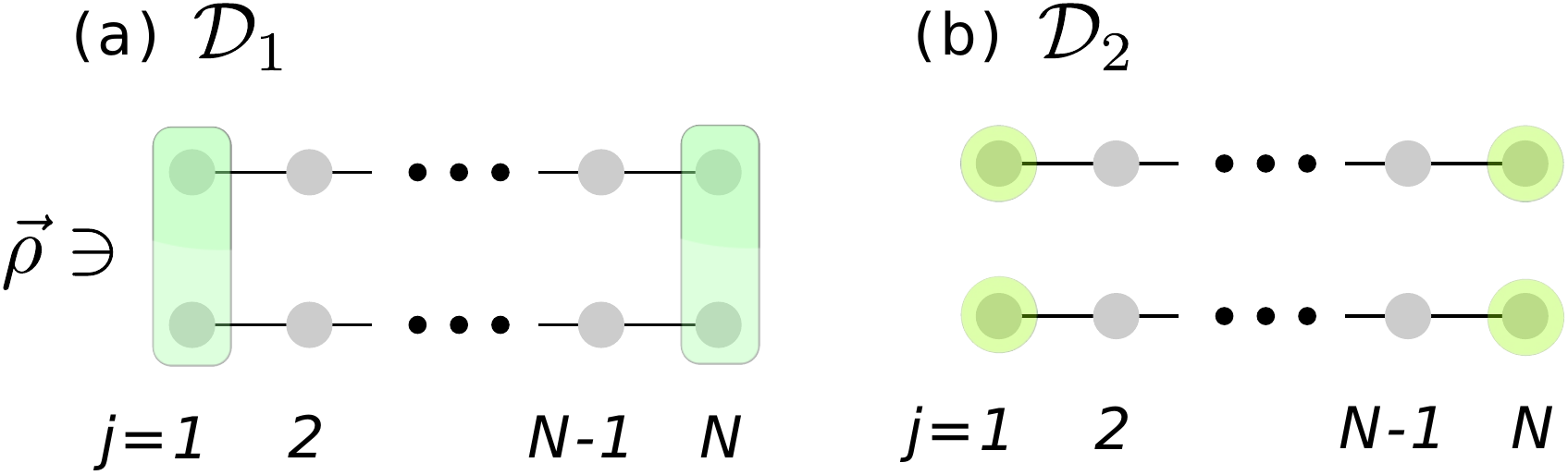}
	\caption{Schematic illustration of matrices (a) $\mD_1$ and (b) $\mD_2$ in the vectorization scheme of $\rho\to\vec{\rho}\in \Hfrak\otimes\Hfrak$.
	The upper (lower) chain corresponds to the left (right) part of $\Hfrak\otimes\Hfrak$.
	The colored sites show nontrivial actions of parts of $\mD_1$ and $\mD_2$ (see Eqs.~\eqref{eq:D1mat}--\eqref{eq:Kmat}).}
	\label{fig:twochains}
\end{center}
\end{figure*}

We remark that the disentangling of the drive and dissipation is particular to the present dissipator~\eqref{eq:edgedissipator}, which is absent in generic models (another special case is the dephasing $L_\alpha\propto \sigma_j^z$).
For example, we can think of other dissipators such as spin raising and lowering along $x$ instead.
In this case, $\mD_1$ involves the terms like $\sigma_1^+\otimes \sigma_1^-$, and the difference between the in-chain total $\sigma^z$ is no longer conserved, implying that the drive-dissipation-interplay could influence the bulk property more.

Let us summarize this subsection.
We have studied the prominent dissipative open spin chain, addressing the issues (i) and (ii) from the HF expansion viewpoint.
As for (i), we have provided a way to analyze generic driving protocol $f(t)$ at high frequency.
As for (ii), we have found the drive-dissipation disentangling, which means that the effective Liouvillian is of Lindblad form, and the dissipator is confined in the spin chain edges.

\subsection{Example 2: Three-level system}\label{sec:3level_tindep}
Here we demonstrate, more quantitatively, the NESS calculation by the HF expansion.
To this end, we take an effective Hamiltonian
for the NV center in diamonds under a circularly-polarized ac magnetic field~\cite{Rondin2014}:
\tnii{
\begin{align}\label{eq: HNV}
&{H}_\text{NV}(t) = H_0 +H_{\text{ext}}^{\text{circ}}(t),\\
&H_0 =-B_s S_z + \Nz S_z^2 +\Nxy(S_x^2 - S_y^2),\qquad
H_{\text{ext}}^{\text{circ}}(t) \equiv - B_d(S_x \cos \omega t + S_y \sin \omega t),
\end{align}
}
where $S_{x,y,z}$ are the spin-1 operators, $B_s$ is the static Zeeman field, $\Nz$ and $\Nxy$ represent
the magnetic anisotropy terms,
and \tnii{$H_{\text{ext}}^{\text{circ}}(t)$}
denotes the coupling to the circularly-polarized ac magnetic field.
In our analysis, we set $\Nz=1$ and $\Nxy=0.05$ because of $\Nz\gg \Nxy$ in NV centers~\cite{Rondin2014}.
Since we consider the spin $S=1$, the spin matrices are $3\times3$.
This model was analyzed in Ref.~\cite{Ikeda2020} at $O(\omega^{-1})$, and we here extend the analysis to $O(\omega^{-2})$.
We write the $H_0$'s eigenstates as $H_0\ket{E_i}=E_i\ket{E_i}$ ($i=1,2$, and 3).

As in Ref.~\cite{Ikeda2020}, we here consider time-independent dissipation
and will analyze time-dependent dissipation to compare the two cases in Sec.~\ref{sec:NVtdep}.
To make the comparison easy, we use the dissipator that represents the coupling of $S_x$ to an ohmic bosonic thermal reservoir~\cite{BreuerBook}:
\begin{align}\label{eq:NVD}
	\mD(\rho) = \sum_{i,j} \gamma(E_{ij}) \left[A_{E_{ij}} \rho A_{E_{ij}}^\dag -\frac{1}{2}\left\{A_{E_{ij}}^\dag A_{E_{ij}},\rho \right\}\right],
\end{align}
where $E_{ij}\equiv E_i-E_j$, $A_{E_{ij}}=\Pi_{E_i}S_x\Pi_{E_j}$ $(\Pi_{E_i}\equiv \ket{E_i}\bra{E_i})$, and
\begin{align}
	\gamma(\epsilon) = \gamma_0 \frac{\epsilon e^{-\frac{\epsilon^2}{2\Lambda^2}}}{1-e^{-\beta \epsilon}}\label{eq:gamma_3level}
\end{align}
is the bath spectral function. Here, $\gamma_0$ gives the system-bath coupling strength, $\beta$ the inverse temperature, and $\Lambda$ the spectral cutoff.
In the following numerical calculations, we set $\gamma_0=0.2$, $\beta=3$, and $\Lambda=10$.

\tnii{
We remark that the dissipator~\eqref{eq:NVD} is approximate because it is represented by the energy eigenvalues and eigenstates for the undriven $H_0$. Later in Sec.~\ref{sec:NVtdep}, we will analyze a more precise dissipator that is derived microscopically and involves the driving effect, finding that the dissipator difference only gives small quantitative corrections except some symmetry.
Given that similar approximate time-independent dissipators are widely used in NV-center studies~\cite{Rondin2014}, the comparison between Secs.~\ref{sec:3level_tindep} and \ref{sec:NVtdep} offers a theoretical validation for the approximation.
}

The dissipator~\eqref{eq:NVD} satisfies the detailed balance condition.
To see this, we rewrite Eq.~\eqref{eq:NVD} as
\begin{align}
	D(\rho) = \sum_{i,j}\Gamma_{ij} \left[L_{ij}\rho L_{ij}^\dag -\frac{1}{2}\left\{ L_{ij}^\dag L_{ij},\rho \right\}\right],
\end{align}
where $L_{ij}=\ket{E_i}\bra{E_j}$ are jump operators between the energy eigenstates, and $\Gamma_{ij}=\gamma(E_{ij})|\braket{E_i|S_x|E_j}|^2$ are the corresponding transition rates.
Noting that $\gamma(-\epsilon)=e^{-\beta\epsilon}\gamma(\epsilon)$, we have $\Gamma_{ij}e^{-\beta E_j} = \Gamma_{ji}e^{-\beta E_i}$.
This detailed balance condition ensures that, without drive $H_{\text{ext}}^{\text{circ}}(t)$, the stationary solution of the Lindblad equation corresponds to the canonical ensemble $\rho_\mathrm{can}=e^{-\beta H_0}/Z$ with $Z =\text{tr}(e^{-\beta H_0})$.

In this setup, we calculate the NESS for the Lindblad equation in two ways.
The first way is the numerical time-integration of the Lindblad equation $\frac{d\rho}{dt}=\mL_t\rho$
for $\mL_t\rho = -i[H_\mathrm{NV}(t),\rho]+\mD(\rho)$.
By solving for independent initial conditions, we numerically obtain all the matrix elements for the one-cycle superoperator ($9\times9$ matrix) $\mV_F$.
Then we diagonalize $\mV_F$, finding $\eta$ as the eigenvector with eigenvalue 1.
Finally, the NESS $\rhoness(t)$ is obtained by the time-integration starting from $\eta$.
For clarity, we write $\rhoness(t)$ thus obtained as $\rhoness^{\mathrm{exact}}(t)$.
This way of calculation is exact but demands many time integrations.

The second way of calculation is the HF expansion approach.
Following Eqs.~\eqref{eq:Leff0_tindep}--\eqref{eq:Leff2_tindep}, we write out $\mLeff$ at an order $N$ ($=0,1,2,\dots$).
Then we numerically find $\eta$ as the eigenvector of $\mLeff$ with zero eigenvalue.
Then we obtain $\rhoness^{(N)}(t)=e^{\mG_t}\eta$, where $\mG_t$ is calculated from Eqs.~\eqref{eq:G0_tindep} and \eqref{eq:G1_tindep},
depending on $N$.
In this approach, we do not need any time-integration but only use linear algebra.
At the first order~\cite{Ikeda2020}, $\rhoness^{(N=1)}(t)$ was exactly solved for any $\gamma_0$ and shown to coincides with the canonical Floquet steady state in the limit of $\gamma_0\to0$.
We note that the correction terms to $\Heff$ in the present model are given by
\begin{align}
	\Heff^{(1)} = \frac{2B_d^2}{\omega}S_z,\qquad
	\Heff^{(2)} = -\frac{2B_d^2}{\omega^2}\left[-B_sS _z+3\Nz S_z^2 +\Nxy(S_x^2-S_y^2) \right].
\end{align}
Whereas the first term represents an effective Zeeman field,
the second modifies the nematic terms $\Nz$ and $\Nxy$.
The corrections to the micromotion read
\begin{align}
	\mm^{(1)}(t) &= -\frac{2B_d}{\omega}[\sin(\omega t)S_x +\cos(\omega t)S_y],\\
	\mm^{(2)}(t) &= -\frac{2B_d B_s}{\omega^2}[\sin(\omega t)S_x +\cos(\omega t)S_y]\notag\\
	&\qquad+\frac{2B_d(\Nxy+\Nz)}{\omega^2} \cos(\omega t)\{S_y,S_z\}
	-\frac{2B_d(\Nxy-\Nz)}{\omega^2} \cos(\omega t)\{S_z,S_x\},
\end{align}
which consist of circularly polarized ac magnetic field and similar terms for nematics.
The drive-dissipation-interplay terms are not written in simple form and hence treated just numerically.

\begin{figure*}
	\includegraphics[width=\columnwidth]{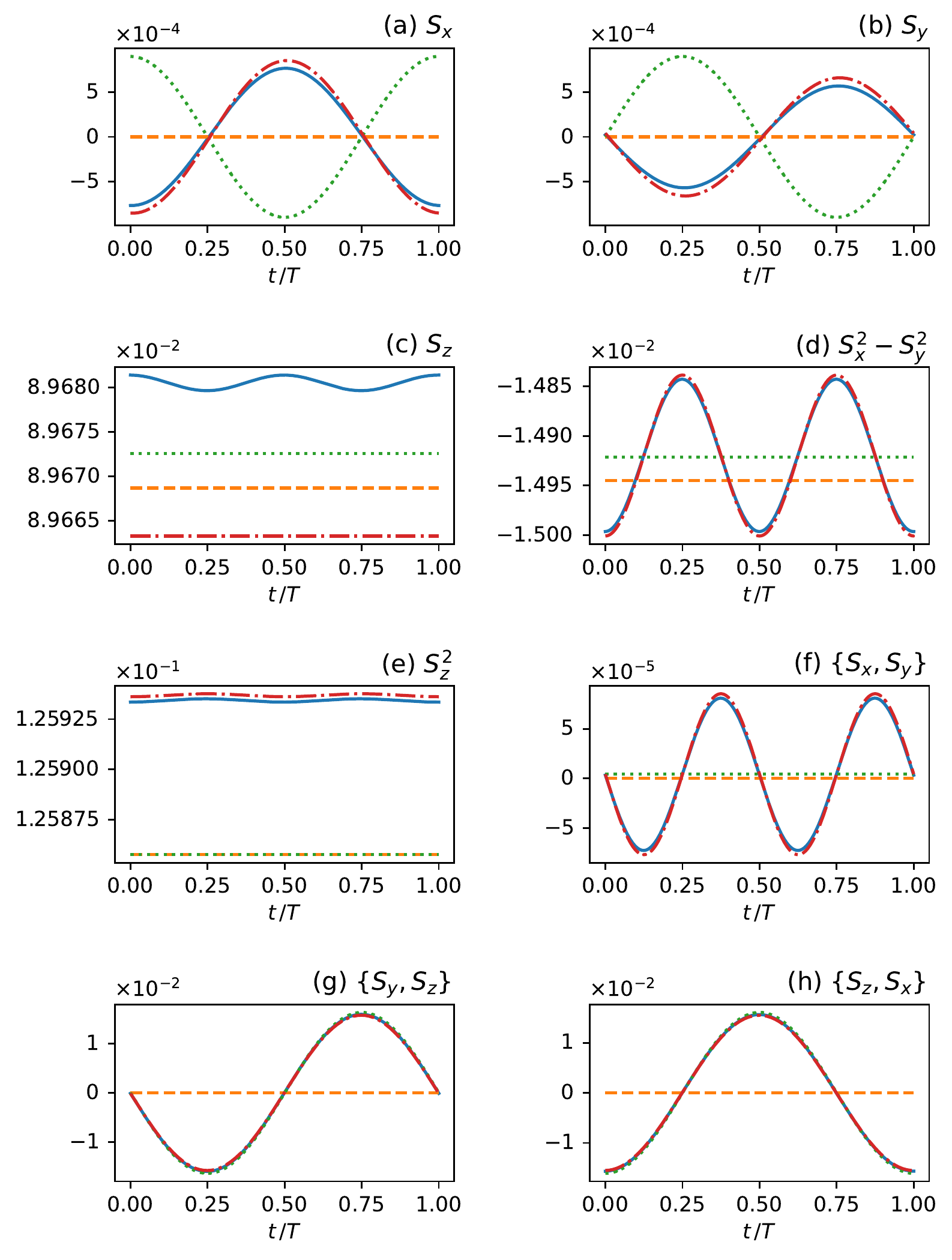}
	\caption{NESS in one cycle calculated by the exact numerical integration (blue) and by the HF expansion approach at order 0 (orange), 1 (green), and 2 (red). All the independent 8 observables are plotted in each panel.
		The parameters are chosen as $B_s=0.3$, $\Nz=1$, $\Nxy=0.05$, $B_d=0.1$, $\omega=10$, $\beta=3$, \tnii{$\gamma_0=0.2$,} and $\Lambda=10$.}
	\label{fig:NV_evol}
\end{figure*}

The one-cycle evolutions for the NESS calculated with these two ways are plotted in Fig.~\ref{fig:NV_evol}, where we have set $B_s=0.3$, $B_d=0.1$, $\omega=10$, $\beta=3$, \tnii{$\gamma_0=0.2$,} and $\Lambda=10$.
For the HF approach, we plot the results for the zeroth, first, and second orders.
At the zeroth order, $\Heff=H_0$ and $\mG(t)=0$, and thus the NESS corresponds to the thermal equilibrium showing no time dependence.
As we increase the order, the HF result tends to approach the exact numerical integration,
which we verify quantitatively below.
We note that there are some period-$T/2$ (or the second harmonic) oscillations in, e.g., $S_x^2-S_y^2$ in Fig.~\ref{fig:NV_evol}.
These oscillations cannot be taken up to the first order since the micromotion $\mm^{(1)}(t)$ contains only frequency $\omega$.
At the second order, $\rhoness^{(N=2)}(t)=e^{\mG_t}\eta=(1+\mG_t+\mG_t^2)\eta$,
where $\mG_t^2\eta$ involves the frequency $2\omega$.

The HF expansion approach becomes more accurate for higher frequency $\omega$.
Figure~\ref{fig:NV_tave} show the $\omega$-dependence of one-cycle averages of observables
\begin{align}\label{eq:oc_ave}
	\overline{O} = \int_0^T\frac{dt}{T}\text{tr}[\rhoness(t)O]
\end{align}
for $O=S_z,S_x^2-S_y^2,S_z^2$, and \hl{$\{S_x,S_y\}=\frac{1}{2i}((S^+)^2-(S^-)^2)$ in the NESS ($S^{\pm}=S_x\pm iS_y$)}.
We note that the other observables have zero one-cycle averages for symmetry reasons, as shown in Ref.~\cite{Ikeda2020}.
We observe that the higher-order HF expansion reproduces the exact results.
To quantify the goodness of the HF expansion approach more strictly, we introduce the following measure
\begin{align}\label{eq:HSnorm}
	\delta\rho_N = \left[\int_0^T \frac{dt}{T} \| \rhoness^{(N)}(t)-\rhoness^\mathrm{exact}(t)\|^2\right]^{1/2},
\end{align}
where $\|\cdots\|$ denotes the Hilbert-Schmidt norm for matrices.
If $\delta\rho_N=0$, it follows that $\rhoness^{(N)}(t)=\rhoness^\mathrm{exact}(t)$ for any $t$.
Figure~\ref{fig:NV_HS} shows $\delta\rho$ for the orders $0,1$, and $2$ plotted against $\omega$.
We observe
\begin{align}
	\delta \rho_N \propto \frac{1}{\omega^{N+1}}
\end{align}
for high frequencies $\omega \gtrsim 2$.
This is a clear indication that the $N$-th order approximation $\rhoness^{(N)}(t)$ completely describe $\rhoness^\mathrm{exact}(t)$ up to $O(\omega^{-N})$.

\begin{figure*}
\begin{center}
	\includegraphics[width=10cm]{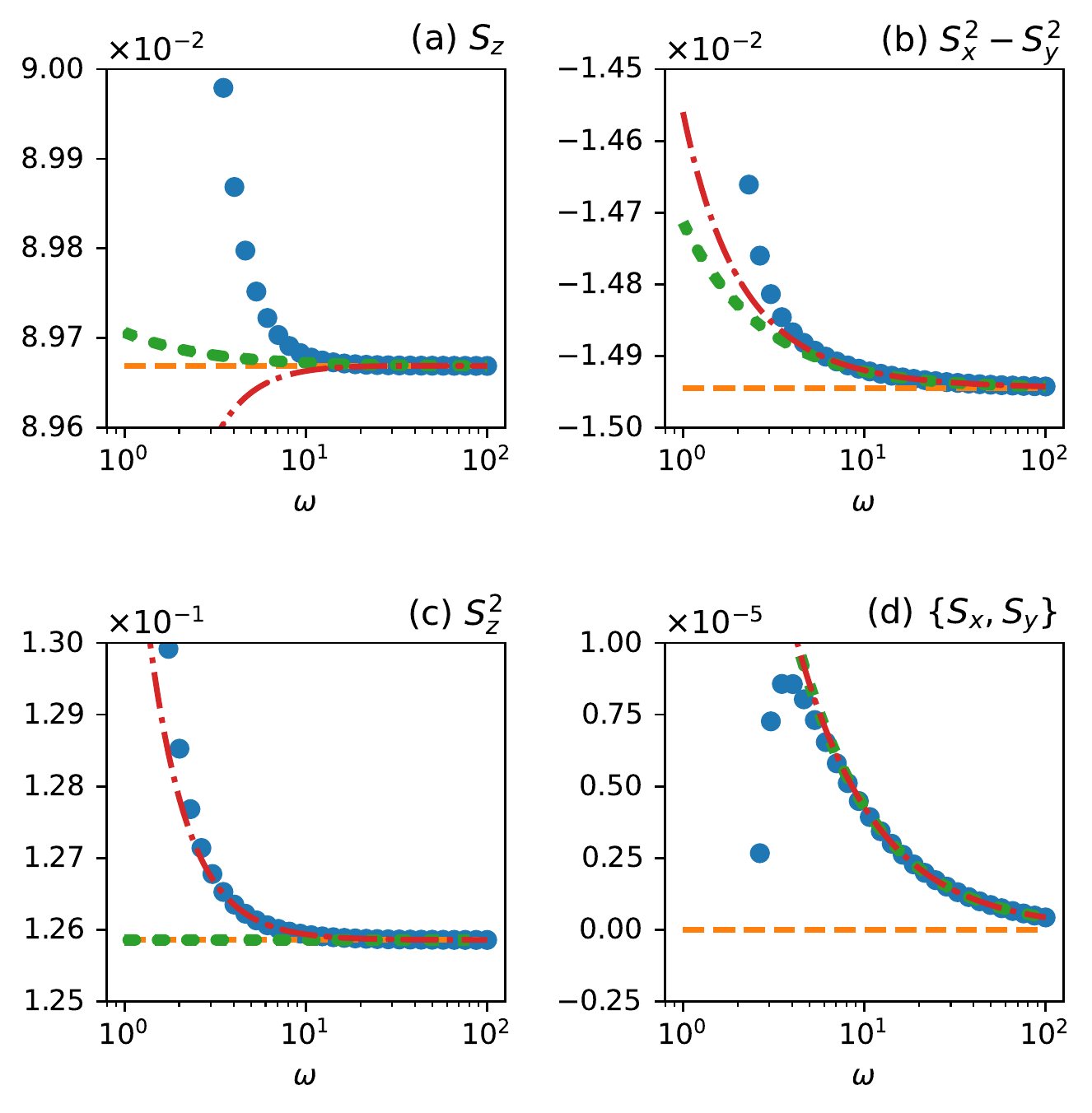}
	\caption{One-cycle averages of observables in the NESS [Eq.~\eqref{eq:oc_ave}] plotted against $\omega$. Here $\rhoness(t)$ is calculated by the exact numerical integration (blue) and by the HF expansion approach at order 0 (orange), 1 (green), and 2 (red).
	The parameters are the same as in Fig.~\ref{fig:NV_evol} except $\omega$.}
	\label{fig:NV_tave}
\end{center}	
\end{figure*}

\begin{figure*}
\begin{center}
	\includegraphics[width=10cm]{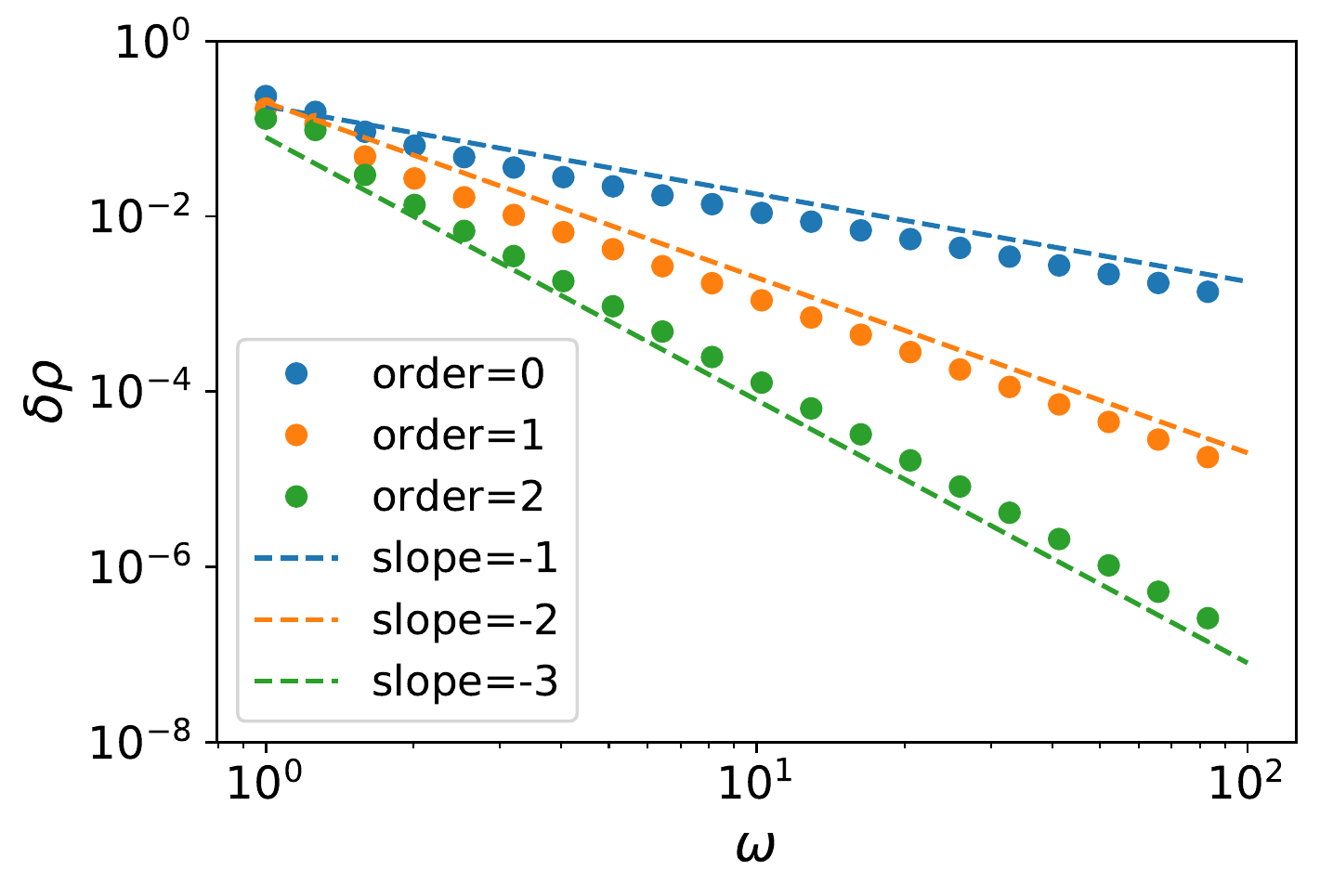}
	\caption{Difference between NESSs $\rhoness^\mathrm{exact}(t)$ and $\rhoness^{(N)}(t)$ [Eq.~\eqref{eq:HSnorm}] plotted against $\omega$.
	The data points are the results for $N=0$ (blue), $1$ (orange), and 2 (green),
	and the dashed lines are guides to the eye for some slopes (see legend).}
	\label{fig:NV_HS}
	\end{center}
\end{figure*}

Let us summarize this subsection.
We have used the HF expansion approach to calculate the NESS quantitatively.
This approach does not require numerical time integration but uses linear algebra.
In this approach, the effective Hamiltonian and micromotion are interpretable.
Although the drive-dissipation-interplay terms are difficult to interpret, they are important to reproduce the exact results quantitatively.

\tnii{As discussed at the end of Sec.~\ref{sec:NESS}, this approach is numerically efficient especially for many-body systems.
To obtain the NESS, the time integration requires many matrix-vector multiplications (i.e., applications of $\mL_t$ onto $\rho(t)$) before the NESS is reached. The number of multiplications is estimated by $\sim1/(\gamma_0 \Delta t)$, where $\gamma_0$ is the dissipation strength and $\Delta t$ is the time stepping. To obtain high-accuracy results, we have to decrease $\Delta t$ and need numerous multiplications.
On the other hand, in the HF-expansion approach, we are to find the eigenstate $\eta_{0,1}$ of $\mathcal{L}_\mathrm{eff}$ with the largest real part. For this purpose, we can use the famous Lanczos algorithm, where the required number of matrix-vector multiplications are greatly suppressed. So we can reach the NESS more efficiently with the HF-expansion approach within the HF approximation.
The efficiency also applies to calculating physical observables as discussed at the end of Sec.~\ref{sec:NESS}.}

\section{\tnii{Microscopically-derived dissipators by weak thermal contact}}\label{sec:tdep1}
\tnii{
In Sec.~\ref{sec:tindep}, we have discussed phenomenological Floquet-Lindblad equations, where the dissipators are time-independent.
However, those dissipators are only approximate, strictly speaking.
The dissipator originates from the fact that the system of interest is coupled weakly to its environment~\cite{BreuerBook}, and thus it can be modified and time-dependent when the system is driven periodically.
}

\tnii{
In this section, we discuss time-dependent dissipators
that are microscopically derived for the system-bath coupling setup.
The most well-known example is the quantum master equation obtained by the rotating wave approximation (RWA)~\cite{Blumel1991,Hone1997}.
Also, there are other time-dependent Lindblad equations that have recently been derived without the RWA~\cite{Nathan2020,Mozgunov2020,Becker2020}.
In these FLEs with or without the RWA, one can apply the HF expansion for the Liouvillian derived in Sec.~\ref{sec:Lexpansion} to the expansions for the Hamiltonian, Lamb shift, and dissipator.
Unlike time-independent ones, time-dependent dissipators have multiple Fourier components, which induce more terms in the HF expansion.
One subtlety is that the Lamb shift and dissipator cannot be Taylor-expanded for $\omega^{-1}$ in general
because the bath spectral function $\gamma(\epsilon)$ cannot be Taylor-expanded (see, e.g., Eq.~\eqref{eq:gamma_3level}) but depend on $\omega$ in it if the dissipator is affected by the drive.
Nevertheless, one can still use the HF expansion if we treat these $\omega$-dependence rigorously while HF-expand other $\omega^{-1}$ dependences (see a related discussion below in Sec.~\ref{sec:implementation}).
}

\tnii{
Throughout this section, we focus on a widely-accepted special class of FLEs: the quantum master equation obtained by the rotating wave approximation (RWA)~\cite{Blumel1991,Hone1997}.
As we will see below, these FLEs have the special property that the time-dependence can be eliminated by an appropriate unitary transformation.
Thus, we will derive a different HF-expansion-based approach to analyze the NESS utilizing this useful special property.
This new approach is different from the previous one derived in Secs.~\ref{sec:Lexpansion} and \ref{sec:NESS}, but one could also use the previous one to the examples in this section.
}

\subsection{\tnii{Floquet-Lindblad equation obtained by RWA}}\label{sec:tdepRWA}

\subsubsection{\tnii{Floquet-Lindblad equation and its characteristic properties}}\label{sec:derivationRWA}
\tnii{Leaving its derivation in Appendix~\ref{app:FLEderivation}, we here summarize the FLE obtained microscopically with the RWA.}
Suppose that we have a system-bath coupled Hamiltonian, $H_\mathrm{tot}(t) = H_S(t) + H_B+ \hsb$,
where $H_S(t+T)=H_S(t)$ is for the periodically driven system of interest, $H_B$ for the heat bath (reservoir), and $\hsb=\sum_\alpha A_\alpha\otimes B_\alpha$ for the system-bath coupling.
The bath is characterized by the correlator,
\begin{align}
	\Gamma_{\alpha\beta}(\omega)
	=\int_0^\infty ds \,\ee^{\ii\omega s}\langle B_\alpha(s)B_\beta(0)\rangle
	= \frac{1}{2}\gamma_{\alpha\beta}(\omega)+\ii S_{\alpha\beta}(\omega),
\end{align}
where $B_\alpha(t)=e^{iH_B t}B_\alpha e^{-iH_B t}$,
and $\gamma_{\alpha\beta}(\epsilon)$ and $S_{\alpha\beta}(\epsilon)$ are Hermitian matrices.
We assume that the bath is in thermal equilibrium $\rho_B\propto e^{-\beta H_B}$ at inverse temperature $\beta$,
\tnii{which implies the Kubo-Martin-Schwinger (KMS) condition}
\begin{align}\label{eq:KMS}
	\gamma_{\alpha\beta} (-\omega) = e^{-\beta\omega}\gamma_{\beta\alpha}(\omega).
\end{align}
The system-bath coupling acts on the system through $A_\alpha$ whose natural basis is the Floquet eigenstates:
\begin{align}
	i\frac{d}{dt}\ket{\psi_m(t)} = H_S(t)\ket{\psi_m(t)};
	\quad \ket{\psi_m(t)} = e^{-\ii \epsilon_m t}\ket{u_m(t)};
	\quad \ket{u_m(t+T)} = \ket{u_m(t)},
\end{align}
where $\epsilon_m$ is called the quasienergy.

\tnii{
Under appropriate Born and Markov approximations~\cite{Blumel1991,Hone1997} (see also Appendix~\ref{app:FLEderivation}), we obtain the Floquet-Lindblad equation with the Lindbladian
\begin{align}
	\mL_t(\rho) = -\ii [H_S(t)+\lamb(t),\rho ] +\mD_t(\rho)
\end{align}
with the Lamb shift $\lamb(t)$ and the dissipator $\mD_t$ given by
\begin{align}
	\lamb(t) &= \sum_{\alpha,\beta,\epsilon}S_{\alpha\beta}(\epsilon)A^{\alpha\dag}_\epsilon(t) A^\beta_\epsilon(t),\label{eq:Lambdef}\\
	\mD_t(\rho) &= \sum_{\alpha,\beta,\epsilon} \gamma_{\alpha\beta}(\epsilon)\left[
		A^\beta_\epsilon(t) \rho A^{\alpha\dag}_\epsilon(t)
		-\frac{1}{2}\left\{ A^{\alpha\dag}_\epsilon(t) A^{\beta}_\epsilon(t),\rho  \right\}\right].\label{eq:Dtdep}
\end{align}
Here, the jump operators are given by
\begin{align}\label{eq:jumpA}
	A^\alpha_\epsilon(t) = e^{-i\epsilon t}\sum_{m,n} \mA^\alpha_{mn}(\epsilon)\ket{\psi_m(t)}\bra{\psi_n(t)},
\end{align}
where the matrix elements $\mA^\alpha_{mn}(\epsilon)$ are defined by the Fourier expansion
\begin{align}\label{eq:AFourier}
\braket{\psi_m(t)| A_\alpha|\psi_n(t)}
=\sum_{\epsilon} \mA^\alpha_{mn}(\epsilon)\ee^{-\ii \epsilon t}.
\end{align}
and periodic in time as $A_\epsilon^\alpha(t)$ are periodic.
Since $\braket{\psi_m(t)| A_\alpha|\psi_n(t)}=e^{i(\epsilon_m-\epsilon_n)t}\braket{u_m(t)|A_\alpha|u_n(t)}$
and $\ket{u_n(t+T)}=\ket{u(t)}$, the sums over $\epsilon$ in the above equations are taken for
\begin{align}
	\epsilon=\epsilon_{nm;k}\equiv \epsilon_n-\epsilon_m+k\omega \quad (k\in\mathbb{Z}).
\end{align}
}
Thus, one can also rewrite Eq.~\eqref{eq:AFourier} as
\begin{align}\label{eq:AFourier_mn}
\braket{\psi_m(t)| A_\alpha|\psi_n(t)}
=\sum_{k} \mA^\alpha_{mn}(\epsilon_n-\epsilon_m+k\omega)\ee^{-\ii (\epsilon_n-\epsilon_m+k\omega) t}
\equiv\sum_{k} \mA^\alpha_{mn;k}\ee^{-\ii (\epsilon_n-\epsilon_m+k\omega) t}.
\end{align}
As shown in Sec.~\ref{app:FLEderivation}, the jump operator $A^\alpha_\epsilon(t)$ ($A^{\alpha\dag}_\epsilon(t)$) lowers (raises) quasienergy by $\epsilon$.

One remarkable property of this FLE is the existence of a reference frame in which the Lindbladian is time-independent.
This frame is the interaction picture as is evident in the derivation of the FLE (see Appendix~\ref{app:FLEderivation}).
Also, as shown in Appendix~\ref{app:FLEderivation}, the Lamb shift can be written, without loss of generality, as $\lamb(t) = \sum_n \lambda_n \ket{\psi_n(t)}\bra{\psi_n(t)}$, where $\lambda_n$ are the (real) eigenvalues of the Hermitian matrix $\sum_{\alpha,\beta,\epsilon,l}S_{\alpha\beta}(\epsilon)\mA_{ml}^{\alpha*}(\epsilon)\mA_{ln}^\beta(\epsilon)$.
Using this expression for $\lamb(t)$ and writing $\rho(t)$ as
\begin{align}
	\rho(t) = \sum_{m,n}\ket{\psi_m(t)}\sigma_{mn}(t)\bra{\psi_n(t)},
\end{align}
we have the Lindblad equation for the matrix $\sigma(t)$ (in the interaction picture):
\begin{align}
	\frac{d\sigma(t)}{dt}=-i[\lambda,\sigma]
	+\sum_{\alpha,\beta,\epsilon}\gamma_{\alpha\beta}(\epsilon)\left[ \mA^\beta(\epsilon)\sigma(t)\mA^{\alpha\dag}(\epsilon)
	-\frac{1}{2}\left\{ \mA^{\alpha\dag}(\epsilon) \mA^{\beta}(\epsilon),\sigma(t)  \right\} \right],
	\label{eq:FLEsigmat}
\end{align}
where $(\lambda)_{mn}\equiv \delta_{mn}\lambda_m$.
Thus, written for $\sigma(t)$, the Lindbladian is time-independent.
This is a special property of the FLE of the RWA and not shared with other FLEs in general.

In the rest of this subsection~\ref{sec:derivationRWA}, we assume that the quasienergies are not degenerate
and analyze the NESS using the FLE in the time-independent frame~\eqref{eq:FLEsigmat}.
Then, the diagonal elements, $P_n(t)\equiv \sigma_{nn}(t)$, form a closed set of differential equations (see Appendix~\ref{app:offdiag} for the derivation)
\begin{align}
\frac{dP_n(t)}{dt} = \sum_m \left[ W_{nm}P_m(t) - W_{mn} P_n(t) \right] \label{eq:cme}
\end{align}
with
\begin{align}
	W_{nm} &\equiv \sum_{\alpha,\beta,\epsilon} \gamma_{\alpha\beta}(\epsilon) [\mA^\beta(\epsilon)]_{nm}[\mA^{\alpha\dag}(\epsilon)]_{mn}
	=\sum_{\alpha,\beta,\epsilon} \gamma_{\alpha\beta}(\epsilon)\mA_{nm}^{\alpha*}(\epsilon) \mA_{nm}^\beta(\epsilon)\label{eq:defW} \\
	&=\sum_{\alpha,\beta,k} \gamma_{\alpha\beta}(\epsilon_m-\epsilon_n-k\omega)\mA_{nm;k}^{\alpha*} \mA_{nm;k}^\beta\label{eq:defW2}
\end{align}
where we have used Eq.~\eqref{eq:AFourier_mn} to obtain the last line. As discussed below, the off-diagonal elements, forming another closed set of equations, do not contribute to the NESS in most cases, and we ignore them at this moment (see Appendix~\ref{app:offdiag} for detail).
Since $\gamma(\epsilon)$ is a positive Hermitian matrix, $W_{nm}$ is nonnegative for any $m$ and $n$
and hence can be regarded as the transition rates from the Floquet state $\ket{\psi_m(t)}$ to another $\ket{\psi_n(t)}$.
Equation~\eqref{eq:cme} dictates that the Floquet-state population $P_n(t)$ obeys the classical master equation with the transition rates $W_{mn}$.

\begin{figure*}
\begin{center}
	\includegraphics[angle=0,width=10cm]{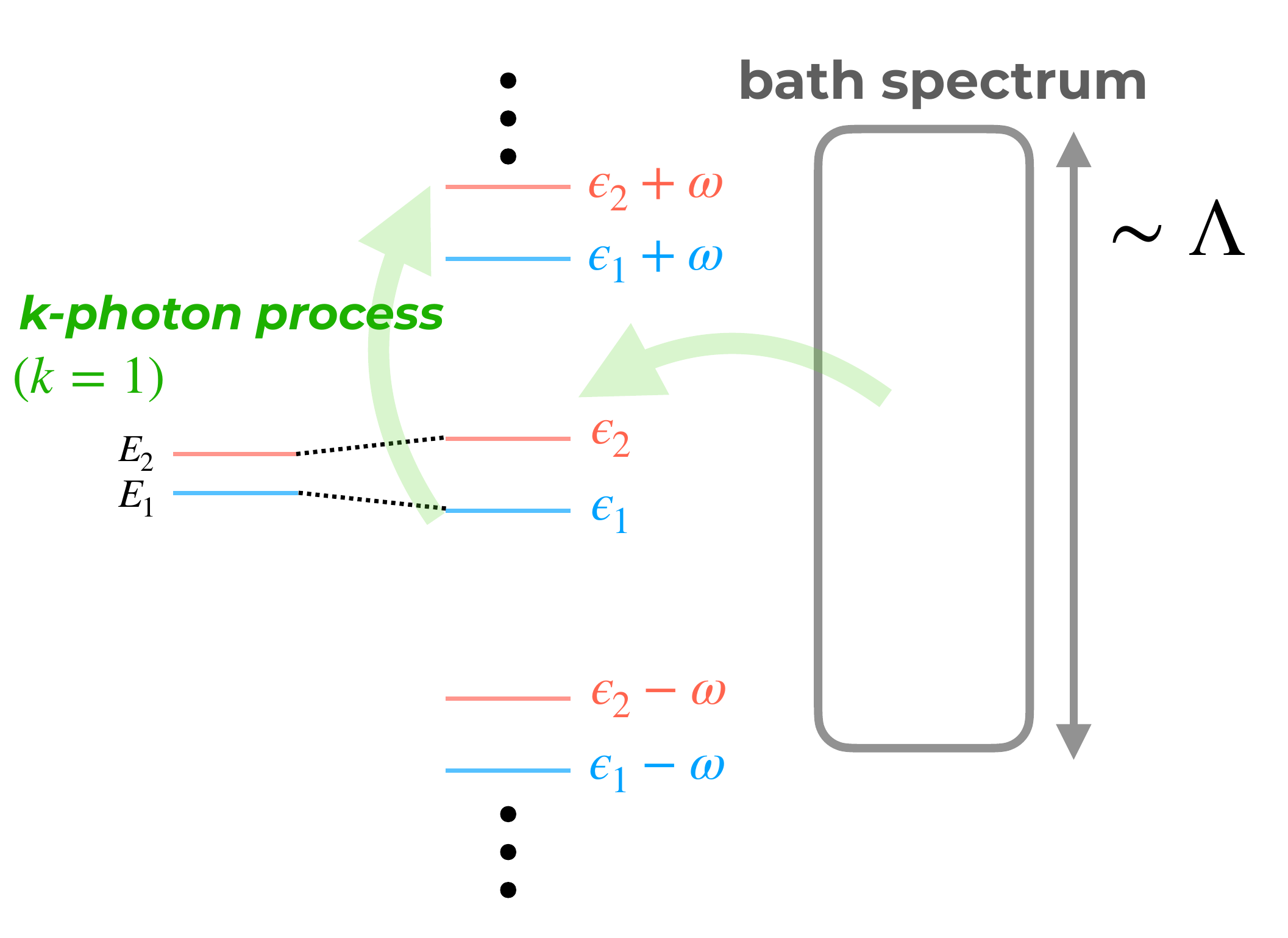}
	\caption{\tnii{Schematic illustration of the $k$-photon process for $k=1$ in a periodically driven two-level system.
	The curved arrow indicates a dissipation-induced transition between the Floquet states $m=1$ and $n=2$ entailing the quasienergy change $\epsilon_2+\omega-\epsilon_1$. Upon this process, the bath's energy changes by $\epsilon_1-\epsilon_2-\omega$, which compensates the system's quasienergy change.
	}}
	\label{fig:diagram}
\end{center}
\end{figure*}

Here we make a physical interpretation of the transition rate $W_{nm}$.
Equation~\eqref{eq:defW2} dictates that the transition rate $W_{nm}$ consists of contributions labeled by integer $k$.
For each $k$, in addition to the transition amplitude $\mA^\beta_{nm;k}$, there appears the bath's spectral weight $\gamma_{\alpha\beta}(\epsilon_m-\epsilon_n-k\omega)$.
For $k=0$, the argument of the spectral weight is
$\epsilon_m-\epsilon_n$ corresponding to the quasienergy difference between the initial $\ket{\psi_m(t)}$ and final $\ket{\psi_n(t)}$ Floquet states in the transition.
For $k\neq0$, the argument involves an extra energy $k\omega$, which corresponds to the energy of $k$ photons.
From these observations, we interpret that $k$ in Eq.~\eqref{eq:defW2} stands for the number of photons involved in the Floquet-state transitions, and the quasienergy difference and the photon energy are exchanged between the system and bath \tnii{(see Fig.~\ref{fig:diagram} for illustration)}.
Thus, in the following, we will refer to each $k$ contribution in the sum of Eq.~\eqref{eq:defW2} as the $k$-photon process.

To study the NESS, we assume that $W_{mn}$ is irreducible so that the classical master equation~\eqref{eq:cme} has the unique steady-state solution $P_n^\mathrm{ss}$,
which $P_n(t)$ approaches as $t\to\infty$.
This solution is characterized by Eq.~\eqref{eq:cme} as
\begin{align}
	\sum_m \left( W_{nm}P_m^\mathrm{ss} - W_{mn} P_n^\mathrm{ss} \right)=0,
\end{align}
which is equivalent to
\begin{align}\label{eq:ssdet}
	\sum_m S_{nm}P_m^\mathrm{ss} = 0,\qquad
	S_{nm}\equiv W_{nm} -\delta_{mn}\sum_m W_{mn}.
\end{align}
Equation~\eqref{eq:ssdet} means that $P_m^\mathrm{ss}$ is the zero-eigenvalue eigenvector for the matrix $S_{mn}$.
Let us also assume that the NESS density matrix is unique\footnote{This assumption corresponds to the one that ensures the uniqueness of the NESS eigenvector $\eta$ mentioned in Sec.~\ref{sec:NESS}.}, in which case all the off-diagonal elements $\sigma_{m\neq n}$ are guaranteed to vanish in the NESS (see Appendix~\ref{app:offdiag} for the proof).
Once having $P_n^\mathrm{ss}$, we obtain the NESS density matrix as
\begin{align}\label{eq:nessRWA}
	\rhoness(t) = \sum_n P_n^\mathrm{ss} \ket{\psi_n(t)}\bra{\psi_n(t)}.
\end{align}
It is noteworthy that $P_n^\mathrm{ss}$ and hence $\rhoness(t)$ are independent of the system-bath-coupling strength.
Suppose that we multiply a common real number $c$ onto $A_\alpha$ (or $B_\alpha$).
This results in the change $W_{nm}\to c^2 W_{nm}$.
However, this change does not affect Eq.~\eqref{eq:ssdet}, meaning that $P_n^\mathrm{ss}$ is independent of $c$.
This independence is a special property of the FLE obtained by the RWA and does not hold in general FLEs such as time-independent ones discussed in Sec.~\ref{sec:tindep}.

\subsubsection{Conditions for Floquet-Gibbs state (FGS)}\label{sec:fgs}
Before analyzing concrete examples, we generally study whether the NESS coincides with the so-called Floquet-Gibbs state (FGS)
\begin{align}
	\rhofg(t) &= \sum_n P^\mathrm{FG}_n \ket{\psi_n(t)}\bra{\psi_n(t)};\qquad
	 P^\mathrm{FG}_n\equiv \frac{1}{Z}\sum_{n}e^{-\beta \epsilon_n};\qquad
	 Z\equiv \sum_n e^{-\beta \epsilon_n} \label{eq:fgs}
\end{align}
for high-frequency drives.
The difference between the exact NESS~\eqref{eq:nessRWA} and FGS~\eqref{eq:fgs} is only the Floquet-state population: While $P_n^\mathrm{ss}$ are determined microscopically by $W_{mn}$ as in Eq.~\eqref{eq:ssdet}, $P_n^\mathrm{FG}$ are the simple canonical distribution for the quasienergies.
\tni{As discussed above, the quasienergies $\epsilon_m$ are defined modulo $\omega$, and the FGS is ill-defined in general.
However, when $\omega$ is much larger than the system's energy scale, we have a natural choice in that $-\omega/2 < \epsilon_n <\omega/2$ and $\epsilon_n$ approaches, in the limit of $\omega\to\infty$, each eigenenergy of the time-averaged Hamiltonian.
When we discuss the FGS in this paper, we implicitly assume those situations.}

The FGS well approximates the NESS if the driving frequency $\omega$ is not only larger than the system's energy scale but also than the high-frequency cutoff $\Lambda$ of the bath spectral function.
\tni{For simplicity, we here discuss the case in that the quasienergies $\epsilon_m$ are not degenerate and make use of the classical master equation~\eqref{eq:cme}.
However, we can generalize the arguments to the case of degenerate quasienergies (see Appendix~\ref{app:fgs_degen} for detail).
When $\omega$ is much larger than the bath spectral cutoff $\Lambda$,}
for, e.g., the ohmic bath~\eqref{eq:gamma_3level}, we have
$\gamma(\epsilon_m-\epsilon_n+k\omega)\propto\exp\left[-\frac{(\epsilon_m-\epsilon_n+k\omega)^2}{2\Lambda^2}\right]
\approx  \exp\left(-\frac{k^2\omega^2}{2\Lambda^2}\right)\approx0$ (similar arguments hold for any other spectral functions).
Thus, Eq.~\eqref{eq:defW2} can be approximated as
\begin{align}
	W_{nm}
	\approx\sum_{\alpha,\beta} \gamma_{\alpha\beta}(\epsilon_m-\epsilon_n)\mA_{nm;0}^{\alpha*}\mA_{nm;0}^\beta.\label{eq:defW_nok}
\end{align}
Within this approximation, we obtain the detailed balance condition $W_{nm} = \ee^{\beta(\epsilon_m-\epsilon_n)} W_{mn}$ with the help of the KMS condition~\eqref{eq:KMS}. 
The detailed balance condition means that the FGS population $P^\mathrm{FG}_n = \ee^{-\beta \epsilon_n}/Z$ satisfies Eq.~\eqref{eq:ssdet}, meaning that the FGS~\eqref{eq:fgs} is the NESS $\rhoness(t) \approx \rhofg(t)$.

Physically speaking, all $k$-photon processes except $k=0$ are negligible in this limiting case of $\omega\gg\Lambda$.
As discussed in Sec.~\ref{sec:derivationRWA}, for a $k$-photon process to occur, the accompanied photon energy $k\omega$ has to be compensated by the bath.
However, if the bath spectral cutoff $\Lambda$ is much smaller than the photon energy $\omega$, this compensation is impossible.

However, in more realistic situations with $\omega\lesssim\Lambda$, $k$-photon $(k\neq0)$ processes can be relevant.
As one can check easily, if we do not ignore $k\neq0$ terms in $W_{mn}$~\eqref{eq:defW2}, the detailed balance condition is not satisfied, and the FGS is not necessarily the NESS except for some special cases~\cite{Shirai2016}.

\subsubsection{Implementation of high-frequency expansion}\label{sec:implementation}
Let us now discuss how we use the HF expansion approach to obtain the NESS~\eqref{eq:nessRWA}.
As the standard HF expansion applies to $\ket{\psi_n(t)}$, we consider how to calculate $P_n^\mathrm{ss}$.
Since $P_n^\mathrm{ss}$ are given as the zero-eigenvalue eigenvector for $S_{mn}$ as in Eq.~\eqref{eq:ssdet}, it is sufficient to have $W_{mn}$ by the HF expansion.

First, we consider the case of $\omega \gg \Lambda$.
In this case, as we have discussed in Sec.~\ref{sec:fgs}, we can ignore $k\neq0$ terms in Eq.~\eqref{eq:defW2}, having Eq.~\eqref{eq:defW_nok}.
Once we insert the standard $\omega^{-1}$ expansion for the quasienergies $\epsilon_m$ and $A^\alpha_{nm}$, we obtain the HF expansion for $W_{nm}$ and hence $P_n^\mathrm{ss}$.
As shown in Sec.~\ref{sec:fgs}, the first-order approximation in this method leads to the FGS as the NESS, and higher-order calculations give more accurate NESSs.

Second, we consider the other case $\omega\lesssim \Lambda$.
In contrast to the first case, in Eq.~\eqref{eq:defW2}, $k$'s with $|k|\omega \lesssim \Lambda$ give nonnegligible contributions to $W_{mn}$ that cannot be expanded for $\omega^{-1}$ in general.
Thus, we have
\begin{align}
	W_{nm}
	\approx \sum_{\alpha,\beta}\sum_{k=-k_c}^{k_c} \gamma_{\alpha\beta}(\epsilon_m-\epsilon_n+k\omega)\mA_{nm}^{\alpha*}(\epsilon_m-\epsilon_n+k\omega) \mA_{nm}^\beta(\epsilon_m-\epsilon_n+k\omega),\label{eq:defW_somek}
\end{align}
where $k_c>0$ is a cutoff for the photon number $k$ and depends on the accuracy which we require for the results.
Within the approximation~\eqref{eq:defW_somek}, we can use the HF expansion for the quasienergies $\epsilon_m$ and $A^\alpha_{nm}$, obtaining a good approximation for $W_{nm}$ and hence $P_n^\mathrm{ss}$.

In the following sections, we demonstrate that this method quantitatively works in concrete models and show that the FGS may not be the NESS in some cases.

\subsection{Example 3: Three-level system revisited \hl{and deviation from FGS}}\label{sec:NVtdep}
Now we study concrete example models and their NESSs for time-dependent dissipators.
As a first example, we revisit the three-level system studied in Sec.~\ref{sec:3level_tindep}, making the dissipator time-dependent within the RWA. For simplicity, we neglect the Lamb shift.
In the absence of nematic terms $\Nz$ and $\Nxy$,
this model has recently been analyzed for a general spin $S$,
and the environment-controlled Floquet-state paramagnetism has been proposed~\cite{Diermann2020}.

Following Eq.~\eqref{eq:Dtdep}, we write down the time-dependent dissipator.
For concrete calculations, we use $\ket{u_m(t)}$ instead of $\ket{\psi_m(t)}$.
In doing so, we choose $\epsilon_m$ so that $\epsilon_{m}$ satisfy $E_m-\Omega/2\le \epsilon_m <E_m+\Omega/2$ for $m=1,2,$ and 3.
Note that this choice is possible only for high-enough frequencies that we consider here.
We also define the set of possible values for quasienergy differences as
\begin{align}
	 S_{\delta\epsilon} = \{\mathcal{E}= \epsilon_n - \epsilon_m \mid 1\le m,n\le 3\}.
\end{align}
Note that this set has 7 elements since $\epsilon_m$ are not degenerate in our model.
Then the dissipator is given by
\begin{align}
	\mD_t(\rho) &= \sum_{\mathcal{E}\in S_{\delta\epsilon}}\sum_{k=-\infty}^\infty \gamma(\mathcal{E}+k\omega)\left[
		A_{\mathcal{E},k}(t) \rho A^\dag_{\mathcal{E},k}(t)
		-\frac{1}{2}\left\{ A^{\dag}_{\mathcal{E},k}(t) A_{\mathcal{E},k}(t),\rho  \right\}\right],
\end{align}
where
\begin{align}\label{eq:jumpA_3level}
	&A_{\mathcal{E},k}(t) = e^{ik\omega t}\sum_{\substack{m,n \\ (\epsilon_n-\epsilon_m=\mathcal{E})}} \mA_{mn;k}\ket{u_m(t)}\bra{u_n(t)},\\
	&\mA_{mn;k} = \int_0^T \frac{\dd t}{T}e^{ik\omega t}\braket{u_m(t)| S_x|u_n(t)}.
\end{align}
As remarked in Sec.~\ref{sec:tdepRWA}, we can utilize, for this class of problems, a convenient frame in which the Lindbladian is time-independent.
Thus, we make use of it and consider Eq.~\eqref{eq:defW} that leads to
\begin{align}
	W_{mn} =\sum_{k=-\infty}^\infty \gamma(\epsilon_n-\epsilon_m+k\omega) |\mA_{mn;k}|^2,\label{eq:Wmn3level}
\end{align}
which gives the Floquet-state population $P_n^\mathrm{ss}$ by Eq.~\eqref{eq:nessRWA}.

\tni{
Our aim here is to show how well the HF-expansion approach formulated in Sec.~\ref{sec:implementation} gives a systematic approximation for $\rhoness(t)$.
Let us suppose that we try to obtain $P_n^\mathrm{ss}$ up to $O(\omega^{-N})$ for an $N$ $(\geq0)$.
For this, we invoke the HF expansion for isolated systems to have the Floquet states $\ket{u_n(t)}$ and their quasienergies $\epsilon_m$ up to this order, which give $\mA_{mn;k}$ up to the desired order $O(\omega^{-N})$.
More concretely, we calculate $\Heff$ and $\mm(t)$ up to $O(\omega^{-N})$ by the van Vleck expansion~\eqref{eq:ansatziso}, which give $\ket{u_n(t)}=e^{-i\mm(t)}\ket{\Eeff_n}$ and $\epsilon_n = \Eeff_n$, where $\Heff \ket{\Eeff_n}=\Eeff_n\ket{\Eeff_n}$.
Although these calculations provide appropriate approximations for each term in the summation of Eq.~\eqref{eq:Wmn3level}, we need to take a large-enough cutoff $k_c$ for
\begin{align}
	W_{mn} \approx \sum_{k=-k_c}^{k_c} \gamma(\epsilon_n-\epsilon_m+k\omega) |\mA_{mn;k}|^2 \label{eq:Wmn3levelapp}
\end{align}
so that the truncation error can be negligible in our $O(\omega^{-N})$ calculation.
For the present model, the appropriate choice is $k_c=\lfloor(N+1)/2\rfloor$ ($\lfloor \cdots \rfloor$ denotes the floor function) because of the following reason.
As one can check easily in our model, $|\mA_{mn;k}|^2=O(\omega^{-2|k|})$, whereas $\gamma(\epsilon_n-\epsilon_m+k\omega)$ is at most $O(\omega^1)$ for $k\neq0$, and their product is of $O(\omega^{-2|k|+1})$.
Thus, once we choose $k_c=\lfloor(N+1)/2\rfloor$, the truncation error (i.e., the difference between Eqs.~\eqref{eq:Wmn3level} and \eqref{eq:Wmn3levelapp}) is negligible in our $O(\omega^{-N})$ calculation.
Once we have $W_{mn}$, we obtain the steady-state population $P_n^\mathrm{ss}$ from Eq.~\eqref{eq:ssdet} and $\rhoness(t)$ by Eq.~\eqref{eq:nessRWA}.
}

\begin{figure*}
	\includegraphics[width=\columnwidth]{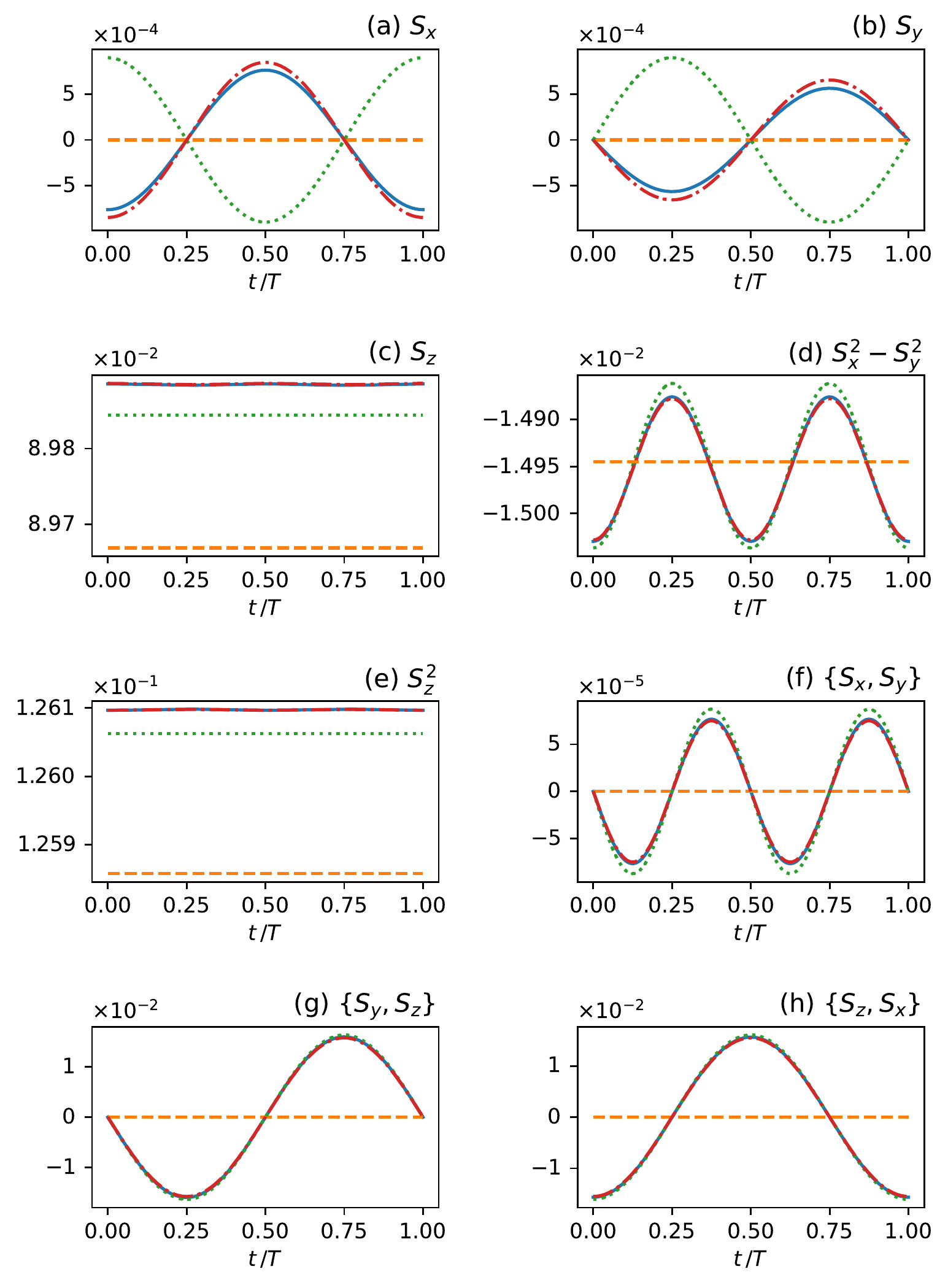}
	\caption{NESS in one cycle calculated by the exact numerical integration (blue) and by the HF expansion approach at order $N=0$ (orange), 1 (green), and 2 (red). All the independent 8 observables are plotted in each panel. \tni{For each $N$,
	 we expand $\Heff$ and $\mm(t)$ up to $O(\omega^{-N})$ and set $k_c=\lfloor(N+1)/2\rfloor$.}
	The parameters are chosen as $B_s=0.3$, $\Nz=1$, $\Nxy=0.05$, $B_d=0.1$, $\omega=10$, $\beta=3$, \tnii{$\gamma_0=0.2$,} and $\Lambda=10$.}
	\label{fig:NV_evol_tdep}
\end{figure*}

\begin{figure*}
	\includegraphics[width=\columnwidth]{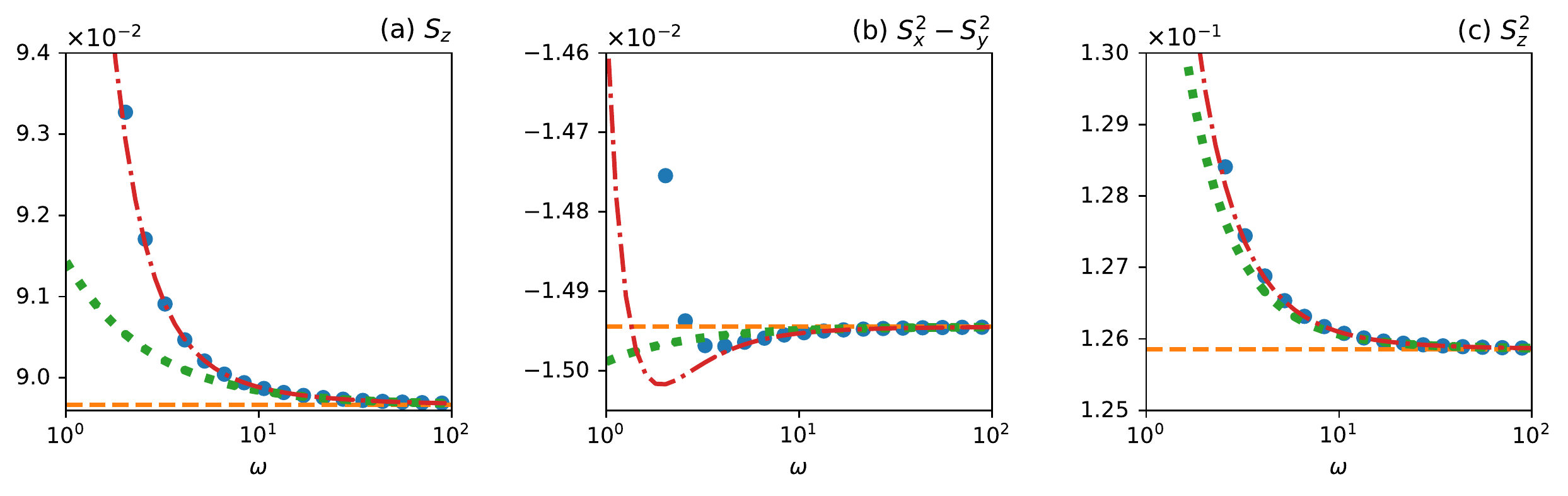}
	\caption{One-cycle averages of observables in the NESS [Eq.~\eqref{eq:nessRWA}] plotted against $\omega$. Here $\rhoness(t)$ is calculated by the exact numerical integration (blue) and by the HF expansion approach at order $N=0$ (orange), 1 (green), and 2 (red). \tni{For each $N$,
	 we expand $\Heff$ and $\mm(t)$ up to $O(\omega^{-N})$ and set $k_c=\lfloor(N+1)/2\rfloor$.}
	The parameters are the same as in Fig.~\ref{fig:NV_evol_tdep} except $\omega$.}
	\label{fig:NV_tave_tdep}
\end{figure*}

\begin{figure*}
\begin{center}
	\includegraphics[width=15cm]{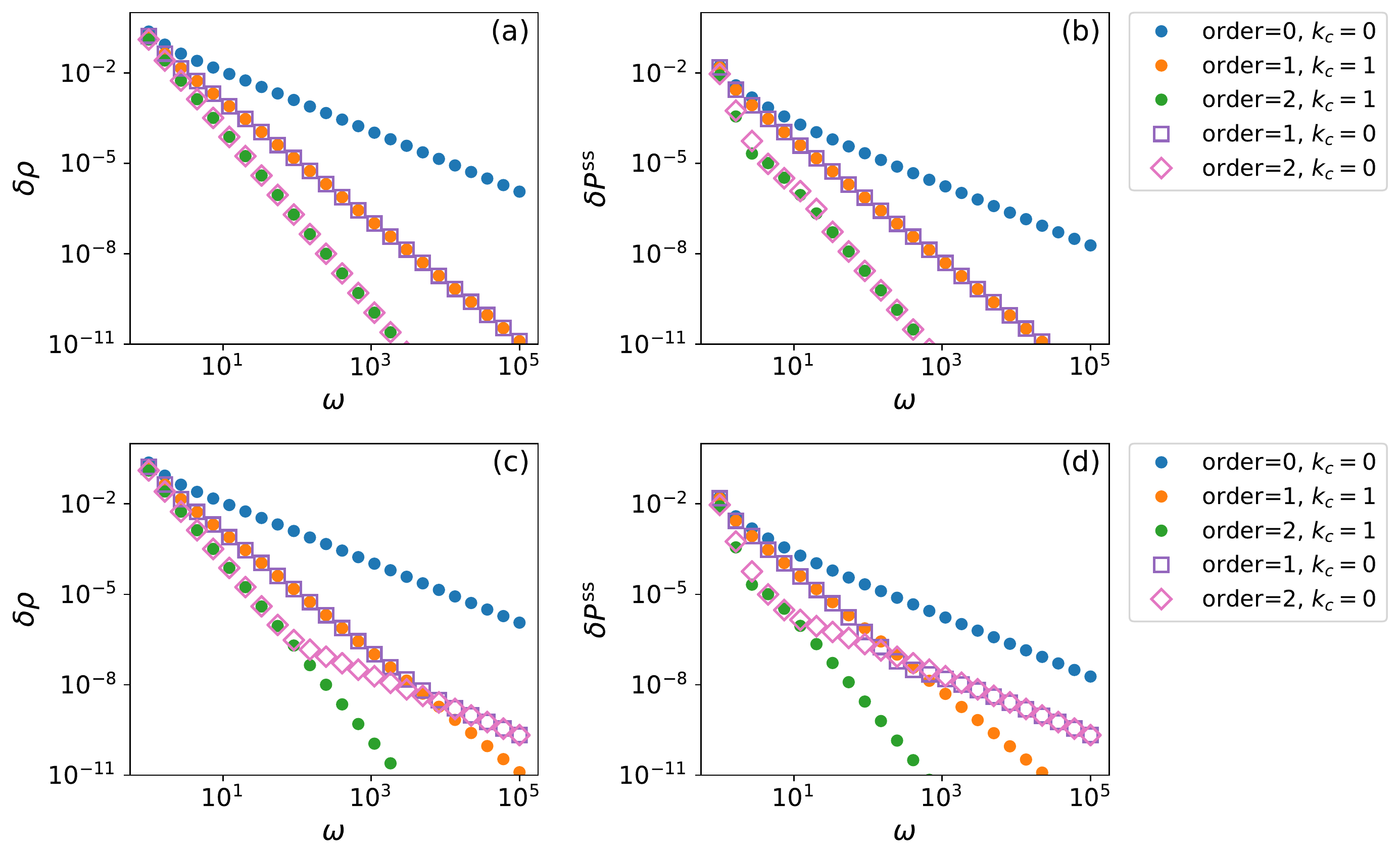}
	\caption{Accuracy of HF expansion approach for the NESS density matrix $\delta \rho_N$~\eqref{eq:HSnorm} (left panels) and the Floquet-state population $\delta P^\mathrm{ss}_N$~\eqref{eq:deltaP} (right panels) plotted against $\omega$ for the bath spectral cutoffs $\Lambda=10$ (upper panels) and $10^6$ (lower panels).
	Each data set corresponds to a pair of $(N,k_c)$ as shown in legend,
	\tni{for which we use $\Heff$ and $\mm(t)$ expanded up to $O(\omega^{-N})$.}
	The other parameters are chosen as $B_s=0.3$, $\Nz=1$, $\Nxy=0.05$, $B_d=0.1$, $\omega=10$, and $\beta=3$.}
	\label{fig:NV_HS_tdep}
	\end{center}
\end{figure*}

The one-cycle evolutions for the NESS calculated in this way and in the exact numerical time integration are plotted in Fig.~\ref{fig:NV_evol_tdep}.
As in Sec.~\ref{sec:3level_tindep}, we use the bath parameters as $\beta=3$, and $\Lambda=10$
(the NESS does not depend on $\gamma_0$ in the present case).
For the HF approach, we plot the results for the zeroth, first, and second orders, \tni{which are obtained by expanding $\Heff$ and $\mm(t)$ up to $O(\omega^{-N})$ for $N=0,1$, and 2, respectively.}
\tnii{It is worth noting the similarity between Figs.~\ref{fig:NV_evol_tdep} and \ref{fig:NV_evol}.
This means that the approximate bath model in Sec.~\ref{sec:3level_tindep} is quantitatively good for high-frequency drives.
}

Similarly to the time-independent dissipator studied in Sec.~\ref{sec:3level_tindep},
at the zeroth order, the NESS corresponds to the thermal equilibrium showing no time dependence.
As we increase the order, the HF result tends to approach the exact numerical integration,
which we verify quantitatively below.
We note that there are some period-$T/2$ (or the second harmonic) oscillations in, e.g., $S_x^2-S_y^2$ in Fig.~\ref{fig:NV_evol}.
These oscillations cannot be taken up to the first order since the micromotion $\mm^{(1)}(t)$ contains only frequency $\omega$.
At the second order, $\rhoness^{(N=2)}(t)=e^{\mG_t}\eta=(1+\mG_t+\frac{1}{2}\mG_t^2)\eta$,
where $\frac{1}{2}\mG_t^2\eta$ involves the frequency $2\omega$.

The HF expansion approach becomes more accurate for higher frequency $\omega$.
Figure~\ref{fig:NV_tave_tdep} shows the $\omega$-dependence of one-cycle averages~\eqref{eq:oc_ave} of observables $O=S_z,S_x^2-S_y^2$, and $S_z^2$ in the NESS as the other observables give vanishing averages due to symmetry reasons.
For a more strict measure of the accuracy of the HF approach, we again consider $\delta\rho_N$ defined in Eq.~\eqref{eq:HSnorm}.
Figure~\ref{fig:NV_HS_tdep}(a) shows $\delta\rho$ for the orders $0,1$, and $2$ plotted against $\omega$.
We here observe $\delta \rho_N \propto \omega^{-(N+1)}$ for high frequencies.
This is a clear indication that the $N$-th order approximation $\rhoness^{(N)}(t)$ completely describe $\rhoness^\mathrm{exact}(t)$ up to $O(\omega^{-N})$.

We remark the qualitative difference between the time-independent and time-dependent dissipators found in the one-cycle average in \hl{a quadrupolar (spin-nematic) operator  $\{S_x,S_y\}=\frac{1}{2i}((S^+)^2-(S^-)^2)$~\cite{Shanon06,Zhitomirsky10,Sato13,PencBook}}: While it does not vanish for the time-dependent dissipator, it does for the time-independent one.
As shown in Ref.~\cite{Ikeda2020}, one can understand this difference by the following antiunitary symmetry operator $V$: $VS_yV^\dag=-S_y$ and $VS_\alpha V^\dag =S_\alpha$ ($\alpha=x$ and $z$).
One can easily check that our Hamiltonian satisfies the following dynamical symmetry ${H}_\text{NV}(t)=V{H}_\text{NV}(T-t)V^\dag$. This means that $\ket{\tilde{u}_n(t)}\equiv V\ket{u_n(T-t)}$ is also a Floquet state having the same quasienergy with $\ket{u_n(t)}$.
If there is no degeneracy in quasienergies as is the case of our model, $\ket{\tilde{u}_n(t)}$ and $\ket{u_n(t)}$ are the same state: there exists a phase factor $e^{i\theta_n}$ ($\theta_n\in\mathbb{R}$) such that $\ket{\tilde{u}_n(t)}=e^{i\theta_n}\ket{u_n(t)}$.
Noticing $V^\dag \{S_x,S_y\}V = -\{S_x,S_y\}$, we have
\begin{align}
	\overline{\braket{u_n(t)|\{S_x,S_y\}|u_n(t)}}
	&=\overline{\braket{\tilde{u}_n(t)|\{S_x,S_y\}|\tilde{u}_n(t)}}\\
	&=-\overline{\braket{u_n(T-t)|\{S_x,S_y\}|u_n(T-t)}}\\
	&=-\overline{\braket{u_n(t)|\{S_x,S_y\}|u_n(t)}},
\end{align}
which means $\overline{\braket{u_n(t)|\{S_x,S_y\}|u_n(t)}}$=0.
Therefore, from the definition of $\rhoness(t)$ in Eq.~\eqref{eq:nessRWA}, we have $\overline{\{S_x,S_y\}}=\overline{\text{tr}[\rhoness(t)\{S_x,S_y\}]}=\sum_n P_n^\mathrm{ss}\overline{\braket{u_n(t)|\{S_x,S_y\}|u_n(t)}}=0$.
In words, the Hamiltonian's antiunitary symmetry is not broken by the dissipation if it is treated within the RWA.
This is a special property of this class of FLEs that become time-independent in the interaction picture.
In general, antiunitary symmetries can be broken as in the time-independent dissipator that we analyze in Sec.~\ref{sec:3level_tindep}.

Let us discuss the cutoff $\Lambda$ of the bath spectral function.
We have thus far set $\Lambda=10$ and confirmed that the HF approach works well for $\omega\gtrsim\Lambda=10$.
For $\omega\gtrsim \Lambda$, the subtlety about the cutoff $k_c$ discussed in Sec.~\ref{sec:implementation} has not been problematic, and $k_c$ has not played important roles.
In fact, in Fig.~\ref{fig:NV_HS_tdep}(a), we also plot $\delta\rho_N$ for $N=1$ and $2$ with $k_c=0$ that turn out to coincide with those with $k_c=1$.
As another measure of the HF approach's accuracy, we introduce
\begin{align}
	\delta P^\mathrm{ss}_N \equiv \sqrt{\sum_{n=1}^3 (P_{N,n}^\mathrm{ss}-P_n^\mathrm{ss})^2 },\label{eq:deltaP}
\end{align}
where $P_{N,n}^\mathrm{ss}$ denotes $P_n^\mathrm{ss}$ obtained by our HF approach at the $N$-th order whereas $P_n^\mathrm{ss}$ does the numerical exact solution.
In Fig.~\ref{fig:NV_HS_tdep}(b), we plot $\delta P_N^\mathrm{ss}$ for $N=0, 1$ and $2$ with $k_c=0$ and $1$ and find that the choice of $k_c$ is not relevant.
We do not have to set $k_c$ to be larger than zero because each contribution from $k\neq0$ in Eq.~\eqref{eq:Wmn3level} is negligibly small as $\gamma(\epsilon_n-\epsilon_m+k\omega)\approx0$ if $\omega\gg\Lambda$.

A more nontrivial situation happens when $\omega\ll \Lambda$.
To verify how our HF approach works in this case, we perform a similar numerical analysis for $\Lambda=10^6$ with all the other parameters being the same.
Figures~\ref{fig:NV_HS_tdep}(c) and (d) show $\delta \rho_N$ and $\delta P_N^\mathrm{ss}$ for orders $N=0,1$, and 2 with cutoff $k_c=0$ and $1$.
For the correct cutoffs $k_c=0$ for $N=0$ and $k_c=1$ for $N=1$ and $2$, we obtain $\delta\rho_N\propto\delta P^\mathrm{ss}_N\propto \omega^{-N-1}$, meaning that the $N$-th order HF expansion correctly describes the NESS up to $O(\omega^{-N})$.
However, when we set the smaller cutoff $k_c=0$ for $N=1$ and $2$, $\delta\rho_N$ and $\delta P^\mathrm{ss}_N$ eventually scales like $\omega^{-1}$ as $\omega$ increases.
For this range of $\omega$, by taking $k_c$ appropriately, we can obtain the NESS at a desired precision with the HF expansion.

\begin{figure*}
\begin{center}
	\includegraphics[width=10cm]{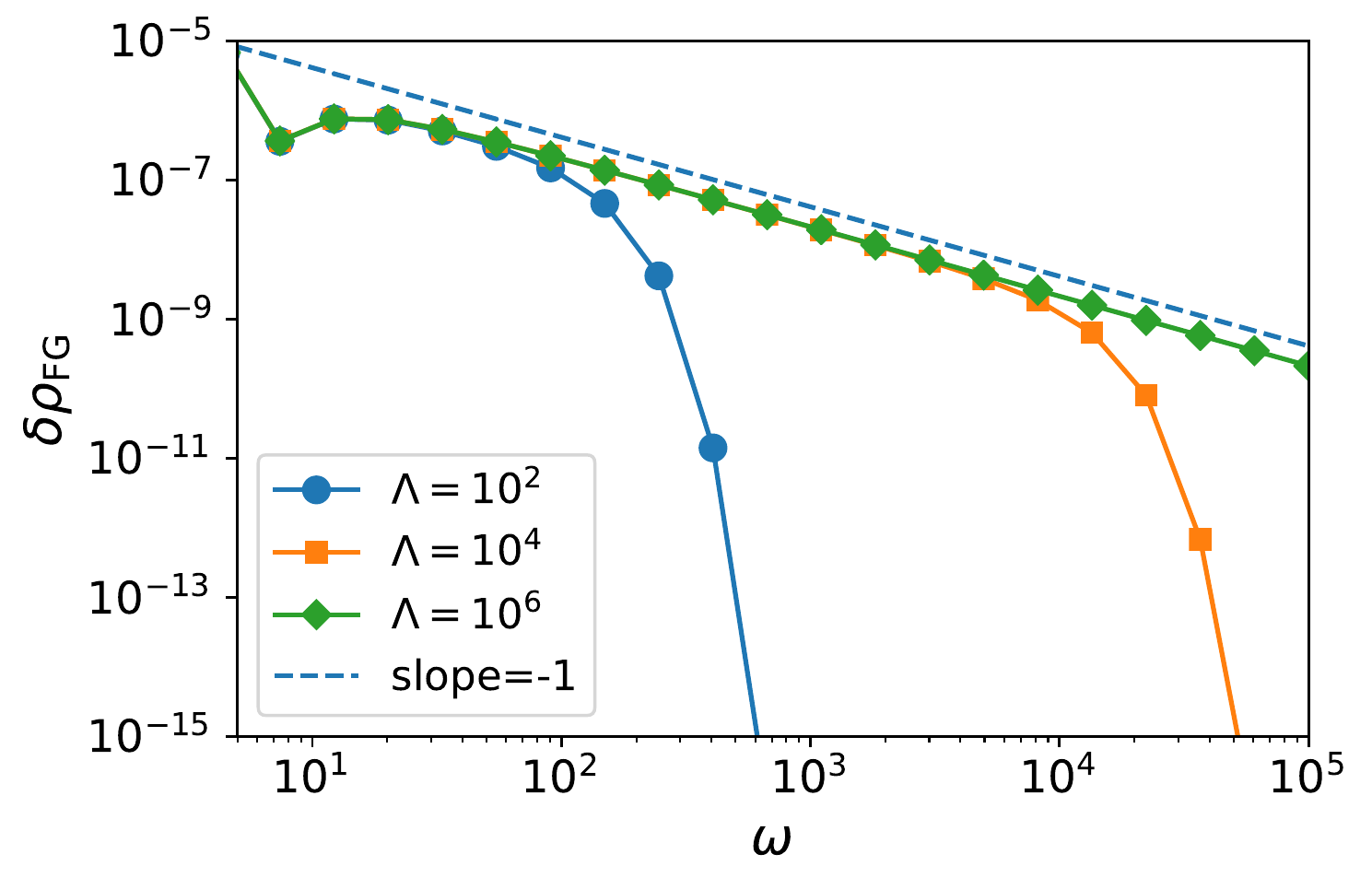}
	\caption{Difference between the FGS~\eqref{eq:fgs} and the exact NESS~\eqref{eq:nessRWA} for bath spectral cutoff $\Lambda=10^2$ (circle), $10^4$ (square), and $10^6$ (diamond).
	The dashed line is the guide to the eye showing the slope of $-1$.
	The other parameters are chosen as $B_s=0.3$, $\Nz=1$, $\Nxy=0.05$, $B_d=0.1$, $\omega=10$, and $\beta=3$.}
	\label{fig:NV_FGS_diff}
	\end{center}
\end{figure*}

Finally, we comment on the accuracy of the FGS, which we quantify by
\begin{align}
	\delta\rho_\mathrm{FG} \equiv \| \rho_\mathrm{FG}(t)-\rhoness(t)\|.
\end{align}
Equations~\eqref{eq:nessRWA} and \eqref{eq:fgs} lead to $\rho_\mathrm{FG}=\sqrt{\sum_n (P_n^\mathrm{FG}-P_n^\mathrm{ss})^2}$, which is time-independent.
We plot $\delta\rho_\mathrm{FG}$ in Fig.~\ref{fig:NV_FGS_diff} for different bath spectral cutoffs $\Lambda$.
We note that, to calculate the FGS exactly, we need numerical integration to obtain the quasienergies $\epsilon_n$ and time-dependent Floquet states $\ket{\psi_n(t)}$.
As discussed in Sec.~\ref{sec:fgs}, the FGS becomes accurate rapidly when $\omega$ exceeds $\Lambda$, where $k$-photon ($k\neq0$) processes are suppressed.
Meanwhile, for $\omega\lesssim\Lambda$, the FGS's error is as large as $O(\omega^{-1})$, which derives from the 1-photon processes of $O(\omega^{-1})$ neglected in the FGS.
Thus, for $\omega\lesssim\Lambda$, the HF expansion approach with appropriately taking $k$-photon process gives better description for the NESS without numerical integration.
In the following section, we further investigate the NESS and FGS in another model.

\subsection{Example 4: Inverse Faraday Effect in Heisenberg chain \hl{and dissipation-assisted Floquet engineering}}

\begin{figure*}[t]
	\includegraphics[width=\columnwidth]{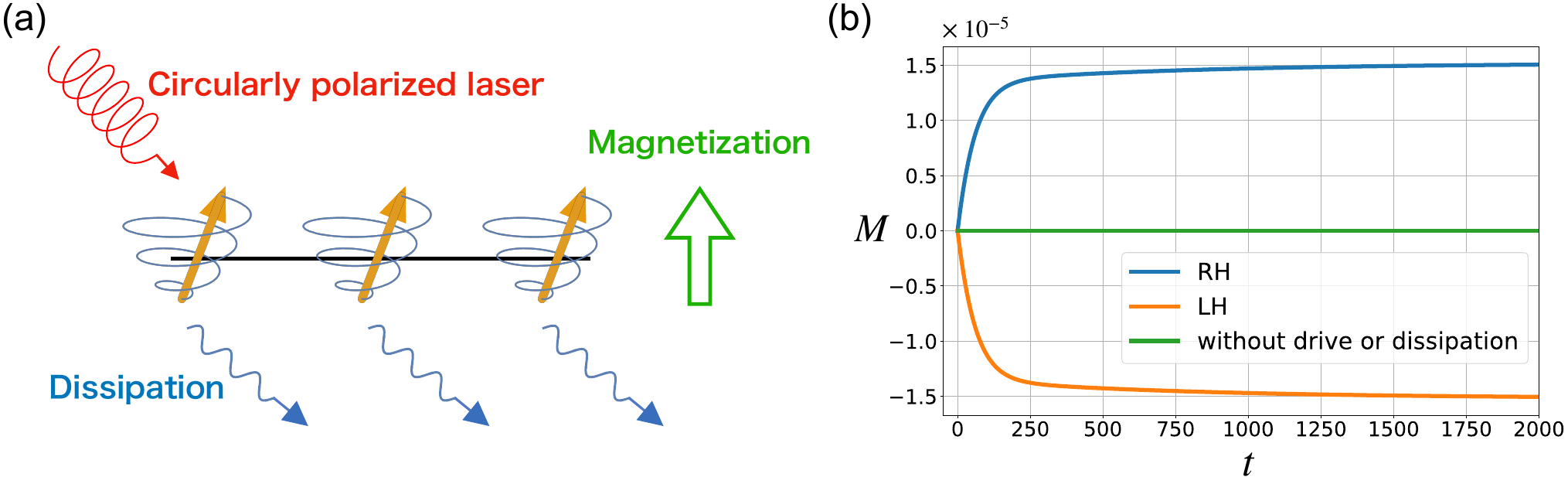}
	\caption{(a) Illustration of inverse Faraday effect in the presence of dissipation.
	(b) Time evolution of magnetization along $z$-axis. The blue (orange) line shows a result for a right (left)-handed circularly polarized laser. The parameters are $B=0.1, \omega=\pm 5, \beta=10, \Lambda=5$, and $L=8$. }
	\label{fig:inverse1}
\end{figure*}

Magneto-optics has been long studied actively~\cite{Kirilyuk2010,Sato2021Floquet} and various ultrafast methods of controlling magnetism with light have been discussed. Among them, the inverse Faraday effect~\cite{Pershan1966, Kirilyuk2010, Pitaevskii1961}
is a representative phenomenon and it means a magnetization change by applying circularly polarized light to magnetic materials. The sign of varied magnetization can be controlled by switching the polarization direction (i.e., left- or right-handed polarization). This effect can be viewed as one of typical Floquet engineerings. The inverse Faraday effect has been indeed observed in various magnetic materials~\cite{Ziel1965, Makino2012,Hansteen2006,Satoh2010,Kimel2005} 
with application of visible or infrared light whose photon energy is comparable to the electron band energy.

Its microscopic theory~\cite{Pershan1966}
was first developed by Pershan {\it et al.}, who predicted circularly-polarized light-driven Zeeman coupling appears when the light is applied to conducting electron systems with spin-orbit (SO) coupling. We emphasize that an SO coupling is necessary to generate the effective Zeeman interaction between electron spins and the ac electric field of the light. 

Recently, in magnetic insulators, the THz-light-driven (or shortly THz) inverse Faraday effect has been investigated theoretically~\cite{Takayoshi2014, Takayoshi2014_2, Sato2016}.
The THz light is suitable in these systems as its photon energy is comparable to the magnetic excitation energy of, e.g., magnons and spinons, especially in antiferromagnetic systems. As in conducting electrons, magnetic anisotropy originated from the SO coupling is necessary for the THz inverse Faraday effect. Namely, the spin rotation symmetry has to be broken~\cite{Takayoshi2014_2}
to induce a THz-laser-driven magnetization.

On the other hand, instead of an SO coupling or a magnetic anisotropy, we expect that even dissipation can also assist the emergence of the THz inverse Faraday effect because the Lindblad type dissipation can break spin rotation symmetry. In fact, small SO coupling or spin-rotation-symmetry breaking 
interactions (such as spin-phonon coupling and dipole interaction) are always present even in ideal magnetic materials. Such weak but finite interactions may be viewed as a source of dissipation for the systems we consider.  

To demonstrate this dissipation-assisted THz inverse Faraday effect, we consider an SU(2)-symmetric Heisenberg spin-$1/2$ chain driven by a circularly polarized laser.
The Hamiltonian and dissipator are given by
\begin{align}
	H(t) &= H_0 + V(t) \notag \\
	&= J \sum_{j=1}^{L} \vec{S}_j \cdot \vec{S}_{j+1} + B \sum_{j=1}^{L} \left[ S_j^x \cos(\omega t) + S_j^y \sin(\omega t)\right], \label{eq:HamIFE}\\
	\mD_t(\rho) &= \sum_{j,\epsilon} \gamma(\epsilon)\left[ A_\epsilon^j(t) \rho A_\epsilon^{j\dag}(t) -\frac{1}{2}\{A_\epsilon^{j\dag}(t) A_\epsilon^{j}(t), \rho \} \right], \label{eq:DIFE}
\end{align}
where $\vec{S}_j$ is the spin-$\frac{1}2$ operator on $j$th site and $J>0$ is the antiferromagnetic exchange coupling, and $B$ is the coupling constant of the ac Zeeman interaction of the circularly polarized THz laser with frequency $\omega$. In the dissipator $\mD_t(\rho)$, $A_\epsilon^{j}(t)$ is defined in Eqs.~\eqref{eq:jumpA} and \eqref{eq:AFourier} with $A^j = S^x_j$.
Here, as we did in other sections, we suppose that $\gamma(\epsilon)$ is the spectral function of the ohmic boson bath with a Gaussian cutoff $\Lambda$ and, for convenience, redisplay its form:
\begin{align}
	\gamma(\epsilon) = \gamma_0 \frac{\epsilon e^{-\frac{\epsilon^2}{2\Lambda^2}}}{1-e^{-\beta \epsilon}}.
	\label{eq:IFEgammae}
\end{align}
For simplicity, we have ignored the Lamb shift $\lamb(t)$ here,
which does not change the NESS because both the Lamb shift and NESS are diagonal in the time-independent frame (see, for example, Eqs.~\eqref{eq:FLEsigmat} and \eqref{eq:cme}).
In this subsection,
we calculate the magnetization dynamics and its value in the NESS by solving the time-dependent Floquet-Lindblad equation of Eqs.~\eqref{eq:HamIFE} and \eqref{eq:DIFE} with the forth-order Runge-Kutta method.
The initial state is the ground state of $H_0 = J \sum_{j=1}^{L} \vec{S}_j \cdot \vec{S}_{j+1}$, which is a spin singlet for even $L$'s.
From now on, we fix the parameters as $J=1$ and $\gamma_0=0.1$.

In the absence of the dissipation (i.e., $\gamma_0=0$),
the inverse Faraday effect does not occur because of the spin SU(2) symmetry of $H_0$,
which leads to a conservation law $[H(t),S^2_\text{tot}]=0$~\cite{Takayoshi2014_2}.
The symmetry prohibits the magnetization amplitude from growing up, although its direction changes due to the precession around the circularly polarized laser.

On the other hand, in the presence of the dissipation (i.e., $\gamma_0 \neq 0$),
the conservation law breaks down due to the dissipator, $[A_\epsilon^j(t),S^2_\text{tot}]\neq 0$,
and the total magnetization can grow up.
Figure \ref{fig:inverse1} shows the inverse Faraday effect by dissipation. 
The total magnetization $M=\sum_j \braket{S^z_j}/L$ starts from zero and grows up, approaching a nonzero value.
We note that the right- and left-handed circularly polarized lasers give rise to the opposite directions of the magnetization because they are transformed by the $\pi$-rotation around the $x$-axis, $S_j^x \rightarrow S_j^x, S_j^y \rightarrow -S_j^y$, and $S_j^z \rightarrow -S_j^z$.

\begin{figure*}[t]
	\includegraphics[width=\columnwidth]{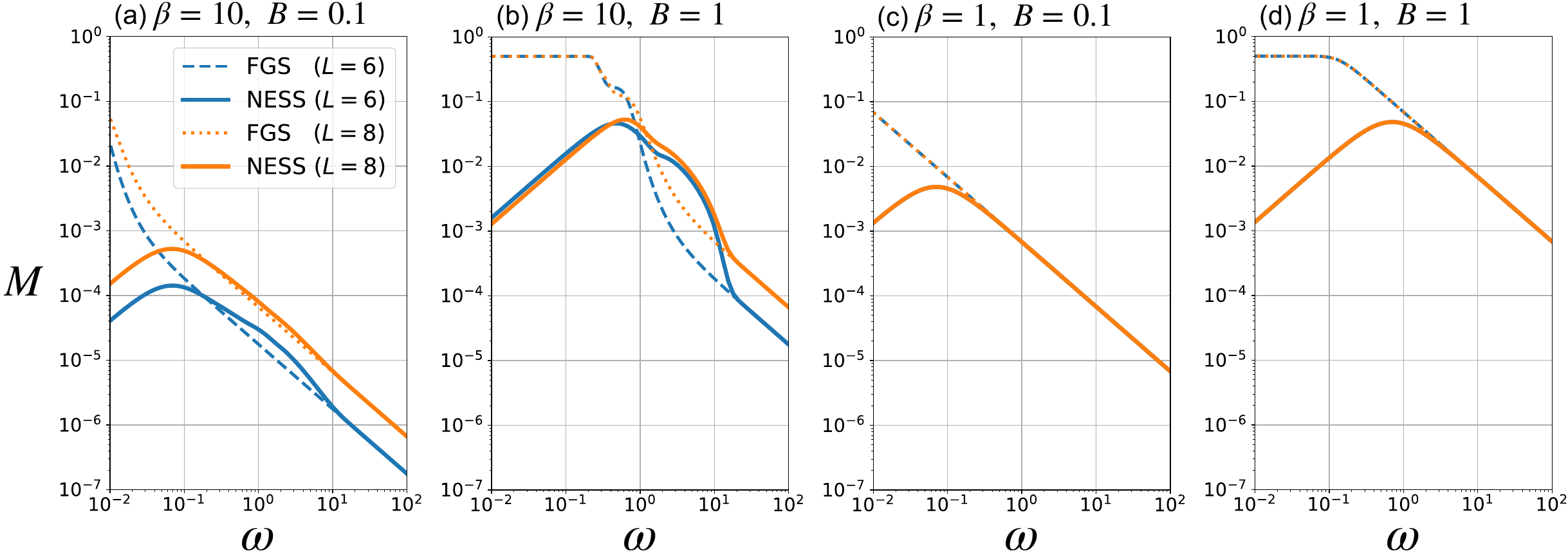}
	\caption{
	$\omega$-dependence of total magnetization $M=\sum_j \braket{S^z_j}/L$ for $(\beta,B)=$ (a)$(10, 0.1)$, (b)$(10, 1)$, (c)$(1, 0.1)$,  and (d)$(1, 1)$.
	The dashed and dotted lines are the first-order Floquet-Gibbs state (FGS), and the blue and orange solid lines denote the exact NESS for $L=6$ and $8$. Note that the lines for $L=6$ and $8$ are overlapped in (c) and (d).
	}
	\label{fig:inverse2}
\end{figure*}

As discussed in Sec.~\ref{sec:tdepRWA}, if the driving frequency is larger not only than the system's energy scale but also than the bath spectral cutoff $\Lambda$,
the NESS in this model is approximately the Floquet-Gibbs state~\eqref{eq:fgs} with the effective Hamiltonian
\begin{align}
H_\text{eff} = H_0 - \frac{B^2}{2\omega} \sum_j S^z_j + \mO(1/\omega^2).
\label{eq:IFEHeff}
\end{align}
The second term on the right-hand side in Eq.~\eqref{eq:IFEHeff} is an effective magnetic field induced by the circularly polarized laser, giving rise to the inverse Faraday effect.
Recall again that, in the absence of dissipation, the effective magnetic field only cannot induce magnetization, as illustrated in Fig.~\ref{fig:inverse1}(b).
Figures~\ref{fig:inverse2} (a)-(d) show the $\omega$-dependence of the magnetizations in the exact NESS and FGS for various $\beta$ and $B$.
For any $\beta$ and $B$, the NESS is well described by the FGS in the high-frequency region,
where the magnetization decays as $M \propto \omega^{-1}$.

We briefly discuss the system size dependence although the numerically accessible system sizes are strongly limited by computational complexity in the dissipative quantum many-body system.
Figure~\ref{fig:inverse2} shows that the high-frequency expansion predicts the NESS for each of $L=6$ and $8$.
Meanwhile, we observe that the magnetization shows slow system-size convergence for lower temperatures.
The reason why the system-size dependence for $\beta=10$ is larger than that for $\beta=1$ is that the low-temperature state is sensitive to the fictitious energy gap due to the finite-size effect (the Heisenberg chain is known to be gapless in the thermodynamic limit~\cite{TakahashiBook,GiamarchiBook}).

Another subtlety in approaching the thermodynamic limit $L\rightarrow \infty$ is the heating and the breakdown of the HF expansion.
However, we do not expect that this subtlety is a serious problem in our model.
This is because our Hamiltonian $H(t)$ becomes time-independent in the rotating frame $\ket{\psi} \rightarrow e^{-i S^z \omega t}\ket{\psi}$, and thus the system does not heat up.
We note that, in the generic many-body systems, it is known that the high-frequency expansion does not converge in the thermodynamic limit, at least in the isolated systems~\cite{Bukov2015,Kuwahara2016,Abanin2017}.
Understanding the high-frequency expansion in the generic many-body systems with dissipation is an open question, and we do not go into detail further in this work.

\begin{figure*}[t]
	\includegraphics[width=\columnwidth]{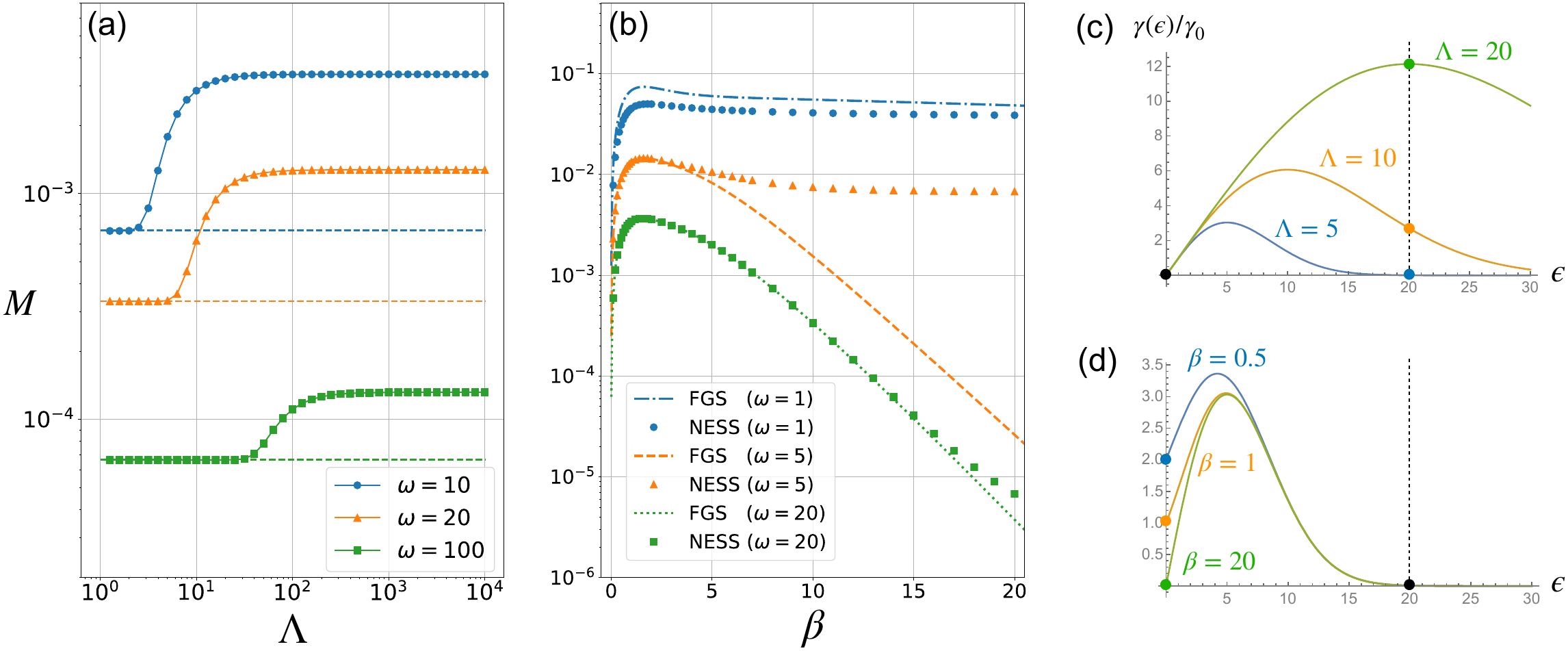}
	\caption{
(a)$\Lambda$-dependence of total magnetization $M$.
The circles, triangles, and squares denote the magnetization in the NESS for $\omega=10, 20$, and $100$, respectively.
The dashed lines are the magnetization in the FGS for each $\omega$.
The other parameters are $B=1, \beta=10$, and $L=8$.
(b) $\beta$-dependence of total magnetization $M$.
The circles, triangles, and squares denote the magnetization in the NESS for $\omega=10, 20$, and $100$,
and the dash-dotted, dashed, and dotted lines are the magnetization in the FGS for corresponding $\omega$.
The other parameters are $B=1, \Lambda=5$, and $L=8$.
(c,d) $\gamma(\epsilon)/\gamma_0$ are plotted for (c) $\Lambda=5, 10$, and $20$ with $\beta=10$ and (d) $\beta=0.5, 1$, and $20$ with $\Lambda=5$.
In $\gamma(\epsilon)$,
the $y$-intercept is $1/\beta$
and the position of the peak is $\mO(\Lambda)$.
Therefore, for example,
$|\gamma(0)/\gamma(\omega=20)|$ becomes smaller for the larger $\Lambda$ or the larger $\beta$.
	}
	\label{fig:inverse3}
\end{figure*}

Let us now study the breakdown of the FGS.
As discussed in the previous section, the NESS is not necessarily well approximated by the FGS due to the spectral function $\gamma(\epsilon)$.
The sufficient condition that the FGS is a good approximation is that the contributions of $k \neq 0$ in $W_{mn}$ of Eq.~\eqref{eq:defW2} are much smaller than that of $k=0$ within the desired order of $\omega$.
For example,
when the cutoff $\Lambda$ in $\gamma(\epsilon)$ of Eq.~\eqref{eq:IFEgammae} is much larger than $\omega$,
we need to include the contribution of $k=1$ to obtain the accurate first-order HF expansion
because $\gamma(k\omega) \sim \mO(\omega)$ and $|\mA_{nm;k}^{\alpha*} \mA_{nm;k}^\beta| \sim \mO(\omega^{-2|k|})$.
Then, the approximated NESS is not the FGS.
Conversely,
when the cutoff $\Lambda$ is much smaller than $\omega$,
the contribution of $k\neq 0$ is neglectable because of $|\gamma(0)| \gg |\gamma(k\omega)|$ ($k\neq 0$) (see Fig.~\ref{fig:inverse3} (c) for the configuration of $\gamma(\epsilon)$ with varied $\Lambda$).
Figure~\ref{fig:inverse3} (a) shows the $\Lambda$-dependence of the magnetization in the NESS and the FGS for various $\omega$.
We observe that the FGS indeed becomes less accurate for larger $\Lambda$
and the threshold of the breakdown is $\mO(\omega)$.

Besides,
the FGS becomes less accurate for lower temperature.
This is because $\gamma(k\omega)$ becomes relatively larger in comparison with $\gamma(0)=1/\beta$ for the larger $\beta$ (see Fig.~\ref{fig:inverse3} (d)),
meaning that we cannot neglect the contribution of $k\neq 0$ in $W_{mn}$ of Eq.~\eqref{eq:defW2} in the low temperature.
Figure~\ref{fig:inverse3} (b) shows the $\beta$-dependence of the magnetization in the NESS and the FGS for various $\omega$.
Indeed,
the FGS becomes less accurate for larger $\beta$
and the range where the FGS is valid becomes wider as $\omega$ increases.

Finally,
we emphasize that the concept of the dissipation-assisted inverse Faraday effect would be valid even in real materials as well as nanodevices consisting of a few spins
although the analysis can not be quantitatively correct.
The FLE with the RWA that we have used in this section is known not to work well in many-body systems in general since we cannot use the rotating wave approximation in the derivation.
However, even in real materials,
the dissipation can break the spin symmetry, giving rise to the inverse Faraday effect,
although the detailed form of the dissipation can be different from the Lindblad-type dissipation.
The detailed analysis for real materials that are not described by the Lindblad equation is future work.
\hl{We also note that dissipation-assisted engineered observables can change, depending on the type of dissipation. 
As we discussed in Sec.~\ref{sec:NVtdep} and Ref.~\cite{Ikeda2020}, in the driven single NV-center model 
with the time-independent dissipator, we can find a dissipation-assisted Floquet engineering of a quadrupolar moment $\{S_x,S_y\}$, which is odd for the antiunitary symmetry operation $V$.}

\section{Conclusion}\label{sec:conclusion}
In this paper, we have theoretically studied the nonequilibrium steady states (NESSs) in the time-periodic quantum master equation of Lindblad (or GKSL) form.
Considering the high-frequency regime, we have developed a systematic high-frequency (HF) expansion for Liouvillians.
One important consequence is that, although the effective Liouvillian is not necessarily of Lindblad form (Lindbladian), \tnii{it is still useful to analyze the NESSs.
This is mainly because the effective Liouvillian is trace-preserving (Lemma 1) and
the NESS is guaranteed to exist at each order of the HF expansion (Theorem 1).}
With the effective Liouvillian and micromotion superoperators at high-enough expansion order, we can obtain the NESS density matrix by linear algebra without directly solving the time evolution obeying the Lindblad equation.

\tnii{We have first applied this theory to concrete models for one-body and many-body quantum systems subject to phenomenological time-independent dissipators.}
With the HF-expansion approach, the effects of external drives are manifest in the effective Liouvillian and micromotion superoperators, which are represented by the commutators of Hamiltonians and dissipators.
This representation, without solving the Lindblad equation, has enabled us to grasp physical consequences of the drive and dissipation, such as the effective Hamiltonian in an NV center (Sec.~\ref{sec:3level_tindep})
the robustness against boundary dissipation in the XY spin chain (Sec.~\ref{sec:openXY}).

\tnii{We have then analyzed the microscopically-derived time-dependent dissipators within the rotation wave approximation (RWA).
Besides the general theoretical framework in Secs.~\ref{sec:Lexpansion} and \ref{sec:NESS}, we have developed a slightly different HF-expansion method utilizing a special property of this class of dissipators.}
For these dissipators realized in a weak contact to heat baths, we have shown that the cutoff $\Lambda$ of the bath spectral function serves as an important energy scale.
\tnii{Within these Floquet-Lindblad equations, we have generally shown that the NESS is well-described by the Floquet-Gibbs state [Eq.~\eqref{eq:fgs}] irrespective of model details as long as the driving frequency $(\hbar)\omega$  is higher than both the system's energy scale and the cutoff $\Lambda$.
Meanwhile,}
when $(\hbar)\omega$ is higher than the system's energy scale but smaller than the cutoff $\Lambda$, 
richer physics, such as the breakdown of the Floquet-Gibbs state, happen due to dissipation processes accompanying photon exchange between the system and baths.
In these cases, however, the HF-expansion approach, being implemented appropriately, enables a systematic analysis of the NESS.
We have exemplified these in an NV center (Sec.~\ref{sec:NVtdep}) and the dissipation-assisted inverse Faraday effect in the isotropic Heisenberg spin chain.

One important open problem is whether the HF expansion makes sense in generic many-body systems, where the so-called heating problem matters.
In isolated many-body systems, the periodic drive is known to heat the system to the featureless infinite-temperature state, and thus the HF-expansion description is valid at finite time range~\cite{Bukov2015,Kuwahara2016,Abanin2017}.
This phenomenon is related to the fact that the HF expansion for generic many-body systems is not a convergent series.
The example systems that we have analyzed in this paper are so special that the heating does not occur.
However, it is of great interest how the heating is compensated by dissipation and the convergence property of the effective Liouvillian in generic many-body systems.
\tnii{To address these systems, we may need to consider Floquet-Lindblad equations beyond the RWA, which is another important open problem.}

\section*{Acknowledgements}
Fruitful discussions with M. Holthaus, K. Mizuta, T. Mori, F. Nathan, A. Polkovnikov, M. Rudner, T. Satoh, and H. Tsunetsugu are gratefully acknowledged.


\paragraph{Funding information}
T.N.I was supported by JSPS KAKENHI Grant No.~JP21K13852.
K.C. was supported by JSPS KAKENHI Grant No.~21J11245 and Advanced Leading Graduate Course for Photon Science
at the University of Tokyo.
M.S. was supported by JSPS KAKENHI (Grants
No. 17K05513 and No. 20H01830) and a Grant-in-Aid for
Scientific Research on Innovative Areas “Quantum Liquid
Crystals” (Grant No. JP19H05825).

\begin{appendix}

\section{van Vleck high-frequency expansion in isolated systems}\label{app:vVisolated}
We recap the results of Refs.~\cite{Eckardt2015,Mikami2016}.
In an isolated system, the time evolution is given by the Schr\"{o}dinger equation.
\begin{equation}
i \partial_{t}|\Psi(t)\rangle=H(t)|\Psi(t)\rangle.
\end{equation}
The propagator is decomposed as
\begin{align}
	U\left(t, t^{\prime}\right)=e^{-i \mm(t)} e^{-i \Heff(t-t^{\prime})} e^{i \mm(t^{\prime})}
\end{align}

\begin{align}
\Heff &=\sum_{n=0}^{\infty} \Heff^{(n)} \\
\Heff^{(0)} &=H_{0} \\ \Heff^{(1)} &=\sum_{m \neq 0} \frac{\left[H_{-m}, H_{m}\right]}{2 m \omega} \\
\Heff^{(2)} &=\sum_{m \neq 0} \frac{\left[\left[H_{-m}, H_{0}\right], H_{m}\right]}{2 m^{2} \omega^{2}}+\sum_{m \neq 0} \sum_{n \neq 0, m} \frac{\left[\left[H_{-m}, H_{m-n}\right], H_{n}\right]}{3 m n \omega^{2}}\\
\Heff^{(3)}&= \sum_{m \neq 0} \frac{\left[\left[\left[H_{-m}, H_{0}\right], H_{0}\right], H_{m}\right]}{2 m^{3} \omega^{3}}+\sum_{m \neq 0} \sum_{n \neq 0, m} \frac{\left[\left[\left[H_{-m}, H_{0}\right], H_{m-n}\right], H_{n}\right]}{3 m^{2} n \omega^{3}} \notag\\
	&+\sum_{m \neq 0} \sum_{n \neq 0, m} \frac{\left[\left[\left[H_{-m}, H_{m-n}\right], H_{0}\right], H_{n}\right]}{4 m n^{2} \omega^{3}}-\sum_{m, n \neq 0} \frac{\left[\left[\left[H_{-m}, H_{m}\right], H_{-n}\right], H_{n}\right]}{12 m n^{2} \omega^{3}} \notag\\
	&+\sum_{m \neq 0} \sum_{n \neq 0, m} \frac{\left[\left[H_{-m}, H_{0}\right],\left[H_{m-n}, H_{n}\right]\right]}{12 m^{2} n \omega^{3}}+\sum_{m, n \neq 0} \sum_{l \neq 0, m, n} \frac{\left[\left[\left[H_{-m}, H_{m-l}\right], H_{l-n}\right], H_{n}\right]}{6 l m n \omega^{3}} \notag\\
	&+\sum_{m, n \neq 0} \sum_{l \neq 0, m-n} \frac{\left[\left[\left[H_{-m}, H_{m-n-l}\right], H_{l}\right], H_{n}\right]}{24 l m n \omega^{3}}+\sum_{m, n \neq 0} \sum_{l \neq 0, m, n} \frac{\left[\left[H_{-m}, H_{m-l}\right],\left[H_{l-n}, H_{n}\right]\right]}{24 lm n \omega^{3}}
\end{align}

\begin{align}
\mm(t) &=\sum_{n=0}^{\infty} \mm^{(n)}(t) \\
i \mm^{(1)}(t) &=-\sum_{m \neq 0} \frac{H_{m}}{m \omega} e^{-i m \omega t} \\ i \mm^{(2)}(t) &=\sum_{m \neq 0} \sum_{n \neq 0, m} \frac{\left[H_{n}, H_{m-n}\right]}{2 m n \omega^{2}} e^{-i m \omega t}+\sum_{m \neq 0} \frac{\left[H_{m}, H_{0}\right]}{m^{2} \omega^{2}} e^{-i m \omega t} \\
i \mm^{(3)}(t) &=-\sum_{m \neq 0} \frac{\left[\left[H_{m}, H_{0}\right], H_{0}\right]}{m^{3} \omega^{3}} e^{-i m \omega t}+\sum_{m \neq 0} \sum_{n \neq 0} \frac{\left[H_{m},\left[H_{-n}, H_{n}\right]\right]}{4 m^{2} n \omega^{3}} e^{-i m \omega t} \notag\\
    &-\sum_{m \neq 0} \sum_{n \neq 0, m} \frac{\left[\left[H_{n}, H_{0}\right], H_{m-n}\right]}{2 m n^{2} \omega^{3}} e^{-i m \omega t}-\sum_{m \neq 0} \sum_{n \neq 0, m} \frac{\left[\left[H_{n}, H_{m-n}\right], H_{0}\right]}{2 m^{2} n \omega^{3}} e^{-i m \omega t} \notag\\
    &-\sum_{m \neq 0} \sum_{n \neq 0} \sum_{l \neq 0, n, m} \frac{\left[\left[H_{n}, H_{l-n}\right], H_{m-l}\right]}{4 m n l \omega^{3}} e^{-i m \omega t}-\sum_{m \neq 0} \sum_{n \neq 0} \sum_{l \neq 0, m-n} \frac{\left[H_{n},\left[H_{l}, H_{m-n-l}\right]\right]}{12 m n l \omega^{3}} e^{-i m \omega t}
\end{align}

\section{Commutator calculations}\label{app:commutator}
\begin{lem}\label{lem:commutator}
Suppose $\ii\mL_m \rho=[H_m,\rho]$. 	We define for $N\ge2$
\begin{align}
	\mathcal{K}_{m_N,\dots,m_2,m1} &= [\ii\mL_{m_N},[\cdots,[\ii\mL_{m_2},\ii\mL_{m_1}]]]\\
	H_{m_N,\dots,m_2,m1} &= [H_{m_N},[\cdots,[H_{m_2},H_{m_1}]]].
\end{align}
Then, it follows that
\begin{align}
	\mathcal{K}_{m_N,\dots,m_2,m1}(\rho) = [H_{m_N,\dots,m_2,m1},\rho]
\end{align}
for any $\rho$.
\end{lem}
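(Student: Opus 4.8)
The plan is to argue by induction on the depth $N$ of the nested commutator. The only tool needed is the operator identity
\begin{align}\label{eq:rearrJacobi}
[A,[B,X]]-[B,[A,X]]=[[A,B],X],
\end{align}
valid for arbitrary operators $A$, $B$, $X$, which is just a rearrangement of the Jacobi identity. A point to get right from the outset: the induction hypothesis must be phrased as a statement about the action of $\mathcal{K}_{m_{N-1},\dots,m_1}$ on an \emph{arbitrary} operator (not merely on a density matrix), because that is exactly what makes the inductive step go through.

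For the base case $N=2$, I would compute directly, using $\ii\mL_m(\sigma)=[H_m,\sigma]$,
\begin{align}
[\ii\mL_{m_2},\ii\mL_{m_1}](\rho)
&=[H_{m_2},[H_{m_1},\rho]]-[H_{m_1},[H_{m_2},\rho]]\notag\\
&=[[H_{m_2},H_{m_1}],\rho]=[H_{m_2,m_1},\rho],
\end{align}
where the second line is \eqref{eq:rearrJacobi} with $(A,B,X)=(H_{m_2},H_{m_1},\rho)$.

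For the inductive step, assume $\mathcal{K}_{m_{N-1},\dots,m_1}(\sigma)=[H_{m_{N-1},\dots,m_1},\sigma]$ for every operator $\sigma$, abbreviate $G\equiv H_{m_{N-1},\dots,m_1}$, and use $\mathcal{K}_{m_N,\dots,m_1}=[\ii\mL_{m_N},\mathcal{K}_{m_{N-1},\dots,m_1}]$. Then
\begin{align}
\mathcal{K}_{m_N,\dots,m_1}(\rho)
&=\ii\mL_{m_N}\!\bigl(\mathcal{K}_{m_{N-1},\dots,m_1}(\rho)\bigr)-\mathcal{K}_{m_{N-1},\dots,m_1}\!\bigl(\ii\mL_{m_N}(\rho)\bigr)\notag\\
&=[H_{m_N},[G,\rho]]-[G,[H_{m_N},\rho]]=[[H_{m_N},G],\rho],
\end{align}
where the crucial move is applying the induction hypothesis with $\sigma=\ii\mL_{m_N}(\rho)=[H_{m_N},\rho]$ (legitimate precisely because the hypothesis holds for all operators), and the last equality is again \eqref{eq:rearrJacobi}. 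Since $[H_{m_N},G]=H_{m_N,m_{N-1},\dots,m_1}$ by definition, the induction closes. I do not anticipate any genuine obstacle: everything reduces to two lines of commutator algebra, and the only thing to watch is the bookkeeping noted above, namely that the inductive statement must quantify over all operators so that the substitution $\sigma\mapsto[H_{m_N},\rho]$ is permitted.
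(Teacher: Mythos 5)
Your proof is correct and follows essentially the same route as the paper's: induction on the nesting depth, with the base case $N=2$ handled by the Jacobi identity and the inductive step applying the hypothesis to $\ii\mL_{m_N}(\rho)=[H_{m_N},\rho]$ before using the identity once more. Your explicit remark that the inductive hypothesis must hold for arbitrary operators is a useful clarification of a point the paper leaves implicit, but the argument itself is the same.
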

\begin{proof}
	We invoke the mathematical induction.
	First, for $N=2$, the statement follows from the Jacobi identity $[A,[B,C]]+[B,[C,A]]+[C,[A,B]]=0$.
	In fact,
	\begin{align}
		\mathcal{K}_{m_2,m_1}(\rho) &= [H_{m_2},[H_{m_1},\rho]]-[H_{m_1},[H_{m_2},\rho]]\\
		&=[H_{m_2},[H_{m_1},\rho]]+[H_{m_1},[\rho,H_{m_2}]]\\
		&=-[\rho,[H_{m_2},H_{m_1}]]\\
		&=[H_{m_2,m_1},\rho].
	\end{align}
	Next, we suppose that the statement is true for $N=M$
	and prove the statement for $N=M+1$. This step is achieved as follows:
	\begin{align}
		\mathcal{K}_{m_{M+1},\dots,m_2,m1}(\rho)
		&= \ii\mL_{m_{M+1}}\mathcal{K}_{m_M,\dots,m_2,m1}(\rho)-\mathcal{K}_{m_M,\dots,m_2,m1}\ii\mL_{m_{M+1}}(\rho)\\
		&=[H_{m_{M+1}},[H_{m_M,\dots,m_2,m_1},\rho]]-[H_{m_M,\dots,m_2,m_1},[H_{m_{M+1}},\rho]]\\
		&\tnii{=[[H_{m_{M+1}},H_{m_M,\dots,m_2,m_1}],\rho]}\\
		&=[H_{m_{M+1},\dots,m_2,m_1},\rho].
	\end{align}
\end{proof}

\section{Majorana representation of the XY chain}\label{app:majorana}
In Sec.~\ref{sec:openXY}, we have used the Majorana-fermion representation
for the infinite XY chain.
Here, we derive the representation and its Fourier transform.

We introduce $2N$ Majorana fermions $w_{j}$ $(j=1,2,\dots,2N)$
for an $N$-site chain as
\tnii{
\begin{align}
 &w_{2 n-1} =\left(\prod_{j=1}^{n-1} \sigma_{j}^{z}\right) \sigma_{n}^{x}, \\
 &w_{2 n} =\left(\prod_{j=1}^{n-1} \sigma_{j}^{z}\right) \sigma_{n}^{y},\\
 &w_{2 n-1} w_{2 n} = i\sigma_{n}^z, 
\end{align}
}
which satisfy the Majorana commutation relations $\{w_{i},w_j\}=2\delta_{ij}$
and translate Eq.~\eqref{eq:drivenXY} into
\begin{align}
	H(t) = -i \sum_{j=1}^{N-1}\left( \frac{1+\gamma}{2}w_{2j}w_{2j+1}-\frac{1-\gamma}{2}w_{2j-1}w_{2(j+1)}\right)-ihf(t)\sum_{j=1}^N w_{2j-1}w_{2j}.
\end{align}
It is convenient to introduce the two-component fermions $W_j ={}^t(w_{2j-1},w_{2j})$,
which lead to
\begin{align}
	w_{2j-1}w_{2j} &= \frac{1}{2}(w_{2j-1}w_{2j} - w_{2j}w_{2j-1}) = {}^tW_j i\tau^y W_j,\\
	w_{2j}w_{2j+1} &= \frac{1}{2} (w_{2j}w_{2j+1} - w_{2j+1}w_{2j}) = \frac{1}{2}{}^t W_j \tau^- W_{j+1} - \frac{1}{2}{}^t W_{j+1} \tau^+ W_{j},\\
	w_{2j-1}w_{2(j+1)} &= \frac{1}{2} (w_{2j-1}w_{2(j+1)} - w_{2(j+1)}w_{2j-1}) = \frac{1}{2}{}^t W_j \tau^+ W_{j+1} - \frac{1}{2}{}^t W_{j+1} \tau^- W_{j},
\end{align}
\tnii{where $\tau^\pm\equiv(\tau^x\pm i \tau^y)/2$ and $\tau^\alpha$ ($\alpha=x,y,$ and $z$) are the Pauli matrices.}
These equations give
\begin{align}
	H(t) &= -\frac{i}{4}\left[ \sum_{j=1}^{N-1}\left\{ {}^t W_j (\gamma\tau^x-i\tau^y)W_{j+1}
	-{}^t W_{j+1} (\gamma\tau^x+i\tau^y)W_{j}\right\}  +2hf(t)\sum_{j=1}^N {}^t W_j i\tau^y W_j \right]\\
	&=-\frac{i}{4}{}^t \bm{W}\bm{X}\bm{W},
\end{align}
where $\bm{W}$ is the $2N$-component vector, and $\bm{X}$ is the antiHermitian $2N\times2N$ matrix.
Thus, we have obtained the quadratic Majorana representation of the Hamiltonian,
which is translation invariant except for the boundaries.

To obtain the quasienergy bands, we derive the time-evolution equation in the Majorana representation.
It is convenient to work in the Heisenberg picture to have
\begin{align}
	\frac{d\bm{W}(t)}{dt} = i[H(t),\bm{W}(t)] = -\bm{X}\bm{W}(t),
\end{align}
where $\bm{W}(t)$ is the Heisenberg-picture operator for $\bm{W}$
and we have used the commutation relation $\{w_i,w_j\}=2\delta_{ij}$.
To recover the translation symmetry,
we consider the infinite chain ($N\to\infty$)
and invoke the Fourier transform: $W_j \to \tW_k\propto \sum_j e^{-ikj}W_j$.
In the Fourier basis, we have
\begin{align}
	\frac{dW_k(t)}{dt} = -X(k,t) W_k(t),\label{eq:HeisenMajorana}
\end{align}
where
\begin{align}
	X(k)\equiv (\gamma\tau^x-i\tau^y)e^{-ik}-(\gamma\tau^x+i\tau^y)e^{ik} +2hf(t)i\tau^y
	= -2i \{ \gamma \sin k \tau_x + [\cos k-hf(t)] \tau_y \}
\end{align}
is a $2\times2$ matrix for each $k$.
Physically, the $2\times2$ matrix describes the evolution
of the quasiparticles.

According to the Floquet theory, the one-cycle evolution
\begin{align}
	V(k) = \exp_+\left( -\int_0^T X(k,t)dt  \right)
\end{align}
determines the Floquet modes and the two eigenvalues define the quasienergy bands.
Since $X(k,t)$ is antiHermitian and traceless,
the eigenvalues of $V(k)$ are given as $e^{\pm i \epsilon(k)T}$,
and we regard $\pm\epsilon(k)$ as the quasienergy bands for the Majorana fermions.
In the absence of the external drive $f(t)=0$,
we have $\epsilon(k)=2\sqrt{\cos^2k +\gamma^2\sin^2k}$.
Note that $d\epsilon(k)/dk =0$ at $k=0$ and $\pm\pi$.

To use the HF expansion for the Hamiltonian, we notice the formal analogy
of Eq.~\eqref{eq:HeisenMajorana} to the Schr\"{o}dinger equation.
In fact,
\begin{align}
	h(k,t) \equiv -i X(k,t) = 
	-2 \{ (\gamma \sin k) \tau^x + [\cos k-hf(t)] \tau^y \}
\end{align}
plays the role of Hamiltonian and gives $V(k)=\exp_+[-i\int_0^T h(k,t)dt]$,
and, hence, the HF expansion in Appendix~\ref{app:vVisolated} is applicable to $h(k,t)$.

\section{Derivation of time-dependent FLE for weak thermal contact}\label{app:FLEderivation}
Historically, dissipative Floquet systems weakly coupled to thermal reservoirs were studied by quantum master equations~\cite{Blumel1991,Hone1997}, which are equivalent to the FLE.
Here, for completeness, we summarize the derivation with slight generalization and emphasis on the Lindblad (GKSL) form.
Our derivation follows a textbook~\cite{BreuerBook} in which the GKSL equation is derived for the time-independent $H_S$.

\subsection{Interaction picture}\label{app:intpic}
We begin by considering the system-bath composite system whose Hamiltonian reads
\begin{align}\label{app:eq:Ht}
	H(t) = H_S(t) + H_B+ \hsb,
\end{align}
where $H_S(t+T)=H_S(t)$ is for the periodically driven system of interest, $H_B$ for the heat bath (reservoir), and $\hsb=\sum_\alpha A_\alpha\otimes B_\alpha$ for the system-bath coupling.
Here, precisely speaking, $H_S(t)$ ($H_B$) represents $H_S(t)\otimes1$ ($1\otimes H_B$), but we omit $\otimes1$ and $1\otimes$ for brevity.

Under the Hamiltonian~\eqref{app:eq:Ht}, we consider the time evolution of the total system.
To this end, it is useful to work in the interaction picture:
\begin{align}\label{app:eq:drho}
	\frac{\dd \rho^I(t)}{\dd t} = -\ii[H_I(t),\rho^I(t)]
\end{align}
and its integral form
\begin{align}\label{app:eq:drhoint}
	\rho^I(t) = \rho^I(0) -\ii \int_0^t [H_I(s),\rho^I(s)]\dd s,
\end{align}
where $\rho^I(t)$ is the density matrix for the total system in the interaction picture, and $H_I(t)=U^\dag(t)\hsb U(t)$ with $U(t)=\exp_+\left[-\ii\int_0^t (H_S(s)+H_B)\dd s\right]$.
In the following, we drop the superscript $I$ for density matrices for brevity until we arrive at the final result~\eqref{eq:FLEint}.
We assume that, at the initial time $t=0$, the system and bath are not entangled: $\rho(0)=\rho_S(0)\otimes\rho_B$, where $\rho_B$ represents a static state of the bath.
\tnii{We then substitute Eq.~\eqref{app:eq:drhoint} into the RHS of Eq.~\eqref{app:eq:drho} and take $\trb$ of the both sides, having
\begin{align}
	\frac{\dd \rho_S(t)}{\dd t} = -\int_0^t ds \trb[H_I(t),[H_I(s),\rho(s)]],
\end{align}
where we have used $\trb[\rho(0)B_\alpha]=0$ for all $\alpha$ without loss of generality.
To make this equation Markovian, we make the replacement $\rho(s)\to \rho_{\hl{S}}(t)\otimes\rho_B$, by which the RHS only depends on $\rho_S(t)$.
We then remove the dependence on the initial time by replacing $s$ in the integral by $t-s$ and extending the integration as $\int_0^t\to \int_{0}^\infty$.
Then, we obtain the following Markovian equation:}
\begin{align}\label{app:eq:BM}
	\frac{d}{dt}\rho_S(t) = -\int_0^\infty ds \trb\left\{ [H_I(t),[H_I(t-s),\rho_S(t)\otimes\rho_B]]
	\right\},
\end{align}
which is Markovian but not in Lindblad form yet.
We remark that these approximations are based on physical intuitions, but the errors accompanied by them have recently been quantified in a mathematically rigorous manner~\cite{Nathan2020,Mozgunov2020}.
According to the error bounds, the approximations are valid if the system-bath coupling is weak enough and the bath-correlation time is short enough.

We can simplify Eq.~\eqref{app:eq:BM} by introducing the bath spectral function.
To this end, we decompose the system-bath interaction as $\hsb=\sum_\alpha A_\alpha\otimes B_\alpha$ and correspondingly
\begin{align}\label{eq:SBintpic}
	H_I(t) = \sum_\alpha A_\alpha(t)\otimes B_\alpha(t),
\end{align}
where $A_\alpha(t) = U_S^\dag(t)A_\alpha U_S(t)$ with $U_S(t)=\exp_+\left( -\ii\int_{0}^t ds H_S(s)\right)$, and $B_\alpha(t)=\ee^{\ii H_B t}B_\alpha \ee^{-\ii H_B t}$.
We define the bath spectral function by
\begin{align}
	\langle B_\alpha(t)B_\beta(t-s)\rangle \equiv
	\trb\left(\rho_B B_\alpha(t) B_\beta(t-s) \right).
\end{align}
Although the above arguments apply to any $\rho_B$, we here assume $\rho_B = \ee^{-\beta H_B}/Z_B$.
Then we have $\langle B_\alpha(t)B_\beta(t-s)\rangle = \langle B_\alpha(s)B_\beta(0)\rangle$ and
 the so-called Kubo-Martin-Schwinger (KMS) condition: $\langle B_\alpha(t)B_\beta(0)\rangle=\langle B_\beta(s)B_\alpha(t+\ii\beta)\rangle$.
By using the bath spectral function,
Eq.~\eqref{app:eq:BM} is rewritten as
\tnii{
\begin{align}
	\frac{d}{dt}\rho_S(t)
&= \sum_{\alpha,\beta}\int_0^\infty ds \langle B_\alpha(s)B_\beta(0)\rangle \left[A_\beta(t-s)\rho_S(t),A_\alpha(t)
	 \right]+\hc\label{app:eq:BMbath}
\end{align}
}

Equation~\eqref{app:eq:BMbath} can be further simplified once we Fourier expand $A_\alpha(t)$ and $A_\beta(t-s)$.
For the conventional setup where $H_S$ is time-independent, the Fourier expansion is done simply by the decomposition into the eigenmodes of $H_S$.
However, in our periodically driven systems, the decomposition into the Floquet states~\cite{Shirley1965,Sambe1973} are useful:
\begin{align}
	i\frac{d}{dt}\ket{\psi_m(t)} = H_S(t)\ket{\psi_m(t)};
	\qquad \ket{\psi_m(t)} = e^{-\ii \epsilon_m t}\ket{u_m(t)};
	\qquad \ket{u_m(t+T)} = \ket{u_m(t)}.
\end{align}
Here, $\{\ket{\psi_m(t)}\}_{m=1}^N$ ($N<\infty$ is the Hilbert-space dimension) are independent solutions for the time-dependent Schr\"{o}dinger equation without system-bath coupling, and we call $\epsilon_m$ and $\ket{u_m(t)}$ the quasienergy and Floquet state, respectively.
\tnii{We remark that $\epsilon_m$ and $\ket{u_m(t)}$ are not uniquely defined, but we can redefine them by $\epsilon_m\to \epsilon_m +k\omega$ and $\ket{u_m(t)}\to e^{i\epsilon_m t}\ket{u_m(t)}$ without changing $\ket{\psi_m(t)}$ and the periodicity of $\ket{u_m(t)}$.
	For concrete calculations, one may fix a set of $\epsilon_m$'s, but the physical observables must be invariant under those shifts.}

These states give
\begin{align}
	U_S(t) = \sum_{m=1}^N \ket{\psi_m(t)}\bra{\psi_m(0)},
\end{align}
which leads to
\begin{align}
	A_\alpha(t) = \sum_{m,n} \ket{\psi_m(0)}\braket{\psi_m(t)| A_\alpha|\psi_n(t)}\bra{\psi_n(0)}.
\end{align}
We note that $\braket{\psi_m(t)| A_\alpha|\psi_n(t)}=\ee^{\ii(\epsilon_m-\epsilon_n)t}\braket{u_m(t)| A_\alpha|u_n(t)}$ has a discrete spectrum, in which nonzero weights lie at $\epsilon=-(\epsilon_m-\epsilon_n)+\ell \omega$ ($\ell\in\mathbb{Z}$), since the Floquet states $\ket{u_m(t)}$ are periodic.
Thus we introduce the spectral decomposition for the matrix elements as
\begin{align}
\braket{\psi_m(t)| A_\alpha|\psi_n(t)}
=\sum_{\epsilon} \mA^\alpha_{mn}(\epsilon)\ee^{-\ii \epsilon t}
=\sum_{\ell \in \mathbb{Z}} \mA^\alpha_{mn}(\epsilon_n-\epsilon_m+\ell \omega) e^{-\ii(\epsilon_n-\epsilon_m+\ell\omega) t}
\end{align}
and correspondingly that for the operators as
\begin{align}
	A_\alpha(t) = \sum_{\epsilon} \ee^{-\ii\epsilon t}A^\alpha_\epsilon;
	\qquad A_\epsilon^\alpha \equiv \sum_{m,n}\mA^\alpha_{mn}(\epsilon)\ket{\psi_m(0)}\bra{\psi_n(0)}.\label{app:eq:jumpA}
\end{align}

\tni{For later use, let us show that}
$A^\alpha_\epsilon$ and $(A^\alpha_\epsilon)^\dag$ are operators that lower and raise quasienergy by $\epsilon$:
\begin{align}
U_F A_\epsilon^\alpha U_F^\dag = \ee^{+\ii \epsilon T}A_\epsilon^\alpha,\qquad
U_F A_\epsilon^{\alpha\dag} U_F^\dag = \ee^{-\ii \epsilon T}A_\epsilon^{\alpha\dag},\label{eq:UFA}
\end{align}
where
\begin{align}
U_F\equiv U_S(T) = \sum_{m=1}^N \ket{\psi_m(T)}\bra{\psi_m(0)}=\sum_{m=1}^N e^{-i\epsilon_m T}\ket{\psi_m(0)}\bra{\psi_m(0)} \label{eq:UFspec}
\end{align}
denotes the one-cycle unitary evolution.
\tni{To prove that, we first notice that
\begin{align}
	U_F\ket{\psi_m(0)} = \ket{\psi_m(T)} = e^{-i\epsilon_m T}\ket{\psi_m(0)},\label{app:eq:UFpsi}
\end{align}
meaning that $\ket{\psi_m(0)}$ is an eigenstate of $U_F$ with eigenvalue $e^{-i\epsilon_m T}$.
This equation also implies
\begin{align}
\bra{\psi_n(0)}U_F^\dag =e^{i\epsilon_n T}\bra{\psi_n(0)}.\label{app:eq:UFdagpsi}
\end{align}
Using Eqs.~\eqref{app:eq:UFpsi}	and \eqref{app:eq:UFdagpsi} together with Eq.~\eqref{app:eq:jumpA}, we have
\begin{align}
	U_F A_\epsilon^{\alpha}U_F^\dag &= \sum_{m,n}\mA^\alpha_{mn}(\epsilon)e^{i(\epsilon_n-\epsilon_m)T}\ket{\psi_m(0)}\bra{\psi_n(0)}\\
	&= \sum_{m,n}\mA^\alpha_{mn}(\epsilon)e^{i\epsilon T}\ket{\psi_m(0)}\bra{\psi_n(0)}\\
	&= e^{i\epsilon T}A_\epsilon^\alpha.\label{app:eq:Aannihilation}
\end{align}
Here, to obtain the second line, we have used the fact that $\mA^\alpha_{mn}(\epsilon)$ is nonvanishing only when $\epsilon_n-\epsilon_m = \epsilon +\ell\omega$ for some $\ell\in\mathbb{Z}$.
Equation~\eqref{app:eq:Aannihilation} is the first equation in Eq.~\eqref{eq:UFA} that we wanted to prove,
and we obtain the second one by taking the Hermitian conjugation of the first one.
}

To show this, we recall the following characterization of the quasienergy:
\begin{align}
	U_S(T)\ket{\psi_n(0)} = \ee^{-\ii \epsilon_n T} \ket{\psi_n(0)},
\end{align}
where each quasienergy $\epsilon_n$ is defined modulo $\omega=2\pi/T$.
Now, we consider the Floquet eigenstate multiplied by $A_\epsilon^\alpha$
\begin{align}
 A_\epsilon^\alpha \ket{\psi_n(0)} =
 \sum_{m}\mA^\alpha_{mn}(\epsilon)\ket{\psi_m(0)}.
\end{align}
To look into its quasienergy, we multiply $U_S(T)$ onto it,
having
\begin{align}
	U_S(T)[ A_\epsilon^\alpha \ket{\psi_n(0)} ]
	= \sum_{m}\mA^\alpha_{mn}(\epsilon)\ee^{-\ii \epsilon_m T}\ket{\psi_m(0)}.
\end{align}
Here we remember that $\mA^\alpha_{mn}(\epsilon)$ is nonvanishing only for pairs $(m,n)$
such that $\epsilon_n-\epsilon_m \equiv \epsilon$ $(\mod \omega)$.
Therefore $\ee^{-\ii \epsilon_m T}$ in the sum can be replaced by
$\ee^{-\ii (\epsilon_n-\epsilon) T}$, which can be put out of the sum:
\begin{align}
	U_S(T)[ A_\epsilon^\alpha \ket{\psi_n(0)} ]
	=\ee^{-\ii (\epsilon_n-\epsilon) T}\sum_{m}\mA^\alpha_{mn}(\epsilon)\ket{\psi_m(0)}
	=\ee^{-\ii (\epsilon_n-\epsilon) T} [ A_\epsilon^\alpha \ket{\psi_n(0)} ].
\end{align}
This equation means that $A_\epsilon^\alpha \ket{\psi_n(0)}$ is a Floquet eigenstate
with quasienergy $\epsilon_n-\epsilon$
and that $A^\alpha_\epsilon$ lowers the quasienergy by $\epsilon$.

Finally, we summarize the above results as operator relations ($U_F\equiv U_S(T)$):
\begin{align}
	U_F A_\epsilon^\alpha U_F^\dag &= \ee^{+\ii \epsilon T}A_\epsilon^\alpha,\label{eq:UFA1}\\
	U_F A_\epsilon^{\alpha\dag} U_F^\dag &= \ee^{-\ii \epsilon T}A_\epsilon^{\alpha\dag}.\label{eq:UFA2}
\end{align}
A proof of Eq.~\eqref{eq:UFA1} is as follows:
\begin{align}
	U_F A_\epsilon^\alpha U_F^\dag &= \sum_{m,n}\mA^\alpha_{mn}(\epsilon)\ket{\psi_m(T)}\bra{\psi_n(T)}\\
	&= \sum_{m,n}\mA^\alpha_{mn}(\epsilon)\ee^{-\ii (\epsilon_m-\epsilon_n)T}\ket{\psi_m(0)}\bra{\psi_n(0)}\\
	&= \sum_{m,n}\mA^\alpha_{mn}(\epsilon)\ee^{+\ii \epsilon T}\ket{\psi_m(0)}\bra{\psi_n(0)}=\ee^{+\ii \epsilon T}A_\epsilon^\alpha.
\end{align}
Here, from the second to third lines,
we have used the fact that
$\mA^\alpha_{mn}(\epsilon)$ is nonvanishing only for pairs $(m,n)$
such that $\epsilon_n-\epsilon_m \equiv \epsilon$ $(\mod \omega)$.
Equation~\eqref{eq:UFA2} follows from Eq.~\eqref{eq:UFA1}.

We are ready to obtain the Lindblad form with the final approximation, the rotating wave approximation (RWA).
Substituting Eq.~\eqref{app:eq:jumpA} into Eq.~\eqref{app:eq:BMbath},
we have
\begin{align}
	\frac{d}{dt}\rho_S(t) &=
	 \left(\sum_{\alpha,\beta,\epsilon,\epsilon'}\Gamma_{\alpha\beta}(\epsilon) \ee^{\ii(\epsilon-\epsilon')t}\left[A^\beta_\epsilon \rho_S(t)A^{\alpha\dag}_{\epsilon'}
	-A^{\alpha\dag}_\epsilon A^\beta_{\epsilon'}\rho_S(t)\right]
	\right)+\hc,\label{app:eq:beforeRWA}
\end{align}
where we have defined the bath spectral function
\begin{align}
	\Gamma_{\alpha\beta}(\epsilon)
	=\int_0^\infty ds \,\ee^{\ii\epsilon s}\langle B_\alpha(s)B_\beta(0)\rangle.
\end{align}
Since $\Gamma_{\alpha\beta}(\epsilon)$ is not a Hermitian matrix, it is convenient to decompose it into the Hermitian and antiHermitian parts:
\begin{align}
	\Gamma_{\alpha\beta}(\epsilon) = \frac{1}{2}\gamma_{\alpha\beta}(\epsilon)+\ii S_{\alpha\beta}(\epsilon).
\end{align}
The RWA means that we keep $\epsilon=\epsilon'$ terms only in Eq.~\eqref{app:eq:beforeRWA}, by which we arrive at the following equation of Lindblad form:
\begin{align}
	\frac{d}{dt}\rho_S^I(t)
	= -\ii [\lamb,\rho_S^I(t)] + \mD^I(\rho^I_S(t)) \label{eq:FLEint}
\end{align}
where the Lamb shift $\lamb$ and dissipator $\mD^I$ are defined by
\begin{align}
	\lamb &= \sum_{\alpha,\beta,\epsilon}S_{\alpha\beta}(\epsilon)A^{\alpha\dag}_\epsilon A^\beta_\epsilon,\\
	\mD^I &= \sum_{\alpha,\beta,\epsilon} \gamma_{\alpha\beta}(\epsilon)\left[
			A^\beta_\epsilon \rho_S^I(t) A^{\alpha\dag}_\epsilon
					-\frac{1}{2}\left\{ A^{\alpha\dag}_\epsilon A^{\beta}_\epsilon,\rho_S^I(t)  \right\}
	\right],
\end{align}
Here we have recovered the superscript $I$ on $\rho_S(t)$ to emphasize that this is in the interaction picture.

The RWA is valid when quasienergy spacings are large enough but becomes invalid if the quasienergy spacings become infinitely small with the system-bath coupling and the bath correlation time \hl{being} fixed.
In this sense, Eq.~\eqref{eq:FLEint} cannot apply to large nonintegrable many-body systems, where (quasi)energy spacings decrease exponentially in the system size.
Thus, Eq.~\eqref{eq:FLEint} should apply, e.g., to two- or few-level quantum systems like an atom, small clusters of spins, or some integrable quantum systems with nonvanishing energy gaps.
There have been new approaches to derive master equations of Lindblad form without using the RWA~\cite{Nathan2020,Mozgunov2020}.

We note that our derivation for Eq.~\eqref{eq:FLEint} is slightly generalized from that of the previous studies~\cite{Blumel1991,Hone1997} in the following sense.
We recall that $\epsilon$ takes the values in $\{ \Delta_{mn}(\ell) \equiv \epsilon_n -\epsilon_m +\ell\omega\}_{m,n,k}$.
Here, we emphasize that the correspondence between $\epsilon$ and $(m,n,\ell)$ is not one-to-one in general, and there are multiple $(m,n,\ell)$'s that give the same $\epsilon$.
This situation occurs, for example, in the following three cases:
\begin{enumerate}
\item The quasienergies $\{\epsilon_m\}$ are degenerate.
\item The quasienergy differences $\{\epsilon_m -\epsilon_n\}$ are degenerate, such as for equidistant $\{\epsilon_m\}$.
\item There are some pair $(m,m')$ such that $\epsilon_m-\epsilon_{m'}=\omega/2$.
\end{enumerate}
Our derivation of Eq.~\eqref{eq:FLEint} is valid even if the correspondence between $\epsilon$ and $(m,n,\ell)$ is not one-to-one in contrast to that in the previous studies~\cite{Blumel1991,Hone1997}.

We note that the Lamb shift $\lamb$ does not change the quasienergy.
This follows from the above-shown fact that $A_\epsilon^{\alpha\dag}$ ($A_\epsilon^\beta$)
raises (lowers) the quasienergy by $\epsilon$.
As an operator expression, we have
\begin{align}\label{eq:UFLambda}
	[U_F,\lamb] = 0,
\end{align}
which follows from Eq.~\eqref{eq:UFA}.
This is a Floquet counterpart of the following property:
The Lamb shift does not change the energy of the system
in the time-independent case.
As a consequence, when the quasienergies are not degenerate, $\lamb$ is diagonal in the Floquet eigenbasis
\begin{align}
\lamb=\sum_m \lambda_m \ket{\psi_m(0)}\bra{\psi_m(0)}	\label{app:eq:diagonalLambda}
\end{align}
and hence only shifts the quasienergy by $\lambda_m$.

\subsection{Schr\"{o}dinger picture}\label{app:schroedinger}
Equation~\eqref{eq:FLEint} is written in the interaction picture.
To go back to the Schr\"{o}dinger picture,
we remember $\rho_S(t)=U_S(t)\rho_S^I(t)U_S^\dag(t)$.
Then we have
\begin{align}
	\frac{d}{dt}\rho_S(t)
	= -\ii [H_S(t)+\lamb(t),\rho_S(t)] + \sum_{\alpha,\beta,\epsilon} \gamma_{\alpha\beta}(\epsilon)\left[
		A^\beta_\epsilon(t) \rho_S(t) A^{\alpha\dag}_\epsilon(t)
		-\frac{1}{2}\left\{ A^{\alpha\dag}_\epsilon(t) A^{\beta}_\epsilon(t),\rho_S(t)  \right\}\label{eq:FLE}
	\right].
\end{align}
Here the jump operators are given by
\begin{align}
	A^\alpha_\epsilon(t) = e^{-i\epsilon t}\sum_{m,n} \mA^\alpha_{mn}(\epsilon)\ket{\psi_m(t)}\bra{\psi_n(t)},\label{eq:jumpop}
\end{align}
and the Lamb shift is by
\begin{align}
	\lamb(t) = \sum_{\alpha,\beta,\epsilon}S_{\alpha\beta}(\epsilon)A^{\alpha\dag}_\epsilon(t) A^\beta_\epsilon(t).
\end{align}
Here, in $A^\alpha_\epsilon(t)$, we put the oscillating phase factor $e^{-i\epsilon t}$ so that $A^\alpha_\epsilon(t)$ becomes time-periodic (this phase factor keeps the Lindbladian invariant as $A_\epsilon^{\alpha\dag}(t)$ and $A_\epsilon^\beta$ appear in pairs).
\tnii{
To show the periodicity $A^\alpha_\epsilon(t)=A^\alpha_\epsilon(t+T)$, we write Eq.~\eqref{eq:jumpop} in terms of $\ket{u_m(t)}$:
$A^\alpha_\epsilon(t)=\sum_{m,n} \mA^\alpha_{mn}(\epsilon)e^{i(-\epsilon-\epsilon_m+\epsilon_n)t}\ket{u_m(t)}\bra{u_n(t)}$,
in which we need to prove $e^{i(-\epsilon-\epsilon_m+\epsilon_n)t}$ is periodic ($\ket{u_m(t)}$ are periodic by definition).
Since $\mA^\alpha_{mn}(\epsilon)$ is nonvanishing only when $\epsilon=\epsilon_n-\epsilon_m+k\omega$ for integers $k$, the oscillating phase factor $e^{i(-\epsilon-\epsilon_m+\epsilon_n)t}$ becomes $e^{-ik\omega t}$, which is periodic.
}

It is convenient to introduce the dissipator $\mD_t$, which is the following superoperator
\begin{align}
	\mD_t(\rho) \equiv \sum_{\alpha,\beta,\epsilon} \gamma_{\alpha\beta}(\epsilon)\left[
		A^\beta_\epsilon(t) \rho A^{\alpha\dag}_\epsilon(t)
		-\frac{1}{2}\left\{ A^{\alpha\dag}_\epsilon(t) A^{\beta}_\epsilon(t),\rho  \right\}\right].
\end{align}
We also introduce the Lindbladian superoperator by
\begin{align}
	\mL_t(\rho) = -\ii [H_S(t)+\lamb(t),\rho ] +\mD_t(\rho).
\end{align}
\tnii{One can easily confirm that this FLE reduces to the well-known Lindblad equation for undriven systems~\cite{BreuerBook}, noting that, in undriven systems, each Floquet eigenstate $\ket{\psi_m(t)}$ corresponds to the time-evolving energy eigenstate $e^{-i E_mt}\ket{E_m}$ for $\Hsys\ket{E_m}=E_m\ket{E_m}$.}

It is noteworthy that \hl{the Lindbladian of} the FLE is time-independent in the interaction picture~\eqref{eq:FLEint} while not in the Schr\"{o}dinger picture~\eqref{eq:FLE}.
This is an extraordinary property of the FLE based on the weak system-bath coupling and the RWA, and we cannot find such a nice frame in which the time dependence is eliminated for a general FLE in the Schr\"{o}dinger picture.
In this sense, the FLE derived in this Appendix is special, making analysis easier.

\subsection{Time evolution of density matrix}\label{app:offdiag}
Here we analyze the time evolution of density matrices, showing that their off-diagonal elements decay whereas the diagonal ones obey the master equation.
In Ref.~\cite{Blumel1991}, these properties were obtained under the strong assumption: If $\epsilon_n-\epsilon_m+\ell\omega=\epsilon_{n'}-\epsilon_{m'}+\ell'\omega$, then $(m,n,\ell)=(m',n',\ell')$ holds true.
Our argument here relaxes this assumption and also applies when the quasienergies are nondegenerate as long as the NESS is unique.
We also discuss how the results change for degenerate quasienergies.

\subsubsection{Nondegenerate quasienergies}
We first discuss the case when the quasienergies are not degenerate.
To analyze the time evolution, it is convenient to work in the interaction picture~\eqref{eq:FLEint}
and represent the density matrix in the Floquet-state basis:
\begin{align}
	\rho_S^I(t) = \sum_{m,n}\sigma_{mn}(t)\ket{m}\bra{n},\label{eq:rhoSImat}
\end{align}
where we have introduced the abbreviation $\ket{m}=\ket{\psi_{m}(0)}$. 
Substituting Eq.~\eqref{eq:rhoSImat} \hl{into Eq.~\eqref{eq:FLEint}}, we have
\begin{align}
		\frac{d\sigma_{mn}(t)}{dt}&= -i(\lambda_m-\lambda_n)\sigma_{mn}(t) +\sum_{\alpha,\beta,\epsilon} \gamma_{\alpha\beta}(\epsilon)\left[
		\sum_{m',n'}\braket{m|A^\beta_\epsilon|m'} \sigma_{m'n'}(t) \braket{n'| A^{\alpha\dag}_\epsilon|n} \right.\notag\\
		&\quad\left. -\frac{1}{2}\sum_{m'} \braket{m|A^{\alpha\dag}_\epsilon A^{\beta}_\epsilon|m'}\sigma_{m'n}(t)
		-\frac{1}{2}\sum_{n'} \sigma_{mn'}(t)\braket{n'|A^{\alpha\dag}_\epsilon A^{\beta}_\epsilon|n}
		\right].
		\label{eq:FLEintmat}
\end{align}

To simplify the sums in Eq.~\eqref{eq:FLEintmat}, we recall that
$A^\beta_\epsilon$ and $(A^\alpha_\epsilon)^\dag$ are operators that lower and raise quasienergy by $\epsilon$ 
as shown in Eq.~\eqref{eq:UFA}.
Thus, $A^{\alpha\dag}_\epsilon A_\epsilon^\beta$ does not raise or lower quasienegies,
meaning that $\braket{m|A^{\alpha\dag}_\epsilon A^{\beta}_\epsilon|m'}\propto \delta_{mm'}$ as the quasienergies are not degenerate.
Together with Eq.~\eqref{eq:defW}, we can rewrite Eq.~\eqref{eq:FLEintmat} as
\begin{align}
		\frac{d\sigma_{mn}(t)}{dt}&= -i(\lambda_m-\lambda_n)\sigma_{mn}(t)-\frac{1}{2}\sum_{n'}\left(W_{n'm} +W_{n'n}\right) \sigma_{mn}(t) \notag\\
		&\quad +\sum_{\alpha,\beta,\epsilon} \gamma_{\alpha\beta}(\epsilon)
		\sum_{m',n'}\braket{m|A^\beta_\epsilon|m'} \sigma_{m'n'}(t) \braket{n'| A^{\alpha\dag}_\epsilon|n} .
		\label{eq:FLEintmat2}
\end{align}
We now have a set of $N^2$ differential equations for $1\le m,n\le N$, part of which are coupled by
the final term on the right-hand side (RHS) of Eq.~\eqref{eq:FLEintmat2}.

Here, we can prove the remarkable property of this set of equations that 
the diagonal ones ($\{\sigma_{mm}(t)\}_{m=1}^N$) and off-diagonal ones $(\{\sigma_{mn}(t)\}_{m\neq n}$ are decoupled with each other.
To show this, we first substitute $n$ by $m$ in Eq.~\eqref{eq:FLEintmat2} and prove that only $m'=n'$ terms appear on the RHS.
As $A^\beta_\epsilon$ $(A^{\alpha\dag}_\epsilon)$ lowers (raises) the quasienergy by $\epsilon$,
$\sigma_{m'n'}(t)$ appears only when $\epsilon_{m}\equiv\epsilon_{m'}-\epsilon$ and $\epsilon_{m}\equiv \epsilon_{n'}-\epsilon \mod \omega$, implying $m'=n'$.
Thus, we have obtained the result that the diagonal elements $\{\sigma_{mm}(t)\}_m$ form a closed set of equations.
Second, we suppose $m\neq n$ in Eq.~\eqref{eq:FLEintmat2} and prove that only $m'\neq n'$ terms appear on the RHS.
The proof goes similarly.
One can easily check that the set of equations for the diagonal elements are the same as Eq.~\eqref{eq:cme}.

The off-diagonal elements $\sigma_{m\neq n}(t)$ form a complex set of equations except for the special case studied previously~\cite{Blumel1991}, where $\epsilon_{n}-\epsilon_{m}+\ell \omega = \epsilon_{n'}-\epsilon_{m'}+\ell' \omega$ means $m=m'$, $n=n'$, and $\ell=\ell'$.
In this special case, the final term on the RHS of Eq.~\eqref{eq:FLEintmat2} vanishes for $m\neq n$,
and each $\sigma_{m\neq n}(t)$ obeys a closed differential equation.
As $W_{n'm} +W_{n'n}>0$, the equation means that each $\sigma_{m\neq n}(t)$ exponentially decreases with oscillation to vanish as $t\to\infty$.
However, this special case does not necessarily occur even if we assume that the quasienergies are not degenerate
(recall the example cases 2 and 3 discussed in Sec.~\ref{app:intpic}.

Nevertheless, in the weaker assumption of nondegenerate quasienergies, we can prove that the off-diagonal elements $\sigma_{m\neq n}$ vanish in the NESS as long as the NESS is unique.
For this purpose, we invoke a symmetry argument in the Lindbladian systems~\cite{Buca2012,Buca2019,Tindall2020a}.
Let us symbolically represent the interaction picture~\eqref{eq:FLEint} as
\begin{align}
	\frac{d}{dt}\rho_S^I(t) = \mL_I [\rho_S^I(t)].
\end{align}
Now we notice that the Lindbladian $\mL_I$ has the following weak symmetry~\cite{Buca2012} associated with the quasienergy:
\begin{align}
	[\mL_I,\mU_F]=0,\label{eq:weaksymm}
\end{align}
where $\mU_F$ is a superoperator defined by $\mU_F(\rho)\equiv U_F\rho U_F^\dag$.
One can easily confirm Eq~\eqref{eq:weaksymm} by using Eqs.~\eqref{eq:UFA} and \eqref{eq:UFLambda}.
One important consequence of this weak symmetry is
that there exists a stationary solution of the following form~\cite{Buca2012}:
$\rho = \sum_\mu p_\mu \tau_\mu$,
where $\{\tau_\mu\}$ are matrices belonging to the zero eigenvalue of $\mU$.
For our $\mU$ ($\mU\rho \equiv U_F\rho U_F^\dag$),
these zero-eigenvalue states are $\tau_\mu = \ket{n}\bra{n}$.
As we have assumed that the NESS is unique, this type of stationary solution is the NESS,
implying that $\sigma_{m\neq n}(t)\to 0$ as $t\to\infty$.

\subsubsection{Degenerate quasienergies}
When the quasienergies $\epsilon_m$'s are degenerate,
we cannot, in general, have the decoupling of the diagonal and off-diagonal elements $\sigma_{mn}(t)$ of the density matrix.
This is due to, e.g., $\braket{m|A^{\alpha\dag}_\epsilon A^{\beta}_\epsilon|m'}\not\propto \delta_{mm'}$.

Yet, the weak symmetry constrains the form of the NESS density matrix.
To see this, we introduce a new index $a$ that distinguishes the degenerate Floquet states.
In this new notation, Eq.~\eqref{eq:UFspec} is represented as
\begin{align}
U_F=\sum_{m=1}^d e^{-i\epsilon_m T}\sum_{a=1}^{N_m} \ket{m,a}\bra{m,a}, \label{eq:UFspec_degen}
\end{align}
where $d$ is the number of distinct quasienergies, $N_m$ the degree of degeneracy of each quasienergy $\epsilon_m$, and $\ket{m,a}=\ket{\psi_{m,a}(0)}$.
Thus, the zero-eigenvalue states for the superoperator $\mU_F(\rho)=U_F\rho U_F^\dag$ are $\ket{m,a}\bra{m,b}$,
and the weak symmetry tells us that the NESS is represented as
\begin{align}
	\rho^I_\mathrm{ness} = \sum_{m=1}^d \sum_{a,b=1}^{N_m} c_{ab}^m \ket{m,a}\bra{m,b}.\label{eq:rhoness_degen}
\end{align}
The expansion coefficients within each degenerate subspace $c^m_{ab}$ depend on models.
\tni{
As $c^m_{ab}$ are Hermitian matrices for each $m$, they are diagonalizable by a unitary basis transformation, so is $\rhoness^I$.
Note, however, that the differential equations for $\sigma_{mn}(t)$ cannot, even in this basis, be separated for the diagonal and off-diagonal elements like in the nondegenerate case.
}


\subsection{\tni{Conditions for FGS when quasienergies are degenerate}}\label{app:fgs_degen}
\tni{
Here we show that, within the FLE obtained by the RWA, the NESS coincides with the FGS for an extremely large frequency even if the quasienergies are degenerate.
In this subsection, we work in the interaction picture~\eqref{eq:FLEint} and discuss 
when $\rhoness^I$ sufficiently approaches $\rhofg^I = e^{-\beta H_F}/Z$, where
\begin{align}
	H_F = \sum_m \epsilon_m \ket{\psi_m(0)}\bra{\psi_m(0)}.
\end{align}
As discussed in Sec.~\ref{sec:derivationRWA}, the quasienergies $\epsilon_m$'s are defined only modulo $\omega$, and the FGS is ill-defined in general.
Here we consider the situation that $\omega$ is much larger than the system's energy scale, i.e., $H_F$ becomes close enough to $H_0$ ($H_0$ is the undriven Hamiltonian). In such a case, we can uniquely determine $\epsilon_m$'s so that $\epsilon_m-E_m=O(1/\omega)$ holds for each $m$
($E_m$ are the eigenenergies of $H_0$).
Also, we assume that $\omega$ is so large that
\begin{align}
|\epsilon_m-\epsilon_n| < \omega \qquad \forall m,n.	
\end{align}
}

\tni{To judge if $\rhofg^I=e^{-\beta H_F}/Z$ is a steady-state solution 
we substitute it into the right-hand side of the FLE~\eqref{eq:FLEint} and ask if it vanishes.
Assuming the basis on which $\lamb$ is diagonal~\eqref{app:eq:diagonalLambda}, we have $[\lamb,\rhofg^I]=0$.
Thus, we are to show $\mD^I(\rhofg^I)=0$.
}

\tni{First, let us argue that $\mD^I(\rhofg^I)=0$ holds if we are allowed to use
\begin{align}
	[H_F,A_\epsilon^\alpha] = -\epsilon A_\epsilon^\alpha,\qquad
	[H_F,A_\epsilon^{\alpha\dag}] = \epsilon A_\epsilon^{\alpha\dag},\label{eq:HFA}
\end{align}
which are stronger versions of the commutation relations~\eqref{eq:UFA}.
Assuming Eq.~\eqref{eq:HFA} as a working hypothesis, we have
\begin{align}
	\mD^I(\rhofg^I) &= Z^{-1}\sum_{\alpha,\beta,\epsilon} \gamma_{\alpha\beta}(\epsilon)\left[
			A^\beta_\epsilon e^{-\beta H_F} A^{\alpha\dag}_\epsilon
					-\frac{1}{2}\left\{ A^{\alpha\dag}_\epsilon A^{\beta}_\epsilon,e^{-\beta H_F}  \right\} \right]\\
					&=  Z^{-1}e^{-\beta H_F}\sum_{\alpha,\beta,\epsilon} \left[e^{-\beta \epsilon} \gamma_{\alpha\beta}(\epsilon)A^\beta_\epsilon A^{\alpha\dag}_\epsilon - \gamma_{\alpha\beta}(\epsilon)
					 A^{\alpha\dag}_\epsilon A^{\beta}_\epsilon \right],\label{app:eq:DItemp}
\end{align}
where we have used $[H_F,A^{\alpha\dag}_\epsilon A^{\beta}_\epsilon]=0$ and $A^\beta_\epsilon e^{-\beta H_F}=e^{-\beta \epsilon}e^{-\beta H_F}A_\epsilon^\beta$, which follow from our working hypothesis~\eqref{eq:HFA} and the Baker--Campbell--Hausdorff formula.
Then we rewrite the first term in the sum of Eq.~\eqref{app:eq:DItemp} by changing the dummy index as $\epsilon\to -\epsilon$, having
\begin{align}
	\sum_{\alpha,\beta,\epsilon} e^{-\beta \epsilon} \gamma_{\alpha\beta}(\epsilon)A^\beta_\epsilon A^{\alpha\dag}_\epsilon
	&= \sum_{\alpha,\beta,\epsilon} e^{\beta \epsilon} \gamma_{\alpha\beta}(-\epsilon)A^\beta_{-\epsilon} A^{\alpha\dag}_{-\epsilon}\\
	&=\sum_{\alpha,\beta,\epsilon} e^{\beta \epsilon} \gamma_{\alpha\beta}(-\epsilon)A^\beta_{\epsilon\dag} A^{\alpha}_{\epsilon}\\
	&=\sum_{\alpha,\beta,\epsilon}\gamma_{\beta\alpha}(\epsilon)A^{\beta\dag}_{\epsilon} A^{\alpha}_{\epsilon},\label{app:eq:DIpart}
\end{align}
where we have used the KMS condition~\eqref{eq:KMS} and $A_{-\epsilon}^\alpha=(A_\epsilon^\alpha)^\dag$.
Substituting Eq.~\eqref{app:eq:DIpart} into Eq.~\eqref{app:eq:DItemp} and changing the dummy indices appropriately, we obtain
\begin{align}
	\mD^I(\rhofg^I)=0.
\end{align}
Thus, we have shown that the FGS is a steady-state if our working hypothesis~\eqref{eq:HFA} are justified.
}

\tni{Finally, we justify Eq.~\eqref{eq:HFA} when the driving frequency is much larger than the bath spectral cutoff.
From the definition of $A_\epsilon^\alpha$~\eqref{eq:jumpA}, we have
\begin{align}
	[H_F,A_\epsilon^\alpha] = \sum_{m,n}(\epsilon_m-\epsilon_n)\mA^\alpha_{mn}(\epsilon)\ket{\psi_m(0)}\bra{\psi_n(0)}.\label{app:eq:HFAtemp}
\end{align}
We recall that $\mA^\alpha_{mn}(\epsilon)$ is nonvanishing only when $\epsilon = \epsilon_n-\epsilon_m +k \omega$ for integer $k$'s.
However, like in the argument in Sec.~\ref{sec:fgs}, the $k\neq0$ contributions are vanishingly small because it is accompanied by the bath spectral function $\gamma_{\alpha\beta}(\epsilon)=\gamma_{\alpha\beta}(\epsilon_n-\epsilon_m+k\omega)\approx 0$ if $\omega \gg \Lambda$.
Therefore, we can neglect $k\neq0$ contributions and set $\epsilon_m-\epsilon_n=-\epsilon$ in Eq.~\eqref{app:eq:HFAtemp}, which leads to
\begin{align}
	[H_F,A_\epsilon^\alpha] = \sum_{m,n}(-\epsilon)\mA^\alpha_{mn}(\epsilon)\ket{\psi_m(0)}\bra{\psi_n(0)}
	=-\epsilon A_\epsilon^\alpha.
\end{align}
Thus, we have shown that Eqs.~\eqref{eq:HFA} are justified when the driving frequency is much larger than the bath spectral cutoff and hence $\rhoness^I\approx \rhofg^I$.
}

\end{appendix}






\end{document}